\newtheorem{Lemma}{Lemma}
\newtheorem{Corollary}[Lemma]{Corollary}
\newtheorem{proposition}{Proposition}
\newcommand{\qa}{{\bf a}}
\newcommand{\qe}{{\bf e}}
\newcommand{\qf}{{\bf f}}
\newcommand{\qg}{{\bf g}}
\newcommand{\qh}{{\bf h}}
\newcommand{\qn}{{\bf n}}
\newcommand{\qs}{{\bf s}}
\newcommand{\qu}{{\bf u}}
\newcommand{\qv}{{\bf v}}
\newcommand{\qw}{{\bf w}}
\newcommand{\qx}{{\bf x}}
\newcommand{\qy}{{\bf y}}
\newcommand{\qz}{{\bf z}}
\newcommand{\qA}{{\bf A}}
\newcommand{\qB}{{\bf B}}
\newcommand{\qC}{{\bf C}}
\newcommand{\qG}{{\bf G}}
\newcommand{\qH}{{\bf H}}
\newcommand{\qI}{{\bf I}}
\newcommand{\qM}{{\bf M}}
\newcommand{\qN}{{\bf N}}
\newcommand{\qQ}{{\bf Q}}
\newcommand{\qR}{{\bf R}}
\newcommand{\qV}{{\bf V}}
\newcommand{\qW}{{\bf W}}
\newcommand{\qX}{{\bf X}}
\newcommand{\qY}{{\bf Y}}
\newcommand{\RE}{\mathtt{RE}}
\newcommand{\BI}{\mathtt{BI}}
\newcommand{\KI}{\mathcal{K_I}}
\newcommand{\KE}{\mathcal{K_E}}
\newcommand{\PZF}{\mathtt{PZF}}
\newcommand{\Ryl}{\mathtt{Ryl}}
\newcommand{\PPZF}{\mathtt{PPZF}}
\newcommand{\MRT}{\mathtt{MRT}}
\newcommand{\PMRT}{\mathtt{PMRT}}
\newcommand{\Ex}{\mathbb{E}}
\newcommand{\bsHz}{\text{[bit/s/Hz]}}
\newcommand{\dBIUk}{d_{\mathtt{B,IU_k}}}
\newcommand{\dIEUk}{d_{\mathtt{I,EU_\ell}}}
\newcommand{\dBI}{d_{\mathtt{B,R}}}
\newcommand{\BIUk}{\mathtt{B,IU_k}}
\newcommand{\IEUl}{\mathtt{I,EU_\ell}}  
\newcommand{\cgl}{c_{g_\ell}}
\newcommand{\Expghk}{\qe_{\hat{\qg}_\ell}}
\newcommand{\bTetabar}{\bar{\bTeta}}
\newcommand{\HTfi}{\bar{\qH}_2\bTeta\bar{\qf}_i}
\newcommand{\HTft}{\bar{\qH}_2\bTeta\bar{\qf}_t}
\newcommand{\ftTH}{\bar{\qf}_t^H\bTeta^H\bar{\qH}_2^H}
\newcommand{\HTfl}{\bar{\qH}_2\bTeta\bar{\qf}_\ell}
\newcommand{\HTflp}{\bar{\qH}_2\bTeta\bar{\qf}_{\ell'}}
\newcommand{\tHTfl}{\tilde{\qH}_2\bTeta\bar{\qf}_\ell}
\newcommand{\flTH}{\bar{\qf}_\ell^H\bTeta^H\bar{\qH}_2^H}
\newcommand{\flpTH}{\bar{\qf}_{\ell'}^H\bTeta^H\bar{\qH}_2^H}
\newcommand{\aNTfl}{{\qa}_N^H\bTeta\bar{\qf}_\ell}
\newcommand{\flTaN}{\bar{\qf}_\ell^H\bTeta^H{\qa}_N}
\newcommand{\ftTaN}{\bar{\qf}_t^H\bTeta^H{\qa}_N}
\newcommand{\aNTft}{{\qa}_N^H\bTeta\bar{\qf}_t}
\newcommand{\tetulN}{\bteta^H{\qu}_{\ell N}}
\newcommand{\tetutN}{\bteta^H{\qu}_{t N}}
\newcommand{\BetaREl}{\beta_{\RE,\ell}}
\newcommand{\BetaREt}{\beta_{\RE,t}}
\newcommand{\BetaREone}{\beta_{\RE,1}}
\newcommand{\BetaRElE}{\beta_{\RE,K_E}}
\newcommand{\sqcoef}{\sqrt{\lambda_{\ell}\delta}}
\newcommand{\lamdal}{\lambda_{\ell}}
\newcommand{\lamdat}{\lambda_{t}}
\newcommand{\lamdalp}{\lambda_{\ell'}}
\newcommand{\lamdakp}{\lambda_{\ell'}}
\newcommand{\BetaBIk}{\beta_{\BI,k}}
\newcommand{\BetaBIkp}{\beta_{\BI,k'}}
\newcommand{\wik}{\qw_{\mathrm{I},k}}
\newcommand{\wi}{\qW_{\mathrm{I}}}
\newcommand{\wit}{\qw_{\mathrm{I},t}}
\newcommand{\wel}{\qw_{\mathrm{E},\ell}}
\newcommand{\we}{\qW_{\mathrm{E}}}
\newcommand{\welp}{\qw_{\mathrm{E},\ell'}}
\newcommand{\wei}{\qw_{\mathrm{E},i}}
\newcommand{\wet}{\qw_{\mathrm{E},t}}
\newcommand{\xik}{x_{\mathrm{I},k}}
\newcommand{\yik}{y_{\mathrm{I},k}}
\newcommand{\xit}{x_{\mathrm{I},t}}
\newcommand{\xel}{x_{\mathrm{E},\ell}}
\newcommand{\xet}{x_{\mathrm{E},t}}
\newcommand{\yel}{y_{\mathrm{E},\ell}}
\newcommand{\rhoik}{\rho_{\mathrm{I},k}}
\newcommand{\Pik}{p_{\mathrm{I},k}}
\newcommand{\rhoit}{\rho_{\mathrm{I},t}}
\newcommand{\Pit}{p_{\mathrm{I},t}}
\newcommand{\rhoel}{\rho_{\mathrm{E},\ell}}
\newcommand{\Pel}{p_{\mathrm{E},\ell}}
\newcommand{\rhoelp}{\rho_{\mathrm{E},\ell'}}
\newcommand{\Pelp}{p_{\mathrm{E},\ell'}}
\newcommand{\rhoet}{\rho_{\mathrm{E},t}}
\newcommand{\Pet}{p_{\mathrm{E},t}}
\newcommand{\rhoio}{\rho_{\mathrm{I},1}}
\newcommand{\rhoikI}{\rho_{\mathrm{I},K_I}}
\newcommand{\rhoeo}{\rho_{\mathrm{E},1}}
\newcommand{\rhoeKE}{\rho_{\mathrm{E},K_E}}
\newcommand{\SINRpzf}{\mathrm{SINR}^{\PZF}_k}
\newcommand{\SINRppzf}{\mathrm{SINR}^{\PPZF}_k}
\newcommand{\SINRk}{\mathrm{SINR}_k}
\newcommand{\PRF}{\mathrm{PRF}}
\newcommand{\PRFEU}{\mathrm{PRF^{\mathrm{E}}}}
\newcommand{\PRFIU}{\mathrm{PRF^{\mathrm{I}}}}
\newcommand{\vphonE}{\pmb {\bar{\varphi} }_{1}}
\newcommand{\gamhk}{\gamma_{\hat{h}_k}}
\newcommand{\gamgk}{\gamma_{\hat{g}_\ell}}
\newcommand{\gamgi}{\gamma_{\hat{g}_i}}
\newcommand{\vghatl}{\hat{\qg}_\ell}
\newcommand{\iIk}{i_k^{\mathtt{I}}}
\newcommand{\iIkp}{i_{k'}^{\mathtt{I}}}
\newcommand{\iEl}{i_{\ell}^{\mathtt{E}}}
\newcommand{\iElp}{i_{\ell'}^{\mathtt{E}}}
\newcommand{\iElpp}{i_{\ell''}^{\mathtt{E}}}
\newcommand{\tauI}{\tau_{\KI}}
\newcommand{\tauE}{\tau_{\KE}}
\newcommand{\vphiIk}{\pmb {\varphi }_{\iIk}}
\newcommand{\vphiIkp}{\pmb {\varphi }_{\iIkp}}
\newcommand{\vphiEl}{\bar{\pmb {\varphi }}_{\iEl}}
\newcommand{\vphiElp}{\bar{\pmb {\varphi }}_{\iElp}}
\newcommand{\vphkEzeg}{\pmb {\bar{\varphi} }_{\iElpp}}
\newcommand{\vphonI}{\pmb {\varphi }_{1}}
\newcommand{\eKI}{\pmb {e}_{\iIk}}
\newcommand{\elE}{\pmb {e}_{\iEl}}
\newcommand{\vphtauI}{\pmb {\varphi }_{\tauI}}
\newcommand{\vphtauE}{\bar{\pmb {\varphi }}_{\tauE}}
\newcommand{\gamgl}{\gamma_{\hat{g}_\ell}}
\newcommand{\gamgt}{\gamma_{\hat{g}_t}}
\newcommand{\cglRay}{\xi_{\hat{g}_\ell}^{\mathtt{Ryl}}}
\newcommand{\chcglRay}{\check{\xi}_{\hat{g}_\ell}^{\mathtt{Ryl}}}
\newcommand{\Hohat}{\hat{\qH}_1}
\newcommand{\HohatI}{\hat{\qH}_{I,1}}
\newcommand{\ghatl}{\hat{\qg}_\ell}
\newcommand{\ghatt}{\hat{\qg}_t}
\newcommand{\XiTet}{{\Xi}_{\ell,\ell} (\bTeta)}
\newcommand{\XiTetbar}{{\Xi}_{\ell,\ell} (\bTetabar)}
\newcommand{\XitTet}{{\Xi}_{t,t} (\bTeta)}
\newcommand{\Xiltet}{{\Xi}_{\ell,t} (\bTeta)}
\newcommand{\Xiltl}{{\Xi}_{t,\ell} (\bTeta)}
\newcommand{\XiTetlp}{{\Xi}_{\ell',\ell'} (\bTeta)}
\newcommand{\XiTetlpl}{{\Xi}_{\ell',\ell} (\bTeta)}
\newcommand{\XiTetllp}{{\Xi}_{\ell,\ell'} (\bTeta)}
\newcommand{\tghatk}{\tilde{\hat{\qg}}_\ell}
\newcommand{\tghatt}{\tilde{\hat{\qg}}_t}
\newcommand{\tghatkH}{\tilde{\hat{\qg}}_\ell^H}
\newcommand{\bghatk}{\bar{\hat{\qg}}_\ell}
\newcommand{\bghatkH}{\bar{\hat{\qg}}_\ell^H}
\newcommand{\bghatt}{\bar{\hat{\qg}}_t}
\newcommand{\bghattH}{\bar{\hat{\qg}}_t^H}
\newcommand{\cflpl}{\kappa_{\ell',\ell}}
\newcommand{\cflplsq}{\kappa_{\ell',\ell}^{2}}
\newcommand{\cftl}{\kappa_{t,\ell}}
\newcommand{\cftlsq}{\kappa_{t,\ell}^{2}}
\newcommand{\bgamhk}{\hat{\gamma}_{h_k}}
\newcommand{\setsl}{\{\mathcal{S}_\ell\}}
\newcommand{\ettall}{\emph{et al.}}
\newcommand{\chk}{c_{h_k}}
\newcommand{\cgk}{c_{g_\ell}}
\newcommand{\bbcgk}{\check{c}_{g_\ell}}
\newcommand{\Brho}{\boldsymbol{\rho}}
\newcommand{\bTeta}{\boldsymbol{\Theta}}
\newcommand{\bteta}{\boldsymbol{\theta}}
\newcommand{\barbteta}{\bar{\boldsymbol{\theta}}}
\newcommand{\Sn}{\sigma_n^2}
\newcommand{\diag}{\mathrm{diag}}
\newcommand{\trace}{\mathrm{tr}}
\DeclareMathOperator{\rank}{\mathrm{rank}}
\DeclareMathOperator{\Real}{\mathtt{Re}}
\DeclareMathOperator{\opt}{o}
\newtheorem{remark}{Remark}
\title{\fontsize{0.83cm}{1cm}\selectfont Phase-Shift and Transmit Power Optimization for RIS-Aided Massive MIMO SWIPT IoT Networks}
\author{Mohammadali Mohammadi,~\IEEEmembership{Senior Member,~IEEE,}
Hien Quoc Ngo,~\IEEEmembership{Senior Member,~IEEE,}\\  and  Michail Matthaiou,~\IEEEmembership{Fellow,~IEEE}
\thanks{This work was supported by the U.K. Engineering and Physical Sciences Research
Council (EPSRC) (grant No. EP/X04047X/1). 
The work of M. Mohammadi and M. Matthaiou was supported by the European
Research Council (ERC) under the European Union’s Horizon 2020 research
and innovation programme (grant agreement No. 101001331).
The work of  H. Q. Ngo
 was supported by the U.K. Research and Innovation Future
Leaders Fellowships under Grant MR/X010635/1, and a research grant from the Department for the Economy Northern Ireland under the US-Ireland R\&D Partnership Programme. }
\thanks{The authors are with the Centre for Wireless Innovation (CWI), Queen's University Belfast, BT3 9DT Belfast, U.K.,
email:\{m.mohammadi, hien.ngo, m.matthaiou\}@qub.ac.uk. Parts of this paper appeared  at IEEE ICC 2023~\cite{Mohammadi:ICC:2023}.
}}
\begin{document}

\bstctlcite{IEEEexample:BSTcontrol}
\maketitle
\begin{abstract}
We investigate reconfigurable intelligent surface (RIS)-assisted simultaneous wireless information and power transfer (SWIPT) Internet of Things (IoT) networks, where energy-limited IoT devices are overlaid with cellular information users (IUs). IoT devices are wirelessly powered by a RIS-assisted massive multiple-input multiple-output (MIMO) base station (BS), which is simultaneously serving a group of IUs. By leveraging a two-timescale transmission scheme, precoding at the BS is developed based on the instantaneous channel state information (CSI), while the passive beamforming at the RIS is adapted to the slowly-changing statistical CSI. We derive closed-form expressions for the achievable spectral efficiency of the IUs and average harvested energy at the IoT devices, taking the channel estimation errors and pilot contamination into account. Then, a non-convex max-min fairness optimization problem is formulated subject to the power budget at the BS and individual quality of service requirements of IUs, where the transmit power levels at the BS and passive RIS reflection coefficients are jointly optimized. Our simulation results show that the average harvested energy at the IoT devices can be improved by $132\%$ with the proposed resource allocation algorithm. Interestingly, IoT devices benefit from the pilot contamination, leading to a potential doubling of the harvested energy in certain network configurations.
\end{abstract}

\begin{IEEEkeywords}
Massive multiple-input multiple-output (MIMO), reconfigurable intelligent surface (RIS), simultaneous wireless information and power transfer (SWIPT).
\end{IEEEkeywords}

\section{Introduction}
Future fifth-generation (5G) wireless networks and beyond are leading towards the direction wherein massive Internet-of-Things (IoT) is likely to play a key role. However, the limited battery life span of IoT devices still constitutes a conspicuous constraint. To deliver sustainable energy and accordingly to support the growing appetite of diverse data services, energy harvesting (EH) becomes an indispensable solution for prolonging the battery life of the interconnected network of wireless IoT devices. In this context, radio frequency (RF) EH, or EH through wireless power transfer (WPT),  has been contemplated as an appealing approach due to the availability and power density of RF signals~\cite{Clerckx:JSAC:2019}. Moreover, the emergence of massive multiple-input multiple-output (MIMO) and reconfigurable intelligent surfaces (RISs) technologies, which are constantly promoted as prime candidates for beyond 5G evolution, can foster the development of WPT~\cite{Wu:JSAC:2020,Pan:JSAC:2020}.

Massive MIMO technology exploits tens or hundreds of antennas at the base station (BS) to serve multiple users over the same time-frequency resources~\cite{Zhang:JSAC:2020}. In the WPT context, massive MIMO antenna systems offer the potential to form a sharp energy beam toward an intended receiver~\cite{Yang:JSAC:2015}. However, under harsh propagation conditions, such as high attenuation due to the presence of large obstacles, the signal power at the end-users may be still too weak. The complementary features of RISs can be leveraged to achieve significant performance gains in wireless communication systems, especially when the direct links between the BS and the users are blocked by obstacles~\cite{Zhi:JSAC:2022}. A RIS is typically a smart radio surface with a large number of cost-effective (mostly) passive reflecting elements, each one able to introduce a controllable phase shift to the impinging signal. Therefore, the properties of the electromagnetic waves can be reconfigured via the aid of RISs, to realize a smart and programmable propagation environment~\cite{Zhi:TCOM:2022,Matthaiou:COMMag:2021}. From the WPT perspective,  a RIS can adjust incident signals constructively towards specific end-user(s) to achieve a high level of energy without the need of signal amplification and additional energy cost~\cite{Wu:Proc:2022}.  

\subsection{Related Works}
Simultaneous wireless information and power transfer (SWIPT) is a unified approach to leverage wireless information transfer (WIT) and WPT towards information users (IUs) and energy users (EUs), respectively. IUs and EUs can be located over the same wireless device, or separated such that IUs and EUs are different devices~\cite{Wu:Proc:2022}. The former setup is implemented under time switching (TS) and/or power splitting (PS) design with distinct information and energy circuits. Some recent research efforts have focused on RIS-assisted SWIPT MIMO systems with PS receivers and under different design objectives, such as improving the sum-rate and harvested power trade-off~\cite{Mohamed:WCL:2022}, minimizing the BS transmit power~\cite{Li:TWC:2022}, and energy efficiency maximization~\cite{Zhang:JSAC:2023}.

On the other hand, more recent endeavors, e.g.,~\cite{Wu:JSAC:2020,Pan:JSAC:2020,Xu:TCOM:2022,Chen:WCL:2022,Lyu:TVT:2023,Zhao:TWC:2022}, have been devoted to RIS-aided MIMO SWIPT with separated IU and EU groups, which finds its application in diverse scenarios, such as IoT and sensor networks. More specifically, the authors in~\cite{Wu:JSAC:2020} developed an optimization framework to minimize the transmit power at a multi-antenna access point (AP) via jointly optimizing the transmit precoders at the AP and reflection phase shifts at the RIS, subject to related quality-of-service (QoS) constraints at all EUs and IUs. Pan~\ettall~\cite{Pan:JSAC:2020} proposed  an alternating optimization algorithm, based on the block coordinate descent (BCD) technique, to maximize the weighted sum-rate of IUs, while guaranteeing the EH requirement of the EUs. Xu~\ettall~\cite{Xu:TCOM:2022} proposed a scalable optimization framework based on a realistic non-linear EH model for EUs in large RIS-assisted MIMO SWIPT systems to overcome the computational complexity of passive beamforming design. Chen~\ettall~\cite{Chen:WCL:2022} concentrated on the sum-rate maximization problem through hybrid beamforming design at the BS and RIS, subject to a total transmit power and minimum EH requirement per EU. 
Lyu~\ettall~\cite{Lyu:TVT:2023}  quantified the freshness of the data packets at the information receiver, while satisfying the EH demands in a SWIPT network with the assistance of RIS. Zhao~\ettall~\cite{Zhao:TWC:2022} proposed a RIS-assisted secure SWIPT system for WIT and WPT from a multi-antenna BS to one IU and to multiple EUs, respectively, where EUs are also potential eavesdroppers that may overhear the communication between the BS and IU. 
\subsection{Research Gap and Main Contributions}
In retrospect, a RIS can significantly enhance the performance of various SWIPT systems. However, there are three main limitations in all these studies.  Firstly, traditional multi-antenna BSs have been considered and the potential of massive MIMO with low-complexity linear precoding designs, has yet not been exploited for improving the performance of SWIPT in RIS-assisted networks. Secondly, perfect channel state information (CSI) of the aggregated channel was assumed and the impact of the channel estimation errors and pilot overhead on the system performance remains still unknown. Thirdly, due to the passive nature of the RISs, the digital beamformer at the BS and passive beamformer at the RIS are designed relying on the instantaneous CSI (I-CSI) of the cascaded user-RIS-BS channel instead of the separated user-RIS and RIS-BS channels. Estimation of the cascaded channels, however, incurs prohibitively large pilot overhead, proportional to the number of RIS elements~\cite{Han:TVT:2019}. Moreover, CSI exchange between the BS and RIS control unit together with phase shift exchange between the RIS control unit and RIS during each channel coherence interval,  entails large feedback overhead and energy consumption~\cite{Zhi:TIT:2022}.

In this paper, we consider a RIS-assisted massive MIMO SWIPT system with separate EUs and IUs over Ricean fading channels. To address the large pilot overhead and complexity of the I-CSI-based beamforming design, we apply a two-timescale scheme, proposed in~\cite{Han:TVT:2019}, which facilitates deployment of the RIS in the massive MIMO systems. By leveraging the two-timescale scheme, the BS beamforming is designed based on the instantaneous aggregate CSI, while the RIS phase shifts are optimized based on the long-term statistical CSI (S-CSI). The dimension of the aggregate channel is the same as the RIS-free scenario, and, thus, the number of pilots must be only larger than the number of  users. Moreover, the RIS phase shift matrix is only needed to be updated when the S-CSI (that only depends on the location and angle of arrival/departure of users) varies, which occurs over a much longer timescale compared to the case of I-CSI.  We notice that the promising two-timescale scheme has been analyzed in some recent works~\cite{Han:TVT:2019,Zhao:TWC:2022,Zhi:TCOM:2022,Zhi:TIT:2022}. However, there are no existing reports on the implementation of this design in the literature concerning RIS-assisted massive MIMO SWIPT.  The primary distinction from existing designs, e.g.,~\cite{Wu:JSAC:2020,Pan:JSAC:2020,Wu:WCL:2020,Niu:TWC:2022}, lies in the fact that, with access to S-CSI, the optimization problem is formulated based on the average harvested energy rather than the instantaneous energy.
The main contributions of this paper are summarized as follows:
\begin{itemize}
    \item We develop two precoding designs at the BS to support IUs and EUs. In particular, we consider the partial zero-forcing (PZF) design, where the BS transmits to the IUs with full ZF and to the EUs with maximum ratio transmission (MRT) precoding. While this precoding scheme provides EUs with an opportunity to harvest energy from the intended signal for IUs, it also creates increased interference at the IUs due to concurrent energy transmissions. To deal with this problem, we propose a protective PZF (PPZF) design, where PZF and protective MRT (PMRT) are designed for transmission towards IUs and EUs, respectively, to guarantee full protection for the IUs against energy signals intended for EUs.
    
    \item We derive rigorous closed-form spectral efficiency (SE) and harvested energy expressions for the proposed precoding designs. The analytical expressions only depend on the location and angle information and Ricean factors and hold for any number of antennas at the BS and any number of RIS elements, taking into account the effects of channel estimation errors and pilot contamination.
    Our analysis reveal that by using a sufficiently large number of passive elements at the RIS, the number of BS antennas can be reduced without sacrificing the harvested energy at the EUs, which is promising from a green communication perspective.
    \item  To establish fairness among EUs, we formulate an optimization problem to maximize the minimum harvested energy at the EUs subject to a BS power constraint as well as individual SE constraints for IUs.  To this end, the passive phase shift matrix at the RIS and BS transmit power level towards the IUs and EUs should be optimized. Although the resulting problem is a complicated non-convex optimization problem, we solve it by adopting the BCD algorithm.  We decompose the problem into tractable sub-problems and propose
    an efficient iterative algorithm by exploiting quadratic transform, successive convex approximation (SCA), and penalty method for solving the resulting decoupled sub-problems. 
    \item Extensive simulations are provided to verify the accuracy of the analytical results. As for the numerical simulations, we analyze the influence of different system parameters on the average harvested energy. 
    Moreover, the effectiveness of the proposed BCD algorithm for the optimization problem is studied.  Our results reveal that the proposed algorithm can significantly improve the minimum average harvested energy at the EUs, while guaranteeing the individual SE requirements of the IUs.   
\end{itemize}

\vspace{-1em}
\subsection{Paper Organization and Notation}
\textit{Paper organization:} In Section~\ref{sec:Sysmodel}, we introduce the considered RIS-assisted SWIPT massive MIMO system model. In Section~\ref{sec:performance}, we present results for the SE and harvested energy, achieved by different precoding designs at the BS. In Section~\ref{Sec:optimization}, we propose an efficient algorithm for joint phase shift design at the RIS and power control at the BS.  In Section~\ref{Sec:numer}, simulation results are presented, while Section~\ref{Sec:conc} concludes the paper. 

\textit{Notation:} We use bold upper case letters to denote matrices, and lower case letters to denote vectors. The superscripts $(\cdot)^*$, $(\cdot)^T$ and $(\cdot)^H$ stand for the conjugate, transpose, and conjugate-transpose, respectively;  $\mathbb{C}^{L\times N}$ denotes a $L\times N$ matrix; $\qI_M$ and $\boldsymbol{0}_{M\times N}$ represent the $M\times M$ identity matrix and zero matrix  of size $M\times N$, respectively; $\trace(\cdot)$ and $(\cdot)^{-1}$  denote the  trace and the inverse operation; $\diag\{\cdot\}$ returns a diagonal matrix; $[\qA]_{(i,j)}$ denotes the  $(i,j)$-th entry of $\qA$.
For a matrix $\qX$, $\Vert \qX\Vert_{*}$ $\Vert \qX\Vert_{2}$ denote its nuclear norm and the spectral norm, respectively; $\sigma_i(\qX)$ is the $i$-th largest singular value of matrix $\qX$; $\qv_{\max}(\qX)$ represents the eigenvector corresponding to the largest eigenvalue of $\qX$; $\lambda_{\max}(\qX)$ denotes the maximum eigenvalue of $\qX$. A zero mean circular symmetric complex Gaussian distribution having variance $\sigma^2$ is denoted by $\mathcal{CN}(0,\sigma^2)$. Finally, $\Ex\{\cdot\}$ denotes the statistical expectation.

\section{System model}~\label{sec:Sysmodel}
We consider a downlink (DL) RIS-aided SWIPT massive MIMO system with $K_I$ IUs and $K_E$ EUs, cf. Figure~\ref{fig:systemmodel}. We assume that the direct links between the BS and EUs are unavailable due to severe blockage and an RIS pre-deployed in the vicinity of the blocking object to serve the EUs~\cite{Sun:TWC:2022}. We assume that the IUs are sufficiently far from the RIS, and thus the reflected paths from the RIS to the IUs are neglected.\footnote{Given the considerable distance between the RIS and IUs, the path gain of the BS-RIS-IU link is very small. Consequently, the received power at the IUs through the cascaded
channel can be considered as negligible. To further enhance the performance of IUs, an interesting future research direction would involve exploring a dual-RIS scenario to provide additional links for the IUs. In this scenario, the phase shift matrix of each RIS can be optimized to meet the network's requirements for both SE and harvested energy.} The BS and the RIS are equipped with $M$ transmitting antennas and $N$ reflecting elements, respectively, while all users are single-antenna devices.  For  notational simplicity, we define the sets $\mathcal{K_I}=\{1,\ldots,K_I\}$ and  $\mathcal{K_E}=\{1,\ldots,K_E\}$ to collect the indices of the IUs and EUs, respectively. Moreover, the set of all RIS elements is denoted by $\mathcal{N}=\{1,\ldots,N\}$. We assume a quasi-static channel model, with each channel coherence interval (CCI) spanning $\tau_c$ symbols. In each CCI, the instantaneous channels from the BS-to-IUs, BS-to-RIS, and RIS-to-EUs are denoted by $\qH_1 \in \mathbb{C}^{M \times K_I}$, $\qH_2 \in \mathbb{C}^{M \times N}$, and $\qH_3 \in \mathbb{C}^{N\times K_E}$, respectively. Then, the cascaded BS-RIS-EUs channel is $\qG = \qH_2\bTeta\qH_3$, where $\bTeta = \diag\{\bteta\}$ is the RIS phase shift matrix  with $\bteta = [\theta_1,\ldots,\theta_N]^H$ and $\theta_i = e^{j\psi_i}$, $i=1,\ldots,  N$. Moreover, $\qG = [\qg_1,\ldots,\qg_{K_E}]\in\mathbb{C}^{M \times K_E}$, where $\qg_k\in\mathbb{C}^{M \times 1}$ is the aggregate channel of EU $k\in\KE$.

\begin{figure}[!t]\centering \vspace{0mm}
\centering
    \def\svgwidth{250pt} 
    \fontsize{8}{8}\selectfont 
\begingroup%
  \makeatletter%
  \providecommand\color[2][]{%
    \errmessage{(Inkscape) Color is used for the text in Inkscape, but the package 'color.sty' is not loaded}%
    \renewcommand\color[2][]{}%
  }%
  \providecommand\transparent[1]{%
    \errmessage{(Inkscape) Transparency is used (non-zero) for the text in Inkscape, but the package 'transparent.sty' is not loaded}%
    \renewcommand\transparent[1]{}%
  }%
  \providecommand\rotatebox[2]{#2}%
  \newcommand*\fsize{\dimexpr\f@size pt\relax}%
  \newcommand*\lineheight[1]{\fontsize{\fsize}{#1\fsize}\selectfont}%
  \ifx\svgwidth\undefined%
    \setlength{\unitlength}{1029.82699657bp}%
    \ifx\svgscale\undefined%
      \relax%
    \else%
      \setlength{\unitlength}{\unitlength * \real{\svgscale}}%
    \fi%
  \else%
    \setlength{\unitlength}{\svgwidth}%
  \fi%
  \global\let\svgwidth\undefined%
  \global\let\svgscale\undefined%
  \makeatother%
  \begin{picture}(1,0.66193898)%
    \lineheight{1}%
    \setlength\tabcolsep{0pt}%
    \put(0,0){\includegraphics[width=\unitlength]{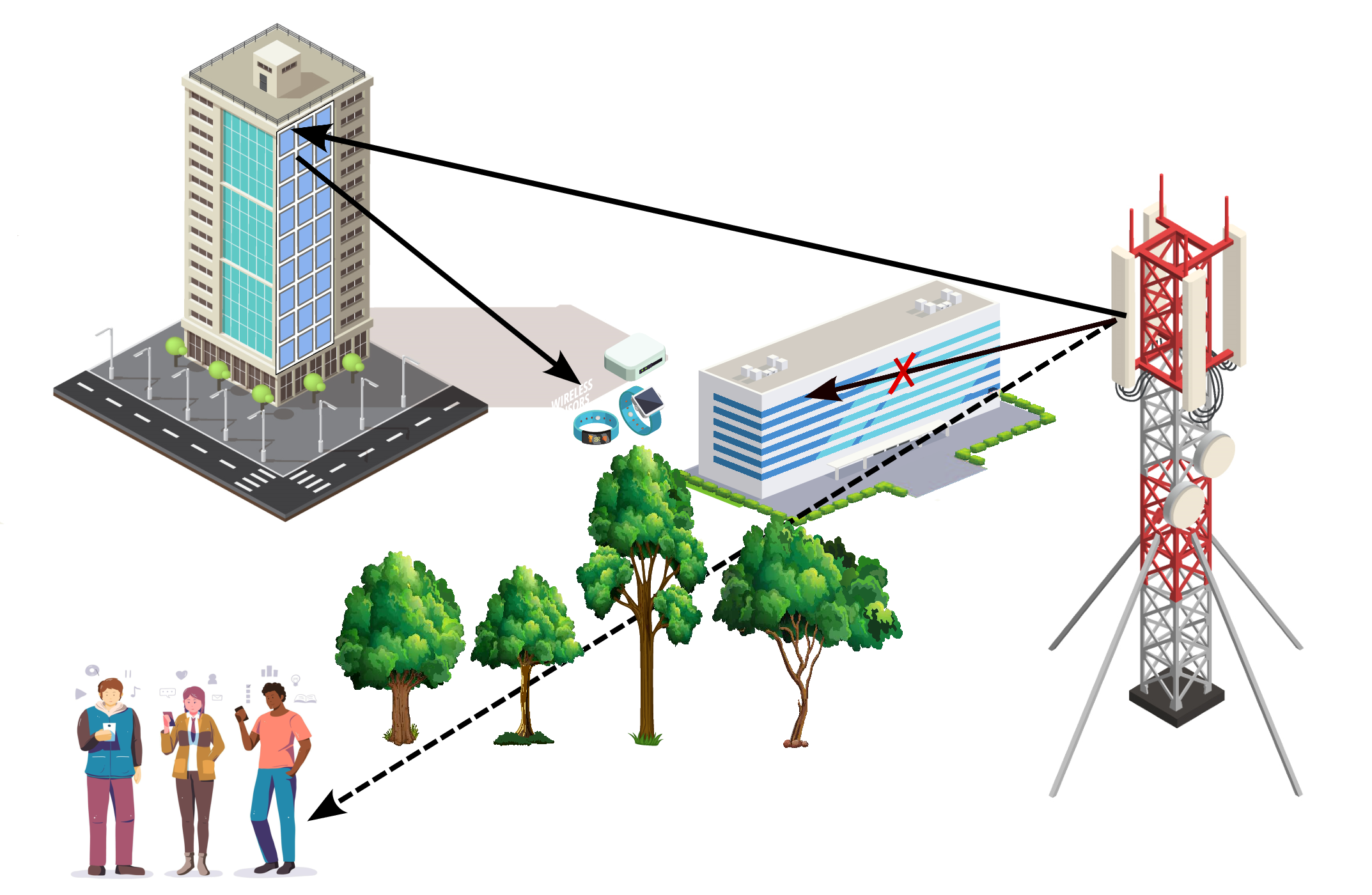}}%
    \put(0.29895727,0.49093476){\color[rgb]{0,0,0}\makebox(0,0)[lt]{\lineheight{1.25}\smash{\begin{tabular}[t]{l}$\boldsymbol{H}_3$\end{tabular}}}}%
    \put(0.40262268,0.53339642){\color[rgb]{0,0,0}\makebox(0,0)[lt]{\lineheight{1.25}\smash{\begin{tabular}[t]{l}$\boldsymbol{H}_2$\end{tabular}}}}%
    \put(0.29895727,0.06656357){\color[rgb]{0,0,0}\makebox(0,0)[lt]{\lineheight{1.25}\smash{\begin{tabular}[t]{l}$\boldsymbol{H}_1$\end{tabular}}}}%
    \put(0.01004585,0.03128911){\color[rgb]{0,0,0}\makebox(0,0)[lt]{\lineheight{1.25}\smash{\begin{tabular}[t]{l}IUs\end{tabular}}}}%
    \put(0.19079169,0.45000877){\color[rgb]{0,0,0}\makebox(0,0)[lt]{\lineheight{1.25}\smash{\begin{tabular}[t]{l}RIS\end{tabular}}}}%
    \put(0.83444192,0.09900401){\color[rgb]{0,0,0}\makebox(0,0)[lt]{\lineheight{1.25}\smash{\begin{tabular}[t]{l}BS\end{tabular}}}}%
    \put(0.39001457,0.40922558){\color[rgb]{0,0,0}\makebox(0,0)[lt]{\lineheight{1.25}\smash{\begin{tabular}[t]{l}EUs\end{tabular}}}}%
  \end{picture}%
\endgroup%
 
    \caption{Illustration of RIS-assisted SWIPT massive MIMO system. EUs are blocked and are assisted by the RIS.}\vspace{-0mm} \label{fig:systemmodel}
    \vspace{1em}
\end{figure}

Since IUs are located far away from the BS, the rich scatterers are distributed on the ground, such that the channels between the BS and IUs are assumed to be Rayleigh fading as in~\cite{Zhi:JSAC:2022}, i.e., $\qH_1 = [{\qh}_1,\ldots,{\qh}_{K_I}]$, where ${\qh}_k = \sqrt{\BetaBIk}\tilde{\qh}_k$ is the channel between the BS and IU $k\in\KI$ with large-scale fading coefficient $\BetaBIk$ and small-scale fading vector $\tilde{\qh}_k\in\mathbb{C}^{M\times 1}$ comprised of independent and identically distributed (i.i.d.) complex Gaussian random variables (RVs) with zero-mean and unit variance.

On the other hand, since the RISs are typically located on the facade of a tall building in the proximity of the EUs, both line-of-sight (LoS) and non-line-of-sight (NLoS) transmission paths would exist in $\qH_2$. Therefore, we characterize the BS-RIS channel by Ricean fading, which is expressed as
\vspace{0.4em}
\begin{align}~\label{eq:H2}
    \qH_2 = \sqrt{\beta/(\delta+1)}\big(\sqrt{\delta}\bar{\qH}_2 + \tilde{\qH}_2\big),
\end{align}
where $\beta$ is the path loss factor, and $\delta$ is the Ricean factor which represents the ratio between the power of the LoS component $\bar{\qH}_2$ and the power of NLoS component $\tilde{\qH}_2$. The elements of  $\tilde{\qH}_2$ are i.i.d. and distributed as $\mathcal{CN}(0,1)$.

Since the RIS has a certain height and the distance between each EU and the RIS is short, the strength of the NLoS components is much weaker than that of the LoS  components for the RIS-EU channels~\cite{Zhi:JSAC:2022}. Accordingly, the RIS-EU channels are expected to be LoS-dominant. Therefore, the pure LoS RIS-EUs channel is given by $\qH_3 = [\sqrt{\BetaREone}\bar{\qf}_1,\ldots,\sqrt{\BetaRElE}\bar{\qf}_{K_E}]$,
where $\BetaREl$, $\forall \ell\in \KE$ is the path loss factor for EU $\ell$, while $\bar{\qf}_{\ell}\in\mathbb{C}^{N\times 1}$ is the deterministic LoS channel between the RIS and EU $\ell$.

For LoS paths, the uniform linear array (ULA) and uniform squared planar array (USPA) geometries are adopted for the BS and the RIS, respectively~\cite{Zhi:JSAC:2022,Wu:TWC:2019}. Then, the LOS components  $\bar{\qf}_\ell$ and $\bar{\qH}_2$ are modeled as
\begin{align}~\label{eq:barhkbarH2}
    \bar{\qf}_\ell &= \qa_N(\phi_{\ell,t}^a, \phi_{\ell,t}^e),~\forall \ell\in \KE,\nonumber\\
    \bar{\qH}_2&=  \qa_M(\phi_{t}^a, \phi_{t}^e)\qa_N^H(\phi_{r}^a, \phi_{r}^e),
\end{align}
where $\phi_{\ell,t}^a$, $\phi_{\ell,t}^e$ denote the azimuth and elevation angles of departure (AoD) from RIS towards EU $\ell$; $\phi_{t}^a$, $\phi_{t}^e$  denote the AoD  from the BS towards RIS; $\phi_{r}^a$, $\phi_{r}^e$ are the azimuth and elevation angles of arrivals (AoA) at the RIS from the BS. Moreover, the $q$-th entry of the array response vector $\qa_T(\vartheta^a,\vartheta^e)\in\mathbb{C}^{T\times 1}$, $T\in\{M,N\}$, is given by~\cite[Eq. (9)]{Zhi:TIT:2022}.

\begin{figure}[t]
	\centering
	\includegraphics[width=0.49\textwidth]{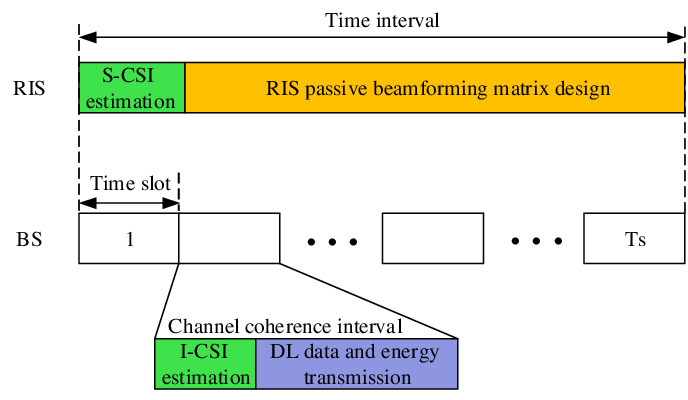}
 \vspace{-1em}
	\caption{Frame structure for two-timescale transmission.}
		\label{fig:structure}
   \vspace{-1em}
\end{figure}

\subsection{Transmission Protocol}
Inspired by the fact that the acquisition of the cascaded channel from the BS to EUs, with a given fixed RIS phase shift matrix, is easier in practice than the estimation of the RIS-associated channels, we apply the same hierarchical transmission proposed as in~\cite{Zhao:TWC:2021}. Denote the cascaded channel by
\begin{align}~\label{eq:gk}
  \qg_\ell=\sqrt{\lambda_{\ell}\delta}\HTfl+\sqrt{\lambda_{\ell}} \tHTfl,~\forall \ell\in\KI,  
\end{align}
with  $\lambda_{\ell}=\frac{\BetaREl\beta}{\delta+1}$.   We consider a time interval during which the S-CSI of all links, i.e., their distributions, remains constant and varies only slightly between consecutive time intervals. It is noteworthy that, in typical sub-6 GHz applications, the long-term S-CSI varies slower than the I-CSI. Each time interval comprises $T_s \gg 1$ time slots as shown in Fig.~\ref{fig:structure}. The I-CSI, represented by the small-scale fading coefficients, such as $\tilde{\mathbf{h}}_k$ and $\tilde{\mathbf{H}}_2$, is assumed to be constant within each time slot and to vary in the next time slot. As proposed in~\cite{Zhao:TWC:2021}, within each time slot, the I-CSI is first estimated, and the remaining portion of the time slot is allocated for data transmission. Specifically, the BS estimates the effective I-CSI ${\mathbf{g}_\ell}$ by applying traditional MIMO channel estimation methods. Accordingly, the BS designs its transmit precoding vectors for the corresponding time slot. The RIS passive beamforming matrix $\boldsymbol{\Theta}$ is updated at the beginning of each time interval based on the S-CSI. At the start of each time interval, the statistical information of the channels connecting the RIS to the BS/UEs is estimated. Dedicated sensors or receiving circuits at the RIS are utilized for this purpose, leveraging pilots and/or data transmitted in both the uplink (UL) and DL directions. Standard mean and covariance matrix estimation techniques, as described in~\cite{Yin:JSAC:2013}, are employed for accurate S-CSI extraction. Based on the measured S-CSI of the BS-RIS-EU links, the BS computes the RIS passive beamforming matrix $\boldsymbol{\Theta}$ and sends it to the RIS through the dedicated backhaul link.

\subsection{Channel Estimation}
Since the RIS elements have no transmit RF chains, we consider a time-division duplexing protocol for UL and DL transmissions and assume channel reciprocity for the CSI acquisition in the DL, based on the UL training. Therefore, at the beginning of each CCI, $\tau$ symbols are used
for the $K=K_I+K_E$ users to transmit the pilot signals to the BS. We term these $\tau$ symbols as training phase. During the training phase, the cascaded channels BS-RIS-EUs and BS-IUs are estimated by the BS using pilot symbols from the users. To this end, the $K$ users are assigned pilot sequences of length $\tau$. Let $\iIk\in\{1,\ldots,\tau\}$ and $\iEl\in\{1,\ldots,\tau\}$ be the indices of pilot used by IU $k$ and EU $\ell$, which are denoted by $\vphiIk\in \mathbb{C}^{\tau\times 1}$ and $\vphiEl\in \mathbb{C}^{\tau\times 1}$, respectively, with  $\Vert\vphiIk\Vert^2=1$ and $\Vert\vphiEl\Vert^2=1$. 
The received $M \times \tau$ pilot signal at the BS can be written as
\vspace{-0.3em}
\begin{align}~\label{eq:rxpilotmatrix}
    \qY_p = \sqrt{\tau p} \sum\nolimits_{\ell=1}^{K_E}\qg_{\ell} \vphiEl^H+ \sqrt{\tau p} \sum\nolimits_{k=1}^{K_I} \qh_k \vphiIk^H   
    +\qN,
    \end{align}
where $p$ is the common average transmission power of each user during the channel estimation stage and $\qN \in \mathbb{C}^{M \times \tau}$ is the noise matrix, whose elements are i.i.d. Gaussian variables following $\mathcal{CN}(0,\Sn)$ distribution.

We assume that $K>\tau$, meaning that some of the users can share the same orthogonal pilot sequences.   Therefore, the projection of $\qY_p$ onto $\vphiEl$ ($\vphiIk$) is given by
\vspace{-0.3em}
\begin{align}
  \check{\qy}_{Ep}^{\ell}=\qY_p \vphiEl,~\forall\ell\in\KE~(\check{\qy}_{Ip}^k=\qY_p \vphiIk,~\forall k\in\KI).
\end{align}

Consider the aggregate channel corresponding to EU $\ell$ in~\eqref{eq:gk}.
It can be readily checked that since $\tilde{\qH}_2$ consists of i.i.d. $\mathcal{CN}(0,1)$ elements, the elements of the vector $\tHTfl$ are linear combinations of independent Gaussian RVs, i.e., $\sqrt{\lambda_{\ell}} \tHTfl\sim\mathcal{CN}(\boldsymbol{0}, N{\lambda_{\ell}}\qI_M)$. Therefore, $\qg_{\ell}$ is a Gaussian vector with $\qg_\ell \sim \mathcal{CN}
\big(\sqrt{\lambda_{\ell}\delta}\HTfl,
N{\lambda_{\ell}}\qI_M\big)$. Moreover, the received noise vector $\qN\vphiEl\sim\mathcal{CN}(\pmb{0},{\Sn} \qI_M)$.  Therefore, since the sum of independent Gaussian vectors is still a Gaussian vector,  the observation vectors for the channel of EU $\ell\in\KE$, and IU $k\in\KI$ are Gaussian distributed. Accordingly, we can apply the minimum mean-squared error (MMSE) estimator to obtain the channel estimate of $\qg_{\ell}$ and $\qh_k$. By  using~\cite[Eq. (15.64)]{Kay}, the MMSE estimates of $\qg_\ell$ and $\qh_k$ are, respectively, expressed as
\begin{subequations}~\label{eq:MMSE:nonzero-meanRV}
\begin{align}
\vghatl
&=\qC_{\qg_{\ell}\check{\qy}_{Ep}^{\ell}}
\qC_{\check{\qy}_{Ep}^{\ell}\check{\qy}_{Ep}^{\ell}}^{-1} \big(\check{\qy}_{Ep}^{\ell} - \Ex\{\check{\qy}_{Ep}^{\ell}\}\big) + \Ex\{\qg_{\ell}\},
\\
 \hat{\qh}_k
&=   \qC_{\qh_{k}\check{\qy}_{Ip}^k}
\qC_{\check{\qy}_{Ip}^k\check{\qy}_{Ip}^k}^{-1} \big(\check{\qy}_{Ip}^k - \Ex\{\check{\qy}_{Ip}^k\}\big), ~\label{eq:MMSE:hk}
\end{align}
\end{subequations}
where $\Ex\{\qg_{\ell}\} = \sqrt{\lamdal\delta}\HTfl$, while $\Ex\{\check{\qy}_{Ep}^{\ell}\}$ and $\Ex\{\check{\qy}_{Ip}^k\}$ can be, respectively, written as
\begin{subequations}~
 \begin{align}
   \Ex\{\check{\qy}_{Ep}^{\ell}\} &=  
   \sqrt{\tau p} 
    \sum\nolimits_{\ell'\in \KE} \sqrt{\lamdalp\delta}\HTflp\vphiElp^H\vphiEl,~\label{eq:expypl}
    \\
  \Ex\{\check{\qy}_{Ip}^k\} &=  
   {\sqrt{\tau p}}\sum\nolimits_{\ell\in \KE} \sqrt{\lamdal\delta}\HTfl\vphiEl^H\vphiIk\label{eq:expypk}.
\end{align}   
\end{subequations}
Now, without loss of generality, we focus on $\vghatl$ and derive the covariance matrices $\qC_{\qg_{\ell}\check{\qy}_{Ep}^{\ell}}$ and 
$\qC_{\check{\qy}_{Ep}^{\ell}\check{\qy}_{Ep}^{\ell}}$, as
\begin{subequations}~\label{eq:Cypyp}
 \begin{align}
  \qC_{\qg_{\ell}\check{\qy}_{Ep}^{\ell}} 
  &=\Ex\big\{ \qg_{\ell} (\check{\qy}_{Ep}^{\ell})^H \big\} - \Ex\{\qg_{\ell}\} (\Ex\{\check{\qy}_{Ep}^{\ell}\})^H,
   \\
\qC_{\check{\qy}_{Ep}^{\ell}\check{\qy}_{Ep}^{\ell}} &= {\Ex\{\check{\qy}_{Ep}^{\ell}(\check{\qy}_{Ep}^{\ell})^H\}} - 
  {\Ex\{\check{\qy}_{Ep}^{\ell}\}}
  ({\Ex\{\check{\qy}_{Ep}^{\ell}\}} )^H.
\end{align}   
\end{subequations}
Note that $\Ex\{\qg_{\ell}(\check{\qy}_{Ep}^{\ell})^H\}$  is readily obtained as
 \begin{align}~\label{eq:Egelyel}
   \Ex\{\qg_{\ell}(\check{\qy}_{Ep}^{\ell})^H\} &=   {\sqrt{\tau p}}\Big( N\lamdal\qI_M \nonumber\\
   &\hspace{-4em}+ \sum\nolimits_{\ell'\in \KE} \sqrt{\lamdal\lamdalp}\delta\vphiEl^H\vphiElp\HTfl\flpTH\Big).
\end{align}
Moreover, $ {\Ex\{\check{\qy}_{Ep}^{\ell}(\check{\qy}_{Ep}^{\ell})^H\}}$ can be calculated as
\begin{align}~\label{eq:yelyel}
  &{\Ex\{\check{\qy}_{Ep}^{\ell}(\check{\qy}_{Ep}^{\ell})^H\}}
        =
         \tau p  
    \!\! \sum\nolimits_{\ell'\in \KE }\!\!
    N\lamdalp
    \vert\vphiElp^H\vphiEl\vert^2 \qI_M
    \nonumber\\
    &\hspace{2em}
    +
  \tau p  
     \sum\nolimits_{\ell'\in \KE }\sum\nolimits_{\ell''\in \KE }
    \vphiElp^H\vphiEl 
    \Ex\big\{\qg_{\ell'}\big\} \Ex\big\{\qg_{\ell^{''}}^H\big\}\vphiEl^H\vphkEzeg
    \nonumber\\
    &\hspace{2em}
     + \tau p\!\sum\nolimits_{k\in \KI}\!\! 
    \vert\vphiIk^H\vphiEl\vert^2
    \BetaBIk\qI_M
    \!+\!\Sn\qI_M.
\end{align}   
Therefore, by substituting~\eqref{eq:expypl},~\eqref{eq:Egelyel}  and~\eqref{eq:yelyel} into~\eqref{eq:Cypyp}, we get
\begin{subequations}
  \begin{align}
  \qC_{\qg_{\ell}\check{\qy}_{Ep}^{\ell}}   &=\sqrt{\tau p} N\lamdal\qI_M,
  \\
 \qC_{\check{\qy}_{Ep}^{\ell}\check{\qy}_{Ep}^{\ell}} 
  &= \tau p \Big( 
     \sum\nolimits_{\ell'\in \KE }
    N\lamdalp
    \vert\vphiElp^H\vphiEl\vert^2 
    \nonumber\\
    &\hspace{2.5em}+\sum\nolimits_{k\in \KI} 
   \BetaBIk \vert\vphiIk^H\vphiEl\vert^2    
    +\Sn\Big)\qI_M.
\end{align}  
\end{subequations}

Following similar steps, which are omitted here for the sake of brevity, the covariance matrices in~\eqref{eq:MMSE:hk} can be obtained. To this end, the MMSE estimate of $\qg_\ell$ and $\qh_k$ can be respectively expressed as
\begin{subequations}~\label{eq:MMSE:nonzero-meanRV}
\begin{align}
\vghatl
&= \cgl\big(\check{\qy}_{Ep}^{\ell} - \sqrt{\tau p} 
    \sum\nolimits_{\ell'\in \KE} \sqrt{\lamdalp\delta}\HTflp\vphiElp^H\vphiEl\big)
    \nonumber\\
    &\hspace{3em}+ \sqrt{\lamdal\delta}\HTfl,\\
 \hat{\qh}_k
&=   \chk \big(\check{\qy}_{Ip}^k - {\sqrt{\tau p}}\sum\nolimits_{\ell\in \KE} \sqrt{\lamdal\delta}\HTfl\vphiEl^H\vphiIk\big) ,
\end{align}
\end{subequations}
where
\begin{align*}
   \cgl &\!\!\triangleq\!\!
   \frac{\sqrt{\tau p}
N\lamdal
      }
      {
      {\tau p}\Big(\!\!\sum_{\ell'\in \KE } \!\!N{\lamdakp}|\vphiElp^H\vphiEl|^2
      \!+\! \sum_{k\in \KI} \!\BetaBIk |\vphiIk^H\vphiEl |^2\!\Big)
      \!+\!\! {\Sn}
     }\!,
   \\
   \chk &\!\!\triangleq\!\!
    \frac{\sqrt{\tau p}\BetaBIk}
               { {\tau p}\Big(\!\!\sum_{k'\in \KI }\!
                        \BetaBIkp|\vphiIkp^H\vphiIk|^2
                 \!+\! \sum_{\ell\in \KE} \!\!                                 N\lamdal|\vphiEl^H\vphiIk|^2\!\Big)
                \!+\!\! {\Sn}}\!.
\end{align*}

It is clear that the channel estimates are impaired by pilot contamination, which consists of interference from IUs/EUs using the same pilot sequences as the IU/EU of interest in the training phase.

We define the matrix of channel estimates for all users (IUs and EUs), as $\hat{\qH} = [\HohatI, \hat{\qG}_E] \in \mathbb{C}^{M\times K}$, with $\HohatI = [\hat{\qh}_{k},\ldots,\hat{\qh}_{K_I}]$ and $\hat{\qG}_E = [\hat{\qg}_1,\ldots,\hat{\qg}_{K_E}]$, which is a rank-deficient matrix as some of the channel estimates are linearly dependent due to the pilot contamination in the training phase. The estimation error for the BS-IU link is given by $\tilde{\qh}_k ={\qh}_k-\hat{\qh}_k$. The estimate and estimation error are independent and distributed as $\hat{\qh}_k \sim\mathcal{CN}(\boldsymbol{0}, \gamhk \qI_M)$ and $\tilde{\qh}_k \sim\mathcal{CN}(\boldsymbol{0}, (\BetaBIk-\gamhk) \qI_M)$, respectively, where $\gamhk$ is the mean square of the estimate, given by
\begin{align}~\label{eq:gamhk}
    \gamhk \!=\! \Ex\big\{\big|\big[\hat{\qh}_k\big]_m\big|^2\big\} \!=\!
    \sqrt{\tau p}\BetaBIk \chk, \forall m=1,\ldots, M.
\end{align}

Likewise, the estimation error for the cascaded BS-RIS-EU link is given by $\tilde{\qg}_\ell ={\qg}_\ell-\vghatl$. The estimate and estimation error are independent and distributed as $\vghatl \sim\mathcal{CN}\left(\sqrt{
  \lamdal\delta }\HTfl, \gamgk \qI_M\right)$ and $\tilde{\qg}_\ell \sim\mathcal{CN}(\boldsymbol{0}, (N{\lambda_{\ell}}-\gamgk) \qI_M)$, respectively, where $\gamgk$ is the mean square of the estimate, given by
\begin{align}~\label{eq:gamgk}
    \gamgk = \Ex\big\{\big|\left[\hat{\qg}_\ell\right]_m\big|^2\big\} =
   \sqrt{\tau p} N\lamdal\cgk, \forall m=1,\ldots, M.
\end{align}

\begin{remark}~\label{rem:nopiloIU}
For any pair of EUs $\ell$ and $t \in \KE$ which are using the same pilot sequences, the zero-mean part of the respective channel estimates are linearly dependent as
\begin{align}
  \tghatk =  \kappa_{\ell,t}\tghatt,  
\end{align}
where $\kappa_{\ell,t}\triangleq\frac{\BetaREl}{\BetaREt}$ and $\vghatl=\tghatk +\bghatk$, with $\tghatk\sim\mathcal{CN}(\boldsymbol{0},\gamgk \qI_M)$ and $\bghatk = \sqrt{ \lamdal\delta }\HTfl$. Moreover, we have $\gamgk = \kappa_{\ell,t}^2\gamgt$.

\end{remark}
We assume that the same pilot sequence cannot be used by both IUs and EUs. In other words, identical pilot sequences are exclusively shared either among the IUs or the EUs. Let $\tau=\tauI+\tauE$, where $\tauI$ and $\tauE$ are the numbers of orthogonal pilots used for IUs and EUs, respectively. Let $\boldsymbol{\Phi}_I = [\vphonI,\ldots,\vphtauI]$ and $\boldsymbol{\Phi}_E = [\vphonE,\ldots,\vphtauE]$ denote the pilot-book matrix for IUs and EUs, respectively, such that $\pmb{\varphi}_i^H{\pmb{\varphi}_j}=0$,  $\bar{\pmb{\varphi}}_i^H\bar{\pmb{\varphi}_j}=0$, and $\pmb{\varphi}_i^H\bar{\pmb{\varphi}_j}=0$, $\forall i\neq j$. Then, the full-rank matrices of the channel estimates corresponding to  $\HohatI$ and $\hat{\qG}_E$ are given by
\begin{subequations}
  \begin{align}
   \Hohat    &= \qY_p \boldsymbol{\Phi}_I\in\mathbb{C}^{M\times\tauI},\\
   \hat{\qG} &= (\qY_p - \Ex\{\qY_p\}) \boldsymbol{\Phi}_E \in\mathbb{C}^{M\times\tauE}, 
\end{align}  
\end{subequations}
respectively. Accordingly, we can express the channel estimates in terms of $\Hohat$ and $ \hat{\qG}$ as
\begin{subequations}
  \begin{align}
  \vghatl&= \cgl \hat{\qG} \elE+ \sqrt{\lamdal\delta}\HTfl,\\
   \hat{\qh}_k&=  \chk \Hohat \eKI, 
\end{align}  
\end{subequations}
where $\elE$ and $\eKI$ is the $\ell$-th and $k$-th column of $\qI_{\tauE}$ and $\qI_{\tauI}$, respectively.
\subsection{Signal Transmission}
The channel estimates from UL training are used by the BS to generate the precoding vectors for DL transmission. Let $\wi = [\qw_{\mathtt{I},1},\ldots,\qw_{\mathtt{I},K_I}]$ and $\we = [\qw_{\mathtt{E},1},\ldots,\qw_{\mathtt{E},K_E}]$ denote the information and energy precoding matrix at the BS, respectively, where $\wik \in\mathbb{C}^{M \times 1}$ and $\wel \in\mathbb{C}^{M \times 1}$ are the precoding vectors for IU $k$ and EU $\ell$, respectively, with $\Ex\{\Vert\wik\Vert^2\}=1$ and $\Ex\{\Vert\wel\Vert^2\}=1$. Moreover, denote by $\qx_{\mathtt{I}} = [x_{\mathtt{I},1},\ldots,x_{\mathtt{I},K_I}]^T$ and $\qx_{\mathtt{E}} = [x_{\mathtt{E},1},\ldots,x_{\mathtt{I},K_E}]^T$ the information and energy vectors for IUs and EUs, while $\xik$ and $\xel$ are the information-carrying and energy-carrying signals for IU $k$ and EU $\ell$, respectively, satisfying $\qx_{\mathtt{I}} \sim \mathcal{CN}(\boldsymbol{0},\qI_{K_I})$  and $\qx_{\mathtt{E}} \sim \mathcal{CN}(\boldsymbol{0},\qI_{K_E})$~\cite{Zhi:JSAC:2022}.
The transmit signal from the BS is given by 
\begin{align}~\label{eq:s}
    \qs \!= \!\sum\nolimits_{k\in\KI}\! \sqrt{\Pik}\wik \xik  \!+\! \sum\nolimits_{\ell\in\KE}\!
    \sqrt{\Pel} \wel \xel,
\end{align}
 where $\Pik$ and $\Pel$ denote the transmit powers for IU $k$ and EU $\ell$, respectively.  The received signal at the IUs, $\qy_{\mathtt{I}}$, and EUs, $\qy_{\mathtt{E}}$, are given by $\qy_{\mathtt{I}} = \qH_1^H\qs + \qn$ and $\qy_{\mathtt{E}} = \qG^H\qs + \qz$
respectively, where $\qn \sim \mathcal{CN}(\boldsymbol{0},\qI_{K_I})$ and $\qz \sim \mathcal{CN}(\boldsymbol{0},\qI_{K_E})$. Accordingly,  the received data and energy signals at IU $k$ and IU $\ell$ can be written as
\begin{align}
    \yik &= \sqrt{\Pik}\qh_{k}^H\wik \xik +
    \sum\nolimits_{t\in\KI\setminus k}
    \sqrt{\Pit}\qh_{k}^H \wit \xit
    \nonumber\\
    &\hspace{0em}
    + \sum\nolimits_{\ell\in\KE}
    \sqrt{\Pel} \qh_{k}^H \wel \xel + n_k,~\forall k\in\KI,~\label{eq:yi}\\
    \yel &= \sum\nolimits_{k\in\KI} \sqrt{\Pik} \qg_{\ell}^H\wik \xik  \nonumber\\
    &\hspace{0em}
    + \sum\nolimits_{t\in\KE}
    \sqrt{\Pet} \qg_{\ell}^H \wet\xet + z_\ell,~\forall \ell\in\KE. ~\label{eq:ye}
\end{align}

\section{Precoding Design and Performance Analysis}~\label{sec:performance}
In this section, we evaluate the performance of the IUs and EUs in terms of achievable DL SE and average harvested energy, respectively. We also propose two precoding schemes, based on ZF and MRT principles, to support information and energy transfer towards IUs and EUs.  Linear processing techniques, such as ZF and MRT processing,
are low-complexity solutions that are widely used in massive MIMO typologies. Their main advantage is that in the large-antenna limit, their performance is nearly optimal~\cite{Marzetta:TWC:2010}. The principle behind this design is that ZF precoders work very well and nearly optimally for information transmission due to their ability to suppress the interuser interference \cite{Hien:cellfree}. On the other hand, MRT is shown to be an optimal beamformer for power transfer that maximizes the harvested energy, when $M$ is large \cite{almradi2016performance}.

\subsection{Downlink Spectral Efficiency and Harvested Energy}
By invoking~\eqref{eq:yi} and using the bounding technique in~\cite{marzetta2016fundamentals}, known as the hardening bound, we derive a lower bound on the DL SE of IU $k$. To this end, we first rewrite~\eqref{eq:yi} as
\begin{align}~\label{eq:yi:hardening}
    \yik &=  \mathrm{DS}_k  \xik +
    \mathrm{BU}_k \xik
     +\sum\nolimits_{t\in\KI\setminus k}
     \mathrm{IUI}_{kt}
     \xit
     \nonumber\\
    &\hspace{1em}
    + \sum\nolimits_{\ell\in\KE}
     \mathrm{EUI}_{k\ell}\xel + n_k,~\forall k\in\KI,
\end{align}
where$\mathrm{DS}_k =  \sqrt{\Pik}
    \Ex\{\qh_{k}^H\wik\}$ represents the strength of the desired signal,  
$ \mathrm{BU}_k  = \sqrt{\Pik}\qh_{k}^H\wik -\sqrt{\Pik}\Ex\{\qh_{k}^H\wik\}$ is the beamforming gain uncertainty,  $\mathrm{IUI}_{kt} = \sqrt{\Pit}\qh_{k}^H \wit$ denotes the interference caused by the $t$-th IU, and $\mathrm{EUI}_{k\ell} =\sqrt{\Pel} \qh_{k}^H \wel$ is  the interference caused by the EU $\ell$. We treat the sum of the second, third, fourth, and fifth terms in~\eqref{eq:yi:hardening} as effective noise. 
Since the effective noise and the desired signal are uncorrelated, by invoking~\cite[Sec. 2.3.2]{marzetta2016fundamentals}, an achievable DL SE for IU $k$ can be written as
\begin{align}~\label{eq:SEk:Ex}
    \mathrm{SE}_k
      &=
      \Big(1\!- \!\frac{\tau}{\tau_c}\Big)
      \log_2
      \left(
       1\! + \mathtt{SINR}_k
     \right),\hspace{2em} \bsHz,
\end{align}
where the effective SINR is given by~\eqref{eq:SINE:general} at the top of the next page, where $\rhoik = \Pik/\Sn$ and $\rhoel=\Pel/\Sn$ are the signal-to-noise ratio (SNR) of the information and energy symbol, respectively.  
\begin{figure*}
\begin{align}~\label{eq:SINE:general}
    \mathtt{SINR}_k =
    \!\frac{
                 \rhoik| \Ex\{\qh_{k}^H\wik\}|^2
                 }
                 { \sum_{t\in\KI}
                 \rhoit \Ex\{|\qh_{k}^H \wit|^2\}
                  \! + \!
                  \sum_{\ell\in\KE}
                 \rhoel\Ex\{|\qh_{k}^H \wel|^2\}
                  \! - \!
                  |\Ex\{\qh_{k}^H\wik\}|^2
                  \! + \! 1}.                
\end{align}
  	\hrulefill
	\vspace{-2mm}
  \end{figure*}

To characterize the harvested energy precisely, a non-linear EH model with the sigmoidal function is used. Therefore, the total harvested energy at  EU $\ell$ is given by~\cite{Boshkovska:CLET:2015}
 \vspace{0.3em}
  \begin{align}~\label{eq:NLEH}
  \Phi(\mathrm{E}_{\ell}) = \frac{\Omega(\mathrm{E}_{\ell}) - \phi \Lambda }{1-\Lambda}, ~\forall \ell\in\KE,
 \end{align}
where $\phi$ is the maximum output DC power, $\Lambda=\frac{1}{1 + \exp(a b)}$ is a constant to guarantee a zero input/output response, while $\Omega(\mathrm{E}_{\ell})$  is the traditional logistic function, given by $\Omega(\mathrm{E}_{\ell}) =\frac{\phi}{1 + \exp(-a(\mathrm{E}_{\ell}-b))}$, where  $a$ and $b$ are constant parameters that depend on the circuit. Moreover, $\mathrm{E}_{\ell}$ denotes the received RF energy at EU $\ell$, given by
\vspace{0.3em}
\begin{align}~\label{eq:El0}
    \mathrm{E}_{\ell} (\wik,\wet)
      &=(\tau_c-\tau)\Big(
     \sum\nolimits_{k\in\KI} {\Pik} |\qg_{\ell}^H\wik|^2
     \nonumber\\
     &\hspace{-4em}
     + \sum\nolimits_{t\in\KE}
    {\Pet} |\qg_{\ell}^H \wet|^2 + \Sn,~\forall \ell\in\KE\Big).
\end{align}

We denote the average of the received energy as 
\begin{align}\label{eq:El}
    Q_{\ell} =\Ex\big\{\mathrm{E}_{\ell}(\wik,\wet)\big\}.
\end{align}

The achievable DL SE in~\eqref{eq:SEk:Ex} and  average harvested energy in~\eqref{eq:El} are general and valid regardless of the precoding scheme used at the BS. By inspecting~\eqref{eq:SEk:Ex} and~\eqref{eq:El}, we observe that the system performance depends on the precoding vectors adopted by the BS.  To this end, we present precoding schemes, providing a trade-off between DL SE at the IUs and harvested energy at the EUs.

\subsection{Partial Zero-Forcing Precoding}
The design idea of the PZF precoding is that the BS suppresses the inter-user interference at the IUs, while the received interference at the EUs is exploited for EH. In this scheme, the BS transmits to the IUs with ZF and to the EUs with MRT.  By using the MRT for EUs, the maximum harvested energy is guaranteed~\cite{almradi2016performance}, however, some level of interference is emitted towards the IUs.  The signal sent by the BS, employing PZF, is thus given by
\begin{align}~\label{eq:s:PZF}
    \qs \!=\! \sum\nolimits_{k\in\KI}\! \sqrt{\Pik}\wik^{\PZF} \xik  \!+\!\! \sum\nolimits_{\ell\in\KE}\!
    \sqrt{\Pel} \wel^{\MRT} \xel,
\end{align}
where 
\begin{subequations}
 \begin{align}
    \wik^{\PZF} &=
    \alpha_{\PZF,k}
    { \Hohat (\Hohat^H \Hohat) ^{-1} \eKI},~\label{eq:wipzf}\\
    \wel^{\MRT} &= \alpha_{\MRT,\ell}
     \vghatl,~\label{eq:wemrt}
\end{align}   
\end{subequations}
with
\begin{subequations}
  \begin{align}
    \alpha_{\PZF,k}&\triangleq\frac{1}{ \sqrt{\Ex\{ \Vert \Hohat (\Hohat^H \Hohat) ^{-1} \eKI\Vert^2 \}}},\\
    \alpha_{\MRT,\ell}&\triangleq\frac{ 1} { \sqrt{\Ex\{\Vert\vghatl\Vert^2\}}}.
  \end{align}  
\end{subequations}
Now, we proceed to derive $\alpha_{\PZF,k}$ and $\alpha_{\MRT,\ell}$. The expectation term in $\alpha_{\PZF,k}$ for $M>\tau_{\KI}$ is obtained as
\vspace{0.0em}
\begin{align}~\label{eq:normfac:zf}
    \Ex\Big\{ \Big\Vert \Hohat (\Hohat^H \Hohat) ^{-1}\eKI \Big\Vert^2 \Big\} &=
    { \Ex\Big\{ \Big[ \Big(\Hohat^H \Hohat\Big) ^{-1} \Big]_{k,k}\Big\}} \nonumber\\
    &\hspace{0em}=\frac{1}{\gamhk \tau_{\KI}} \Ex\Big\{\trace(\qX^{-1})\Big\}
    \nonumber\\
    &
    =\frac{1}{(M-\tau_{\KI}) \gamhk},
\end{align}
where $\qX$ is a $\tau_{\KI} \times \tau_{\KI}$ central Wishart matrix with $M$ degrees of freedom, satisfying $M\geq \tau_{\KI}+1$, and covariance matrix $\qI_{\tau_{\KI}}$, while the last equality is obtained by using~\cite[Lemma 2.10]{tulino04}. We note that if all the IUs are assigned mutually orthogonal pilots, then $\tau_{\KI} = K_I$. Moreover, the expectation term in  $\alpha_{\MRT,\ell}$ can be obtained as
\begin{align*}
    \Ex\big\{\big\Vert\vghatl \big\Vert^2\big\} 
    =
    M \big( \gamgl+ \lamdal\delta \XiTet \big),
\end{align*}
where $\XiTet \triangleq \flTaN\aNTfl$.

\begin{proposition}\label{Theorem:SE:PZF}
The ergodic SE for the $k$-th IU, achieved by the PZF scheme, is given in closed-form by~\eqref{eq:SEk:Ex}, where the effective SINR is given in~\eqref{eq:SINLPZF} at the top of the next page, 
  \begin{figure*}
\begin{align}~\label{eq:SINLPZF}
  \SINRpzf=
  \!\frac{
                  (M-\tau_{\KI}) \rhoik\gamhk
                 }
                 { \sum_{t\in\mathcal{P}_k\setminus\{k\}}
              (M-\tau_{\KI}) \rhoit\gamhk
             +
     \sum_{t\in\KI}
             \rhoit (\BetaBIk-\gamhk)
                  \! + \!                                  \sum_{\ell\in\KE}
                 \rhoel \BetaBIk
                  \! + \! 1},
\end{align}
\hrulefill
\vspace{-2mm}
\end{figure*}
where $\mathcal{P}_k\subset\KI$ is the set of IUs' indices   sharing the same pilot with IU $k$.
\end{proposition}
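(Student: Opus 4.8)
The plan is to evaluate every expectation in the general SINR~\eqref{eq:SINE:general} after inserting the PZF precoder~\eqref{eq:wipzf} and the MRT energy precoder~\eqref{eq:wemrt}. Throughout I would use the MMSE decomposition $\qh_k=\hat{\qh}_k+\tilde{\qh}_k$ with $\hat{\qh}_k\sim\mathcal{CN}(\boldsymbol{0},\gamhk\qI_M)$ and $\tilde{\qh}_k\sim\mathcal{CN}(\boldsymbol{0},(\BetaBIk-\gamhk)\qI_M)$ independent, noting that $\tilde{\qh}_k$ is independent of every precoder since all precoders are functions of the channel estimates only. Two structural facts do most of the work: (i) the identity $\hat{\qh}_k=\chk\Hohat\eKI$, so that $\eKI^H\Hohat^H\Hohat(\Hohat^H\Hohat)^{-1}\pmb{e}_{i_t^{\mathtt{I}}}=\eKI^H\pmb{e}_{i_t^{\mathtt{I}}}$ collapses to a Kronecker delta on the pilot indices; and (ii) the orthogonality of IU and EU pilots, $\vphiEl^H\vphiIk=0$, which makes $\vghatl$, and hence $\wel^{\MRT}$, statistically independent of $\qh_k$.

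First I would treat the desired signal. Because $\hat{\qh}_k^H\wik^{\PZF}=\chk\,\alpha_{\PZF,k}\,\eKI^H\Hohat^H\Hohat(\Hohat^H\Hohat)^{-1}\eKI$ is a deterministic constant, the estimate contributes no beamforming-gain uncertainty, while $\tilde{\qh}_k$ is zero-mean; hence $\Ex\{\qh_k^H\wik^{\PZF}\}=\hat{\qh}_k^H\wik^{\PZF}$, and inserting the ZF normalization~\eqref{eq:normfac:zf} gives $|\Ex\{\qh_k^H\wik^{\PZF}\}|^2=(M-\tauI)\gamhk$, the numerator of~\eqref{eq:SINLPZF}.

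Next I would split each IU interference term as $\Ex\{|\qh_k^H\wit^{\PZF}|^2\}=\Ex\{|\hat{\qh}_k^H\wit^{\PZF}|^2\}+\Ex\{|\tilde{\qh}_k^H\wit^{\PZF}|^2\}$, the cross term vanishing by zero-mean independence. By fact (i) the coherent part $\hat{\qh}_k^H\wit^{\PZF}$ is zero unless IU $t$ reuses the pilot of IU $k$ (i.e. $t\in\mathcal{P}_k$); for such $t$ it reduces, as for the desired signal, to $(M-\tauI)\gamhk$ (the shared pilot fixes a common observation covariance, so the interest user's gain $\gamhk$ appears). The incoherent part obeys $\Ex\{|\tilde{\qh}_k^H\wit^{\PZF}|^2\}=(\BetaBIk-\gamhk)\Ex\{\|\wit^{\PZF}\|^2\}=\BetaBIk-\gamhk$ for every $t\in\KI$, by independence and the unit-power normalization. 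Summing over $t$ produces the $\sum_{t\in\mathcal{P}_k}(M-\tauI)\rhoit\gamhk$ and $\sum_{t\in\KI}\rhoit(\BetaBIk-\gamhk)$ contributions. For the EU terms, fact (ii) together with $\Ex\{\qh_k\qh_k^H\}=\BetaBIk\qI_M$ and $\Ex\{\|\wel^{\MRT}\|^2\}=1$ gives $\Ex\{|\qh_k^H\wel^{\MRT}|^2\}=\BetaBIk$, i.e. $\sum_{\ell\in\KE}\rhoel\BetaBIk$. Finally, the $t=k$ coherent contribution cancels exactly against the subtracted coherent desired power in the denominator of~\eqref{eq:SINE:general}, so the coherent interference collapses to a sum over $\mathcal{P}_k\setminus\{k\}$, reproducing~\eqref{eq:SINLPZF}.

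The main obstacle is the coherent-interference bookkeeping: fixing the constant $(M-\tauI)\gamhk$ relies on the inverse-Wishart expectation $\Ex\{[(\Hohat^H\Hohat)^{-1}]_{k,k}\}$ from~\cite[Lemma 2.10]{tulino04} used in~\eqref{eq:normfac:zf}, and one must track precisely which IU pilots coincide so that only the $t\in\mathcal{P}_k$ terms survive the ZF projection while the $t=k$ term cancels the coherent desired power. The remaining steps are routine zero-mean independence arguments combined with the unit-power normalization of the precoders.
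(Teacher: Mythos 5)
Your proposal is correct and takes essentially the same approach as the paper: the paper's proof consists precisely of substituting $\Ex\{\qh_{k}^H\wik^{\PZF}\}=\sqrt{(M-\tauI)\gamhk}$, $\sum_{t\in\KI}\rhoit \Ex\{|\qh_{k}^H \wit^{\PZF}|^2\}=\sum_{t\in\mathcal{P}_k}(M-\tauI)\rhoit\gamhk+\sum_{t\in\KI}\rhoit(\BetaBIk-\gamhk)$, and $\sum_{\ell\in\KE}\rhoel\Ex\{|\qh_{k}^H \wel^{\MRT}|^2\}=\sum_{\ell\in\KE}\rhoel \BetaBIk$ into~\eqref{eq:SINE:general}. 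Your estimate/error decomposition, pilot-index Kronecker-delta argument, and unit-power normalization arguments are exactly the computations behind those three stated expectations (including the cancellation of the $t=k$ coherent term against the subtracted desired power), so you have simply made explicit the details the paper leaves implicit.
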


\begin{proof}
The proof follows by substituting the following terms into~\eqref{eq:SINE:general} 
\begin{align*}
   \Ex\{\qh_{k}^H\wik^{\PZF}\} &=\sqrt{(M-\tau_{\KI}) \gamhk},\\
   \sum\nolimits_{t\in\KI}\!\!\rhoit \Ex\{|\qh_{k}^H \wit^{\PZF}|^2\} &=\!\sum_{t\in\mathcal{P}_k}\!\!(M\!-\!\tau_{\KI}) \rhoit\gamhk\nonumber\\   
&\hspace{2em}+\!\sum\nolimits_{t\in\KI}\!\rhoit (\BetaBIk\!-\!\gamhk),\\
\sum\nolimits_{\ell\in\KE}\!\!\rhoel\Ex\{|\qh_{k}^H \wel^{\MRT}|^2\} &=\sum\nolimits_{\ell\in\KE}\rhoel \BetaBIk.
\end{align*} 
\end{proof}

\begin{Corollary}\label{corollary:SE:PZF}
The ergodic SE for the $k$-th IU, achieved by the PZF scheme with no pilot reuse, is given in closed-form by~\eqref{eq:SEk:Ex}, where the effective SINR is given by
\vspace{0.1em}
\begin{align}~\label{eq:SINLPZF:npr}
  \SINRpzf=
  \!\frac{
                  (M-K_I) \rhoik\bgamhk
                 }
                 {
     \sum_{t\in\KI}
             \rhoit (\BetaBIk-\bgamhk)
                  \! + \!                                  \sum_{\ell\in\KE}
                 \rhoel \BetaBIk
                  \! + \! 1}.
\end{align}
\end{Corollary}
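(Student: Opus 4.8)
The plan is to derive Corollary~\ref{corollary:SE:PZF} as an immediate specialization of Proposition~\ref{Theorem:SE:PZF}: under the no-pilot-reuse assumption no new expectation has to be evaluated, and one simply has to propagate two structural simplifications through the SINR expression~\eqref{eq:SINLPZF}. First I would invoke the remark made right after~\eqref{eq:normfac:zf}, namely that assigning mutually orthogonal pilots to all IUs forces $\tau_{\KI}=K_I$. Substituting this into the coherent-combining factor $(M-\tau_{\KI})$ in both the numerator and the (about-to-vanish) contamination sum of~\eqref{eq:SINLPZF} produces the $(M-K_I)$ prefactor of~\eqref{eq:SINLPZF:npr}.

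Second, I would show that the pilot-contamination term $\sum_{t\in\mathcal{P}_k\setminus\{k\}}(M-\tau_{\KI})\rhoit\gamhk$ in the denominator drops out entirely. With no reuse among IUs, the set $\mathcal{P}_k$ of IUs that share the pilot of IU~$k$ collapses to $\{k\}$, so $\mathcal{P}_k\setminus\{k\}=\emptyset$ and the sum is empty. The two surviving denominator sums $\sum_{t\in\KI}\rhoit(\BetaBIk-\gamhk)$ and $\sum_{\ell\in\KE}\rhoel\BetaBIk$, together with the unit noise term, then match~\eqref{eq:SINLPZF:npr} once the estimation-quality coefficient is updated.

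The step requiring the most care---though it is bookkeeping rather than a genuine analytical hurdle---is justifying the replacement $\gamhk\mapsto\bgamhk$. Starting from~\eqref{eq:gamhk} and the definition of $\chk$, I would set every cross-pilot inner product to zero: the IU--EU products $|\vphiEl^H\vphiIk|^2$ vanish by the standing assumption that IUs and EUs never reuse a pilot, while the IU--IU products $|\vphiIkp^H\vphiIk|^2$ equal one for $k'=k$ and zero otherwise. The denominator of $\chk$ thereby reduces to $\tau p\,\BetaBIk+\Sn$, giving the no-reuse estimation quality $\bgamhk=\tau p\,\BetaBIk^2/(\tau p\,\BetaBIk+\Sn)$. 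I anticipate the only real care here is confirming that every instance of $\gamhk$ in~\eqref{eq:SINLPZF} traces back to this single coefficient, so that one substitution is globally consistent; no further computation is needed.
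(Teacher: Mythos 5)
Your proposal is correct and takes essentially the same approach as the paper: both obtain the corollary as the orthogonal-pilot specialization of Proposition~\ref{Theorem:SE:PZF}, with $\tau_{\KI}=K_I$, the pilot-contamination term vanishing (the paper phrases this via $\hat{\qh}_{k}^H \wit^{\PZF}=0$ for $t\neq k$ and the independence of $\tilde{\qh}_k$, you via $\mathcal{P}_k\setminus\{k\}=\emptyset$ in~\eqref{eq:SINLPZF}), and $\gamhk$ replaced by its orthogonal-pilot value $\bgamhk$. Your explicit evaluation $\bgamhk=\tau p\,\BetaBIk^2/(\tau p\,\BetaBIk+\Sn)$, obtained by zeroing the cross-pilot inner products in $\chk$ and using~\eqref{eq:gamhk}, is consistent with the paper's definitions, so there is no gap.
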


\begin{proof}
When orthogonal pilots are used, the normalization factor in~\eqref{eq:normfac:zf} is reduced to $\frac{1}{(M-K_I) \bgamhk}$ and $ \sum_{t\in\KI}\rhoit \Ex\{|\qh_{k}^H \wit^{\PZF}|^2=(M-K_I) \rhoik\gamhk +\sum_{t\in\KI}\rhoit (\BetaBIk-\gamhk)$ following the facts that  $\hat{\qh}_{k}^H  \wit^{\PZF} =0$, for $t \neq k$ and that $\tilde{\qh}_{k}$ is independent of $\wit^{\PZF}$.

\end{proof}

\begin{proposition}~\label{Theorem:QlPZF}
With the PZF scheme, the average harvested energy by EU $\ell\in\KE$, is given by
\begin{align} 
\label{eq:En:PZF:final}
    Q_\ell^{\PZF}
    &=
    (\tau_c\!-\!\tau)
    \Big(  \lamdal
    \Big( N +  \delta \XiTet  \Big)
    \Big(\!\sum\nolimits_{k\in\KI} {\Pik} \Big)
    \nonumber\\
     &\hspace{1em}
    +\!
    \sum\nolimits_{t\in\KE\setminus\{\mathcal{S}_\ell\}}\!  
    {\Pet}
    \Psi_{1,\ell,t}^{\PZF}(\bTeta)
    \!+\!
    \sum\nolimits_{\ell'\in\{\mathcal{S}_\ell\}}\!
     {\Pelp} 
          \nonumber\\
     &\hspace{1em}
     \times\Big(
     \alpha_{\MRT,\ell'}^2{\Psi}_{2,\ell,\ell'}^{\PZF}(\bTeta)+\big(N{\lambda_{\ell}}-\gamgk\big)
     \Big)
     + \Sn\Big),
\end{align}
where $\mathcal{S}_\ell\subset\KE$ is the set of EUs' indices that  share the same pilot with EU $\ell$ and
\begin{subequations}~\label{eq:Exq4}
\begin{align}
\Psi_{1,\ell,t}^{\PZF}(\bTeta) &\!=\!\alpha_{\MRT,t}^2
    M
    {\lamdal} \delta
    \big( \gamgt \XiTet
                                 \nonumber\\
             &\hspace{4em}
      \!+\!
    M
    {\lamdat\delta} 
    \vert\Xiltet\vert^2
    \big)\! + \!N\lamdal,\\
    \Psi_{2,\ell,\ell'}^{\PZF}(\bTeta) 
          &=\cflplsq M(M+1)\gamgk^2
                       +
 M\gamgk\lambda_{\ell} \delta
           \XiTetlp
           \nonumber\\
    &
    \hspace{-4em}
                       +\!
             \cflpl M\gamgk\Big(M\sqrt{\lamdal\lamdalp} \delta\XiTetlpl
\!+\!
           \cflpl  \lambda_{\ell} \delta
                  \XiTet\Big)
                             \nonumber\\
             &\hspace{-5em}
                  \!+\!
    M^2\sqrt{\lamdal\lamdalp} \delta \XiTetllp
    \!
             \left(\!\cflpl
                  \gamgk
                   \!+\!
                \sqrt{\lamdal\lamdalp} \delta \XiTetlpl
              \!\right)\!.
\end{align}  
\end{subequations}
\end{proposition}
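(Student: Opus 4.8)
The plan is to begin from the average received power $Q_\ell=\Ex\{\mathrm{E}_\ell(\wik,\wet)\}$ in~\eqref{eq:El}, substitute the PZF information precoder~\eqref{eq:wipzf} and the MRT energy precoder~\eqref{eq:wemrt}, and use linearity of expectation to split $Q_\ell/(\tau_c-\tau)$ into the IU leakage $\sum_{k\in\KI}\Pik\Ex\{|\qg_\ell^H\wik^{\PZF}|^2\}$, the energy-signal contributions $\sum_{t\in\KE}\Pet\Ex\{|\qg_\ell^H\wet^{\MRT}|^2\}$, and the noise floor $\Sn$. The energy sum is then partitioned according to whether EU $t$ reuses the pilot of EU $\ell$, since the two cases call for different correlation arguments. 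Throughout, every deterministic LoS quadratic form is collapsed to the scalars $\XiTet,\Xiltet,\XitTet,\ldots$ by exploiting the rank-one factorization $\bar{\qH}_2=\qa_M\qa_N^H$ from~\eqref{eq:barhkbarH2}; in particular $\bar{\qH}_2^H\bar{\qH}_2=M\qa_N\qa_N^H$ yields $\Vert\HTfl\Vert^2=M\XiTet$ and $(\HTfl)^H\HTft=M\Xiltet$.

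For the IU leakage I would use that IUs and EUs are assigned mutually orthogonal pilots, so $\Hohat$, and hence $\wik^{\PZF}$, is statistically independent of $\qg_\ell$. Since $\Hohat$ is left rotationally invariant, $\wik^{\PZF}$ is isotropically oriented and $\Ex\{\wik^{\PZF}(\wik^{\PZF})^H\}=\tfrac1M\qI_M$ (recall $\Ex\{\Vert\wik^{\PZF}\Vert^2\}=1$). Combining this with $\Ex\{\qg_\ell\qg_\ell^H\}=\lamdal\delta\,\HTfl\flTH+N\lamdal\qI_M$ and the rank-one reduction gives $\Ex\{|\qg_\ell^H\wik^{\PZF}|^2\}=\lamdal(N+\delta\XiTet)$, independent of $k$; summing over $\KI$ reproduces the first bracket of~\eqref{eq:En:PZF:final}.

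For a non-co-pilot EU $t\notin\{\mathcal{S}_\ell\}$ I would write $\qg_\ell=\sqrt{\lamdal\delta}\HTfl+\sqrt{\lamdal}\tHTfl$ and $\ghatt=\bghatt+\tghatt$, and use that the orthogonal pilots make the NLoS fluctuation of $\qg_\ell$ uncorrelated with the zero-mean estimate $\tghatt$, so that only the four diagonal second moments survive; these reduce to $M^2\lamdal\lamdat\delta^2|\Xiltet|^2$, to $\gamgt\lamdal\delta M\XiTet$, and to a pair carrying the factor $N\lamdal$ that combines to $N\lamdal M(\gamgt+\lamdat\delta\XitTet)$. The normalization $\alpha_{\MRT,t}^2\Ex\{\Vert\ghatt\Vert^2\}=1$ turns the last group into $N\lamdal$, giving $\Psi_{1,\ell,t}^{\PZF}(\bTeta)$ in~\eqref{eq:Exq4}. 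For a co-pilot EU $\ell'\in\{\mathcal{S}_\ell\}$ I would instead split $\qg_\ell=\vghatl+\tilde{\qg}_\ell$; the estimation error is independent of all estimates and contributes $(N\lamdal-\gamgk)\Ex\{\Vert\hat{\qg}_{\ell'}\Vert^2\}$, which after the $\alpha_{\MRT,\ell'}^2$ normalization becomes the explicit summand $(N\lamdal-\gamgk)$. For the remaining estimate--estimate part, Remark~\ref{rem:nopiloIU} supplies $\tilde{\hat{\qg}}_{\ell'}=\cflpl\tghatk$, so that $\vghatl^H\hat{\qg}_{\ell'}=\cflpl\Vert\tghatk\Vert^2+\tghatkH\bar{\hat{\qg}}_{\ell'}+\cflpl\bghatkH\tghatk+\bghatkH\bar{\hat{\qg}}_{\ell'}$, whose squared modulus and expectation is to yield $\Psi_{2,\ell,\ell'}^{\PZF}(\bTeta)$.

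The main obstacle is this co-pilot term. Its leading $\cflplsq M(M+1)\gamgk^2$ contribution comes from the fourth moment $\Ex\{\Vert\tghatk\Vert^4\}=M(M+1)\gamgk^2$, and one must track carefully which cross terms survive: the mixtures of $\Vert\tghatk\Vert^2$ with a single copy of $\tghatk$ vanish as odd Gaussian moments; the two products that are linear in $\tghatk$ are mutually uncorrelated by the properness of $\tghatk$; yet the mean--quadratic cross term $\cflpl\Ex\{\Vert\tghatk\Vert^2\}\big(\bghatkH\bar{\hat{\qg}}_{\ell'}+\bar{\hat{\qg}}_{\ell'}^H\bghatk\big)$ does contribute and supplies the mixed $\XiTetllp$/$\XiTetlpl$ terms. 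Matching every coefficient then leans on the algebraic relation $\lamdalp=\cflpl\lamdal$ implied by Remark~\ref{rem:nopiloIU}, together with the deterministic identity $|\bghatkH\bar{\hat{\qg}}_{\ell'}|^2=M^2\lamdal\lamdalp\delta^2\XiTetllp\XiTetlpl$. A secondary subtlety is justifying, in the non-co-pilot case, that the NLoS fluctuation of $\qg_\ell$ may be treated as uncorrelated with the zero-mean estimate of $\qg_t$; this rests on the near-orthogonality of the distinct LoS array responses $\{\bar{\qf}_\ell\}$ and is precisely what keeps $\Psi_{1,\ell,t}^{\PZF}(\bTeta)$ free of inter-EU LoS-coupling terms.
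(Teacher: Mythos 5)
Your proposal is correct and follows essentially the same route as the paper's own proof: the identical decomposition into IU leakage, non-co-pilot EU, co-pilot EU, and noise contributions, with $\Ex\{\wik^{\PZF}(\wik^{\PZF})^H\}=\tfrac{1}{M}\qI_M$ and quadratic-form expectations handling the independent terms, and the co-pilot term split as $\qg_\ell=\vghatl+\tilde{\qg}_\ell$ so that the error part yields $(N\lamdal-\gamgk)$ and the estimate--estimate part reduces, via the linear dependence of Remark~\ref{rem:nopiloIU}, to exactly the Gaussian fourth-moment computation that the paper isolates as Lemma~\ref{lemma:ghatk4}, which you simply inline. The remaining differences are presentational (direct expansion of second moments where the paper invokes Lemma~\ref{lemma:expxhAx}, and your explicit flagging of the independence approximation between $\qg_\ell$ and $\hat{\qg}_t$ for non-co-pilot $t$, which the paper asserts without comment), so no gap exists.
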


\begin{proof}
See Appendix~\ref{Theorem:QlPZF:proof}.
\end{proof}

\begin{Corollary}
When the BS-RIS channel reduces to a Rayleigh fading channel (i.e., $\delta = 0$), the average harvested energy with PZF is given by
\begin{align}\label{eq:En:PZF:NLOS}
 Q_\ell^{\PZF,\Ryl} 
&=  (\tau_c\!-\tau)
     \Big( N \beta_{\RE,\ell}\beta
     \Big(\!\sum\nolimits_{k\in\KI} {\Pik} 
     \!+\!
     \sum\nolimits_{t\in\KE}\!
     {\Pet}  
                                  \nonumber\\
             &\hspace{2em}
     \!+\!
     M\cglRay\sum\nolimits_{\ell'\in\{\mathcal{S}_\ell\}}
        {\Pelp}\!\Big)  + \Sn \Big), 
\end{align}
where $\cglRay = {\sqrt{\tau p}\cgk}_{\big|\delta=0}$. Moreover, with orthogonal pilots 
\vspace{-0.1em}
\begin{align}\label{eq:En:PZF:NLOS:orthogonalpilot}
\check{Q}_\ell^{\PZF,\Ryl} 
&=  (\tau_c\!-\tau)\Sn
     \Big( N \beta_{\RE,\ell}\beta
     \Big( \!\sum\nolimits_{k\in\KI} {\Pik} 
     \!+\!
     \sum\nolimits_{t\in\KE}
     {\Pet} 
                      \nonumber\\
             &\hspace{2em}
     +
       M\chcglRay\rhoel \Big) + \Sn \Big), 
\end{align}
where $\chcglRay = {\sqrt{\tau p} \bbcgk}_{\big|\delta=0}$.
\end{Corollary}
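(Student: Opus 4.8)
The plan is to derive both identities as specializations of Proposition~\ref{Theorem:QlPZF} to the Rayleigh regime $\delta=0$, combined with the pilot-reuse structure recorded in Remark~\ref{rem:nopiloIU}. The starting point is~\eqref{eq:En:PZF:final} together with the auxiliary quantities~\eqref{eq:Exq4}. The crucial structural observation is that every geometric factor $\Xi_{\cdot,\cdot}(\bTeta)$ appearing in~\eqref{eq:En:PZF:final} and~\eqref{eq:Exq4} is multiplied by at least one power of $\delta$; hence setting $\delta=0$ annihilates all LoS-dependent terms and, with them, every dependence on $\bTeta$. Simultaneously, $\lamdal=\BetaREl\beta/(\delta+1)$ collapses to $\lamdal\big|_{\delta=0}=\BetaREl\beta$.

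First I would reduce the three groups of terms in~\eqref{eq:En:PZF:final} one by one. The IU contribution $\lamdal(N+\delta\XiTet)\sum_{k}\Pik$ becomes $N\BetaREl\beta\sum_{k}\Pik$. For the non-pilot-sharing EUs, the $\delta$-weighted bracket in $\Psi_{1,\ell,t}^{\PZF}(\bTeta)$ vanishes, leaving $\Psi_{1,\ell,t}^{\PZF}(\bTeta)\big|_{\delta=0}=N\lamdal=N\BetaREl\beta$, so this group contributes $N\BetaREl\beta\sum_{t\in\KE\setminus\setsl}\Pet$.

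The substantive bookkeeping is the pilot-sharing group. At $\delta=0$ only the first summand of $\Psi_{2,\ell,\ell'}^{\PZF}(\bTeta)$ survives, giving $\Psi_{2,\ell,\ell'}^{\PZF}(\bTeta)\big|_{\delta=0}=\cflplsq M(M+1)\gamgk^2$, while the normalization reduces to $\alpha_{\MRT,\ell'}^2\big|_{\delta=0}=1/(M\gamma_{\hat{g}_{\ell'}})$. Invoking the proportionality $\gamma_{\hat{g}_{\ell'}}=\cflplsq\gamgk$ implied by Remark~\ref{rem:nopiloIU}, the $\cflplsq$ factors cancel and $\alpha_{\MRT,\ell'}^2\Psi_{2,\ell,\ell'}^{\PZF}$ collapses to $(M+1)\gamgk$; adding the self-term $(N\lamdal-\gamgk)\big|_{\delta=0}=N\BetaREl\beta-\gamgk$ then yields the compact $M\gamgk+N\BetaREl\beta$ per sharing EU. I would next substitute $\gamgk\big|_{\delta=0}=N\BetaREl\beta\,\cglRay$, which follows from~\eqref{eq:gamgk} and the definition $\cglRay=\sqrt{\tau p}\,\cgk\big|_{\delta=0}$, turning this into $N\BetaREl\beta(M\cglRay+1)\sum_{\ell'\in\setsl}\Pelp$. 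Merging the unit-coefficient part of this sum with the $\KE\setminus\setsl$ sum reconstitutes $\sum_{t\in\KE}\Pet$ and isolates the residual $M\cglRay\sum_{\ell'\in\setsl}\Pelp$; collecting all pieces and the noise floor $\Sn$ reproduces~\eqref{eq:En:PZF:NLOS}.

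For the orthogonal-pilot form~\eqref{eq:En:PZF:NLOS:orthogonalpilot}, I would specialize once more: with mutually orthogonal pilots the sharing set is the singleton $\mathcal{S}_\ell=\{\ell\}$, so the residual sum collapses to the single power $\Pel$; moreover all cross-pilot inner products in the definition of $\cgk$ vanish, so $\cgk$ reduces to its orthogonal-pilot counterpart $\bbcgk$ and $\cglRay$ to $\chcglRay=\sqrt{\tau p}\,\bbcgk\big|_{\delta=0}$. Re-expressing $\Pel$ through $\rhoel=\Pel/\Sn$ then gives the stated expression. The only genuine obstacle is the third-paragraph cancellation—correctly chaining $\alpha_{\MRT,\ell'}^2$, $\Psi_{2,\ell,\ell'}^{\PZF}$ and the Remark~\ref{rem:nopiloIU} scaling so that the $\cflpl$ factors cancel exactly and the pilot-sharing contribution condenses to $M\gamgk+N\BetaREl\beta$; everything else is routine substitution of $\delta=0$.
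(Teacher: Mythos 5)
Your proposal is correct and follows essentially the same route as the paper, whose entire proof is the one-line instruction to set $\delta=0$ and $\alpha_{\MRT,\ell}^2 = \frac{1}{M\gamgl}$ in~\eqref{eq:En:PZF:final}. Your write-up merely makes explicit what that substitution glosses over—in particular the cancellation of the $\cflpl$ factors via the scaling $\gamma_{\hat{g}_{\ell'}}=\cflplsq\gamgk$ from Remark~\ref{rem:nopiloIU}, which collapses the pilot-sharing contribution to $M\gamgk+N\lamdal$—so it is a faithful, more detailed version of the paper's argument.
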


\begin{proof}
 The proof follows by setting $\delta=0$  and  $\alpha_{\MRT,\ell}^2 = \frac{1}{M\gamgl}$ into~\eqref{eq:En:PZF:final}. 
\end{proof}

\begin{remark}~\label{remark:MN}
We observe that the amount of harvested energy can be increased proportionally to $N$ and $M$. A large RIS can help reduce $M$ inversely, without sacrificing the harvested energy. 
\end{remark}

\begin{remark}
To investigate the impact of pilot contamination on the average harvested energy, we define $\Pi \triangleq\frac{\cglRay\sum_{\ell'\in\{\mathcal{S}_\ell\}}
        {\Pelp}} {\chcglRay\Pel}$. Assuming that $\boldsymbol{\varphi_i}^H \boldsymbol{\bar{\varphi}_j} =0$, $\forall i\in\KI$, and $\forall j\in \KE$ and considering equal power allocation between the EUs ($\Pel=\Pelp$, $\forall \ell\neq \ell'$), 
        we get $\Pi = \frac{\vert\mathcal{S}_\ell\vert ({\tau p} N\lamdal + {\Sn})}{\vert\mathcal{S}_\ell\vert{\tau p} N\lamdal + {\Sn}}>1$. This result indicates that the EUs can benefit from pilot contamination in terms of EH.     
\end{remark}

\subsection{Protective Partial Zero-Forcing}
By using the PZF precoding, the IUs experience interference from the energy signals transmitted to the EUs. To reduce this interference, PPZF is proposed to guarantee full protection to the IUs by forcing the MRT to be implemented into the orthogonal complement of $\Hohat$. Let $\qB = \qI_M -\Hohat( \Hohat^H  \Hohat)^{-1}\Hohat^H $, denote the projection matrix onto the complement of $\Hohat$. Therefore, the PMRT precoding vector to EUs can be expressed as 
\begin{align}~\label{eq:wepmrt}
    \wel^{\PMRT} &= 
    \alpha_{\PMRT,\ell}
    {\qB\vghatl},
\end{align}
where $\alpha_{\PMRT,\ell} \triangleq\frac{1}{  \sqrt{\Ex\{\Vert\qB\vghatl\Vert^2\}}}$. 

Noticing that $\hat{\qg}_{\ell} =\tghatk + \bghatk$ with $ \tghatk\sim \mathcal{CN}\big(\boldsymbol{0}, \gamgl \qI_M\big)$ and $\bghatk=\sqrt{ \lamdal\delta}\bar{\qH}_2\bTeta\bar{\qf}_\ell$, the expectation term in $\alpha_{\PMRT,\ell}$ can be obtained as
\begin{align}
   \Ex\left\{\Vert\qB\vghatl\Vert^2\right\} 
   &  \stackrel{(a)}{=}  
   \trace \left(   \Ex \left\{ \tghatk \qB\tghatkH\right\}+\Ex \left\{ \bghatk\qB\bghatkH  \right\}\right)
   \nonumber\\ 
    &  \stackrel{(b)}{=}  
   \trace \left(   \Ex \left\{ \tghatk \Ex \big\{\qB\big\}\tghatkH\right\}+ \bghatk \Ex\big\{\qB\big\}\bghatkH  \right)
   \nonumber\\ 
    &\stackrel{(c)}{=} 
     (M-\tau_{\KI}) \Big(\gamgk +\lamdal\delta
             \XiTet\Big) ,
\end{align}
 where we have exploited:  in (a) the cross-expectations vanishes as $\tghatk$ is a zero-mean random vector; in (b) $\Ex\{\qx^H\qy\} = \Ex\left\{\trace\left(\qy\qx^H\right)\right\}$ and that $\qB$ is independent of $\tghatk$; Lemma~\ref{Lemma:B} in (c).

\begin{proposition}\label{Theorem:SE:PPZF}
The ergodic SE for the $k$-th IU, achieved by the PPZF scheme, is given in closed-form by~\eqref{eq:SEk:Ex}, where the effective SINR is given in~\eqref{eq:SINLPPZF} at the top of the next page. 
\begin{figure*}
\begin{align}~\label{eq:SINLPPZF}
  \SINRppzf=
  \!\frac{
                  (M-\tau_{\KI}) \rhoik\gamhk
                 }
                 { \sum_{t\in\mathcal{P}_k\setminus\{k\}}
              (M\!-\!\tau_{\KI}) \rhoit\gamhk
             +
     \sum_{t\in\KI}
             \rhoit (\BetaBIk\!-\!\gamhk)
       \! + \! \sum_{\ell\in\KE}
                 \rhoel (\BetaBIk\!-\!\gamhk)
        \! + \!1}.\:
\end{align}
\hrulefill
\vspace{-2mm}
\end{figure*}
Moreover, when the pilots assigned to the IUs are orthogonal, the effective SINR is given by 
\begin{align}~\label{eq:SINLPPZF:npr}
  &\SINRppzf=\\
  &
  \!\frac{
                  (M-K_I) \rhoik\bgamhk
                 }
                 { 
     \sum_{t\in\KI}
             \rhoit (\BetaBIk\!-\!\bgamhk)
       \! + \! \sum_{\ell\in\KE}
                 \rhoel (\BetaBIk\!-\!\bgamhk)
        \! + \!1}.\nonumber
\end{align}
\end{proposition}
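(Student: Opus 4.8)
The plan is to prove Proposition~\ref{Theorem:SE:PPZF} by substituting the appropriate moment expressions into the general SINR formula~\eqref{eq:SINE:general}, exactly as was done for the PZF scheme in Proposition~\ref{Theorem:SE:PZF}. The three quantities I must evaluate are the desired-signal term $\Ex\{\qh_k^H\wik^{\PPZF}\}$, the total IU interference $\sum_{t\in\KI}\rhoit\Ex\{|\qh_k^H\wit^{\PPZF}|^2\}$, and---crucially for the PPZF-specific structure---the total EU interference $\sum_{\ell\in\KE}\rhoel\Ex\{|\qh_k^H\wel^{\PMRT}|^2\}$. Since the information precoders $\wik^{\PPZF}$ are identical to the PZF information precoders $\wik^{\PZF}$ (both use the ZF design~\eqref{eq:wipzf}), the first two terms carry over verbatim from the proof of Proposition~\ref{Theorem:SE:PZF}: namely $\Ex\{\qh_k^H\wik^{\PPZF}\}=\sqrt{(M-\tau_{\KI})\gamhk}$ and the IU interference splits into the pilot-sharing coherent part $\sum_{t\in\mathcal{P}_k}(M-\tau_{\KI})\rhoit\gamhk$ plus the estimation-error part $\sum_{t\in\KI}\rhoit(\BetaBIk-\gamhk)$.

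The only genuinely new calculation is the EU-interference term $\Ex\{|\qh_k^H\wel^{\PMRT}|^2\}$. First I would decompose $\qh_k=\hat{\qh}_k+\tilde{\qh}_k$ into its MMSE estimate and the independent estimation error. The key observation is that $\wel^{\PMRT}=\alpha_{\PMRT,\ell}\,\qB\vghatl$ lies in the orthogonal complement of $\Hohat$, and since $\hat{\qh}_k$ is a column-combination of $\Hohat$ (recall $\hat{\qh}_k=\chk\Hohat\eKI$), we have $\hat{\qh}_k^H\qB=\boldsymbol{0}$, so the estimate contributes nothing. This leaves $\Ex\{|\qh_k^H\wel^{\PMRT}|^2\}=\alpha_{\PMRT,\ell}^2\,\Ex\{|\tilde{\qh}_k^H\qB\vghatl|^2\}$. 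Because $\tilde{\qh}_k\sim\mathcal{CN}(\boldsymbol{0},(\BetaBIk-\gamhk)\qI_M)$ is independent of both $\qB$ and $\vghatl$, I would condition on $\qB$ and $\vghatl$ and use the identity $\Ex\{|\tilde{\qh}_k^H\qz|^2\}=(\BetaBIk-\gamhk)\Vert\qz\Vert^2$ to obtain $\Ex\{|\qh_k^H\wel^{\PMRT}|^2\}=\alpha_{\PMRT,\ell}^2(\BetaBIk-\gamhk)\Ex\{\Vert\qB\vghatl\Vert^2\}$. Invoking the normalization $\alpha_{\PMRT,\ell}^2=1/\Ex\{\Vert\qB\vghatl\Vert^2\}$, the bulky $\Ex\{\Vert\qB\vghatl\Vert^2\}$ factor cancels exactly, yielding the clean expression $\sum_{\ell\in\KE}\rhoel(\BetaBIk-\gamhk)$. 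Substituting these three terms into~\eqref{eq:SINE:general} produces~\eqref{eq:SINLPPZF}, and specializing to orthogonal IU pilots (which removes the $\mathcal{P}_k\setminus\{k\}$ sum and replaces $\gamhk$ by $\bgamhk$) gives~\eqref{eq:SINLPPZF:npr}.

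I expect the main obstacle to be nothing computational but rather ensuring the orthogonality argument is airtight. The cancellation $\hat{\qh}_k^H\qB=\boldsymbol{0}$ relies on $\hat{\qh}_k$ lying in the column space of $\Hohat$, which holds because $\hat{\qh}_k=\chk\Hohat\eKI$; one must state this cleanly and note that it is precisely this projection that distinguishes PPZF from PZF and explains why the EU-interference term changes from $\BetaBIk$ (under PZF) to $\BetaBIk-\gamhk$ (under PPZF)---the energy signals no longer leak through the estimated channel component, only through the estimation error. The remaining steps are routine moment evaluations already established in the excerpt.
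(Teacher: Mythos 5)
Your proposal is correct and follows essentially the same route as the paper's proof: the desired-signal and IU-interference terms carry over from the PZF case, and the EU-interference term is computed from the two facts $\hat{\qh}_k^H\qB=\boldsymbol{0}$ and the independence of $\tilde{\qh}_k$ from $\wel^{\PMRT}$, yielding $\sum_{\ell\in\KE}\rhoel(\BetaBIk-\gamhk)$. Your version merely spells out explicitly the conditioning step and the cancellation of $\Ex\{\Vert\qB\vghatl\Vert^2\}$ against $\alpha_{\PMRT,\ell}^2$, which the paper compresses into the normalization $\Ex\{\Vert\wel^{\PMRT}\Vert^2\}=1$.
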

\begin{proof}
The proof follows by the fact that, when PMRT precoding is used at the BS, we have
\begin{align}
      \sum\nolimits_{\ell\in\KE}\!\!\!
                 \rhoel\Ex\{|\qh_{k}^H \wel^{\PMRT}|^2\}
                & \!=\!
                 \sum\nolimits_{\ell\in\KE}\!\!\!
                 \rhoel (\BetaBIk\!-\!\gamhk),
\end{align}
where we used $\hat{\qh}_{k}^H \wel^{\PMRT} = 0$ ($\hat{\qh}_k^H \qB =\pmb{0}$, $\forall k\in\KI$) and $\tilde{\qh}_{k}$ is independent of $\wel^{\PMRT}$.
\end{proof}

By inspecting $\SINRppzf$ in~\eqref{eq:SINLPPZF} and~\eqref{eq:SINLPPZF:npr}, we observe that the interference from the EUs almost vanishes and a small contribution remains due to the channel estimation errors.

\begin{proposition}~\label{Theorem:QlPPZF}
With the PPZF scheme, the average harvested energy by the  $\ell\in\KE$ is approximately given by
\begin{align}\label{eq:En:PPZF}
    Q_\ell^\PPZF&\approx
         (\tau_c\!-\!\tau)
    \bigg(  \lamdal
    \Big( N \!\!+ \! \delta \XiTet  \Big)
    \Big(\!\sum\nolimits_{k\in\KI} {\Pik} \Big)
    \nonumber\\
    &\hspace{0em}
     +
    \sum\nolimits_{t\in\KE\setminus\setsl}\!
    {\Pet}
    \alpha_{\PMRT,t}^2
    \Psi_{1,\ell,t}^{\PPZF}(\bTeta)
    \hspace{-0.3em}
    +\!\!
     \sum\nolimits_{\ell'\in\setsl}
     {\Pelp}    
     \nonumber\\
    &\hspace{0em}
    \times
    \Big(
     \alpha_{\PMRT,\ell}^2 \Psi_{2,\ell,\ell'}^{\PPZF}(\bTeta)
     +
     \big(N{\lambda_{\ell}}-\gamgk\big)\Big)
      + \Sn
    \bigg),
\end{align}
where 
\begin{subequations}
\begin{align}
\Psi_{1,\ell,t}^{\PPZF}(\bTeta)&=   
    (M-\tau_{\KI})\lamdal
    \Big(\delta\big(\gamgt\XiTet + N\lamdat\XitTet\big) 
    \nonumber\\
    &\hspace{0em}
    + M\lamdat \delta^2\XitTet\XiTet + N\gamgt \Big),
\\
\Psi_{2,\ell,\ell'}^{\PPZF}(\bTeta)&= 
\cftlsq(M-\tau_{\KI})(M-\tau_{\KI}+1)\gamgl^2
\nonumber\\
    & \hspace{-4em}
    + (M\!-\!\tau_{\KI})\gamgl\lamdal\delta \Big(\XiTetlp
       \!+\!
    \cftlsq\XiTet\Big)
    \nonumber\\
    & \hspace{-4em}
    +\!(M\!-\!\tau_{\KI})^2\sqrt{\lamdal\lamdat} \delta \Big(
              \cftl\gamgl
                \! +\!\sqrt{\lamdal\lamdat} \delta \XiTetlpl
              \Big).
\end{align}    
\end{subequations}
\end{proposition}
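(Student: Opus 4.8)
The plan is to substitute the PZF information precoders \eqref{eq:wipzf} and the PMRT energy precoders \eqref{eq:wepmrt} into the received RF energy \eqref{eq:El0} and take the expectation in \eqref{eq:El}. By linearity, $Q_\ell^{\PPZF}$ splits into three pieces: the energy that leaks through the information precoders, $\sum_{k\in\KI}\Pik\Ex\{|\qg_\ell^H\wik^{\PZF}|^2\}$; the energy delivered through the PMRT precoders, $\sum_{t\in\KE}\Pet\Ex\{|\qg_\ell^H\wet^{\PMRT}|^2\}$; and the noise floor $\Sn$. Throughout I would use the decomposition $\qg_\ell=\tghatl+\bghatl+\tilde{\qg}_\ell$, where $\tghatl\sim\mathcal{CN}(\bzer,\gamgl\qI_M)$, the LoS mean $\bghatl=\sqrt{\lamdal\delta}\HTfl$ is deterministic, and $\tilde{\qg}_\ell\sim\mathcal{CN}(\bzer,(N\lamdal-\gamgl)\qI_M)$ is independent of all estimates and of $\qB$. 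The rank-one LoS structure $\bar{\qH}_2=\qa_M\qa_N^H$ collapses $\HTfl=(\aNTfl)\qa_M$, so that $\|\bghatl\|^2=\lamdal\delta M\XiTet$ and any LoS inner product reduces to the scalars $\aNTfl,\flTaN$ and hence to the $\Xi(\bTeta)$ quantities.

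For the information-precoder piece I would exploit that IUs and EUs are assigned disjoint pilots, so $\Hohat$, and therefore $\wik^{\PZF}$, is independent of $\vghatl$ and of $\tilde{\qg}_\ell$. Since the Rayleigh IU channels make $\Hohat$ unitarily invariant in distribution, $\wik^{\PZF}$ is isotropic with $\Ex\{\wik^{\PZF}(\wik^{\PZF})^H\}=\tfrac1M\qI_M$; this gives $\Ex\{|\qa_M^H\wik^{\PZF}|^2\}=\|\qa_M\|^2/M=1$, and after the zero-mean cross terms vanish one obtains $\Ex\{|\qg_\ell^H\wik^{\PZF}|^2\}=\lamdal(N+\delta\XiTet)$, exactly the value already computed for the PZF scheme in Proposition~\ref{Theorem:QlPZF}. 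This reproduces the first line of \eqref{eq:En:PPZF}.

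The core of the argument is the PMRT piece $\Ex\{|\qg_\ell^H\qB\hat{\qg}_t|^2\}$ with $\qB=\qI_M-\Hohat(\Hohat^H\Hohat)^{-1}\Hohat^H$, which I would split according to pilot reuse. The error $\tilde{\qg}_\ell$, being independent of $\qB$ and $\hat{\qg}_t$, contributes $(N\lamdal-\gamgl)\Ex\{\|\qB\hat{\qg}_t\|^2\}$, and this is cancelled by $\alpha_{\PMRT,t}^2$ to produce the $(N\lamdal-\gamgl)$ terms. For the signal part $\Ex\{|\hat{\qg}_\ell^H\qB\hat{\qg}_t|^2\}$ I would expand $\hat{\qg}_\ell=\tghatl+\bghatl$ and $\hat{\qg}_t=\tghatt+\bghatt$ into four products and evaluate term by term. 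When $t\in\KE\setminus\setsl$ the zero-mean parts $\tghatl$ and $\tghatt$ are independent, so only the ``diagonal'' products survive; the needed projection statistics are $\Ex\{\qB\}=\tfrac{M-\tauI}{M}\qI_M$ (Lemma~\ref{Lemma:B}) together with one second-order moment $\Ex\{(\qa_M^H\qB\qa_M)^2\}$, and collecting them (with the error term folded in) yields $\Psi_{1,\ell,t}^{\PPZF}(\bTeta)$. When $\ell'\in\setsl$ I would invoke Remark~\ref{rem:nopiloIU} to write $\tilde{\hat{\qg}}_{\ell'}=\kappa_{\ell',\ell}\tghatl$ and $\gamglp=\kappa_{\ell',\ell}^2\gamgl$; now the four products are correlated through the shared $\tghatl$, so the cross terms must also be retained, and the quartic form $\Ex\{(\tghatlH\qB\tghatl)^2\}=\gamgl^2(M-\tauI)(M-\tauI+1)$ appears, producing the $(M-\tauI)(M-\tauI+1)$ and $(M-\tauI)^2$ factors of $\Psi_{2,\ell,\ell'}^{\PPZF}(\bTeta)$. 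Rewriting every surviving inner product through the $\Xi(\bTeta)$ quantities gives \eqref{eq:En:PPZF}.

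The hard part will be the second-order statistics of the random projection $\qB$ and the bookkeeping of the colliding case. One must evaluate not only $\Ex\{\qB\}$ but also quartic objects such as $\Ex\{(\qa_M^H\qB\qa_M)^2\}$ and the mixed moments $\Ex\{(\tghatlH\qB\tghatl)(\qa_M^H\qB\qa_M)\}$ and $\Ex\{|\qa_M^H\qB\tghatl|^2\}$ that couple the deterministic LoS direction $\qa_M$ to the Gaussian $\tghatl$, and then keep track of exactly which cross terms survive once $\tghatt$ is replaced by $\kappa_{\ell',\ell}\tghatl$. This is also the origin of the approximation in \eqref{eq:En:PPZF}: the exact value $\Ex\{(\qa_M^H\qB\qa_M)^2\}=\tfrac{M(M-\tauI)(M-\tauI+1)}{M+1}$ is replaced by its leading term $M(M-\tauI)$, valid when $\tauI\ll M$, so the $\approx$ sign reflects the dropping of the $O(1/M)$ corrections in these projection moments.
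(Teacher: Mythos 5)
Your proposal is correct and reaches the stated expressions, but it follows a partially different route from the paper's proof in Appendix~\ref{Theorem:QlPPZF:proof}. The overall decomposition into information, PMRT (split by pilot reuse), estimation-error, and noise contributions, as well as the treatment of colliding EUs via Remark~\ref{rem:nopiloIU} and quartic projection moments such as $\Ex\{(\tghatlH\qB\tghatl)^2\}=\gamgl^2(M-\tauI)(M-\tauI+1)$, coincides with the paper's structure --- although the paper delegates the colliding term to Lemma~\ref{lemma:ghatk4}, whose projected version \eqref{eq:ExglhBgl} is explicitly left unproven (``omitted due to space limitations''), so your expansion effectively supplies that missing derivation. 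The genuine difference is the non-colliding term: the paper invokes the asymptotic trace lemma (Lemma~\ref{lemma:tracelemma}) to decouple the quadratic form, $\Ex\{\qg_{\ell}^H\qV^{\PMRT}\qg_{\ell}\}\approx(N\lamdal+\boldsymbol{\mu}_{\ell}^H\boldsymbol{\mu}_{\ell})\Ex\{\trace(\qV^{\PMRT})\}$, and then evaluates $\Ex\{\trace(\qV^{\PMRT})\}$ via Lemma~\ref{Lemma:B}; you instead expand $\qg_{\ell}^H\qB\hat{\qg}_t$ into four products and compute each moment directly, which requires only $\Ex\{\qB\}$ and the quartic moment $\Ex\{(\qa_M^H\qB\qa_M)^2\}=\frac{M(M-\tauI)(M-\tauI+1)}{M+1}\approx M(M-\tauI)$. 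Your route buys two things: it pinpoints exactly where the ``$\approx$'' in \eqref{eq:En:PPZF} originates (dropping the $O(\tauI/M)$ correction in that quartic moment), and it reproduces the coefficient $(M-\tauI)\lamdal\delta\,\gamgt\XiTet$ inside $\Psi_{1,\ell,t}^{\PPZF}(\bTeta)$ exactly as stated in the proposition, whereas carrying the paper's trace-lemma product $(N\lamdal+M\lamdal\delta\XiTet)(\gamgt+\lamdat\delta\XitTet)$ through exactly would yield $M\lamdal\delta\,\gamgt\XiTet$ for that term --- i.e., the appendix's shortcut is loose precisely where your computation is tight. What the paper's approach buys in exchange is brevity and reuse of its lemma toolkit.
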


\begin{proof}
See Appendix~\ref{Theorem:QlPPZF:proof}.
\end{proof}

 \section{Joint RIS Phase Shift Design and Power Control}~\label{Sec:optimization}
To preserve fairness among EUs, we aim to maximize the minimum of the average harvested energy by EUs, subject to the transmit power constraint at the BS, i.e., $\tilde{p}$, phase shift constraints at the IUs, and individual SINR constraints at different IUs, given by $\gamma_k$, $k\in\KI$. Accordingly, the optimization can be expressed as
\begin{subequations}\label{P:max-min P1}
	\begin{align}
		(\text{P}1):\underset{\bTeta, \Brho}{\max}\,\, &\hspace{2em}
			\underset{\ell}{\min}~
   \Ex\big\{\Phi\left(\mathrm{E}_{\ell}^i(\Brho, \bTeta)\right)\big\}
		\\
		\mathrm{s.t.} \,\,
		&\hspace{2em}\SINRk^i(\Brho) \geq \gamma_k,~\forall k\in\KI, 
		\label{QoS:IUs:cnt}
		\\
		&\hspace{2em} \sum\nolimits_{ k \in \KI} \rhoik + \sum\nolimits_{ \ell \in \KE} \rhoel \leq \tilde{\rho}, 
		\label{TotalPowerBS:cnt}
		\\
		&\hspace{2em} \theta_n\in(0,2\pi],~\forall n \in \mathcal{N}, 
		\label{phasIRT:cnt}
	\end{align}
\end{subequations}
where $\Brho = [\rhoio,\ldots,\rhoikI,\rhoeo,\ldots,\rhoeKE]$ is the power allocation vector at the BS, $\tilde{\rho} = \tilde{p}/\Sn$ is the normalized transmit power at the BS, while the superscript $i$ refers to the precoding scheme, i.e., $i\in\{\PZF, \PPZF\}$.

In the sequel, we focus on the PZF scheme and solve problem (P1). Thus, the subscript $i$ is removed. The optimization problem for the  PPZF can be easily solved using the same procedure. By inspecting~\eqref{eq:NLEH}, we notice that $\Lambda$ does not have any effect on the optimization problem. Therefore, we directly  consider $\Omega(\mathrm{E}_{\ell})$ to describe the harvested energy at EU $\ell$. In order to facilitate the derivations, we first introduce the auxiliary variable $\varrho$, and reformulate problem (P$_1$) as
\begin{subequations}\label{P:max-min P2}
	\begin{align}
		(\text{P}2):\underset{\bTeta, \boldsymbol{\rho}, \varrho}{\max}\,\, &\hspace{2em}~\varrho
		\\
		\mathrm{s.t.} \,\,
		&\hspace{2em}
		\Ex\left\{\Omega\left(\mathrm{E}_{\ell}(\boldsymbol{\rho},\bTeta)\right)\right\}\geq \varrho,\quad\forall \ell\in\KE, 
		\label{QoS:Q:cnt}
		\\
		&\hspace{2em}~\eqref{QoS:IUs:cnt},~\eqref{TotalPowerBS:cnt},~\eqref{phasIRT:cnt}.
	\end{align}
\end{subequations}
Since the logistic function is a complicated non-linear function of $\mathrm{E}_{\ell}$,we apply the following approximation
\begin{align*}
\Ex\left\{\Omega\left(\mathrm{E}_{\ell}(\Brho,\bTeta)\right)\right\}\approx \Omega\left(\Ex\left\{\mathrm{E}_{\ell}(\Brho,\bTeta)\right\}\right) =\Omega\left(Q_{\ell}(\Brho, \bTeta)\right).   
\end{align*}
This approximation is tight, especially when $\mathrm{E}_{\ell}\leq b$, as the logistic function is a convex function over this range of power input.

The inverse function $\Omega(\mathrm{E}_{\ell})$ can be written as
\begin{align}~\label{eq:EinvPsi}
    g_{\ell} (\Omega) = b - \frac{1}{a}\ln\Big( \frac{\phi-\Omega}{\Omega}\Big), \forall \ell. 
\end{align}
Therefore, by utilizing~\eqref{eq:EinvPsi},  we replace the left-hand side of~\eqref{QoS:Q:cnt} by its lower bound as  
\begin{subequations}\label{P:max-min P2}
	\begin{align}
		(\text{P}3):\underset{\bTeta, \boldsymbol{\rho}, \varrho}{\max}\,\, &\hspace{2em}~\varrho\\
		\vspace{-1em}
		\mathrm{s.t.} \,\,
		&\hspace{2em}Q_{\ell}\left(\boldsymbol{\rho},\bTeta\right)\geq
  g_{\ell}(\varrho),~\forall \ell,
		\label{QoS:Q:cnt2:P2}
		\\		&\hspace{2em}~\eqref{QoS:IUs:cnt},~\eqref{TotalPowerBS:cnt},~\eqref{phasIRT:cnt}.~\label{eq:c2:max-min P2}
	\end{align}
\end{subequations}

The coupling between $\bTeta$ and $\Brho$ in constraint~\eqref{QoS:Q:cnt2:P2} introduces non-convexity into problem (P3). To tackle this issue, we apply the widely used classic BCD algorithm from~\cite{Wu:JSAC:2020,Gao:TWC:2023}. Therefore, in order to apply the BCD algorithm to solve problem (P3), we partition the optimization variables into two blocks, 1) transmit powers at the BS, i.e., $\Brho$, 2) phase shifts at the RIS, i.e., $\bTeta$. Then, we can maximize the objective function in problem (P3)  by iteratively optimizing each of the above two blocks in one iteration with the other block is fixed, and iterating the above until convergence is reached. 

\subsection{Optimization of $\Brho$ for Given $\bteta$ }~\label{susec:rho}
For fixed $\bTeta=\bTetabar$, problem (P{3}) reduces to 
\begin{subequations}\label{eq:OP:P3:1}
	\begin{align}
		(\text{P}3.1):\underset{\Brho,~\varrho}{\max}\,\,&\hspace{0.2em}~\varrho
		\\
		\mathrm{s.t.} \,\,
		&\hspace{ 0.2em}
         f_{\ell,1}(\Brho) \!\geq \!
   \frac{g_{\ell}(\varrho)}{{\Sn(\tau_c-\tau)}},~\forall \ell,~\label{P7:b}\\ 
		&\hspace{ 0.5em}\eqref{QoS:IUs:cnt}, \eqref{TotalPowerBS:cnt},
	\end{align}
\end{subequations}
where 
\begin{align}~\label{eq:fell1rho}
    f_{\ell,1}(\Brho)\!=&
      \lamdal
    \Big(\! N \!\!+\!  \delta \XiTetbar  \!\Big)\!
    \Big(\!\sum\nolimits_{k\in\KI}\!\! {\rhoik} \Big)
    \!+\!\!
    \sum\nolimits_{t\in\KE\setminus\{\ell\}}\!\! 
    {\rhoet}
    \nonumber\\
    &
    \hspace{-3em}
    \!\times\! \Psi_{1,\ell,t}^{\PZF}(\bTetabar)
     \!+\!{\rhoelp}     
     \Big(\alpha_{\MRT,\ell}^2{\Psi}_{2,\ell,\ell}^{\PZF}(\bTetabar)
     \!\!+\!\big(N{\lambda_{\ell}}\!-\!\gamgk\big)
     \!\Big)\!.
      \end{align}
Problem (P3.1) is non-convex due to non-convexity of $g_{\ell}(\varrho)$. To make the problem tractable, the SCA can be applied to transform the non-convex constraint to convex approximation expressions.  Therefore, by exploiting the first order Taylor expansion, a convex upper bound on $g_{\ell}(\varrho)$, denoted by $\tilde{g}_{\ell}(\varrho,\varrho^{(r)})$, can be obtained  as
\begin{align}~\label{eq:Cnv:upbound}
   g_{\ell}(\varrho) 
      & \leq   
   \tilde{g}_{\ell}(\varrho,\varrho^{(r)})\triangleq b -\frac{1}{a}\Big(
   \ln\Big( \frac{\phi-\varrho}{\varrho^{(r)}}\Big) \!-\! \frac{\varrho\!-\!\varrho^{(r)}}{\varrho^{(r)}} \Big),
\end{align}
where $\varrho^{(r)}$ denotes given point during the $r$-th iteration of the SCA method. Accordingly, by exploiting the convex upper bound~\eqref{eq:Cnv:upbound}, problem (P3.1) can be approximated as  
\begin{subequations}\label{eq:OP:P3:2}
	\begin{align}
		(\text{P}3.2):\underset{\Brho,~\varrho}{\max}\,\,&\hspace{0.2em}~\varrho
		\\
		\mathrm{s.t.} \,\,
		&\hspace{ 0.2em}
         f_{\ell,1}(\Brho) \!\geq \!
   \frac{\tilde{g}_{\ell}(\varrho,\varrho^{(r)})}{{\Sn(\tau_c-\tau)}},~\forall \ell,~\label{P7:b}\\ 
		&\hspace{ 0.5em}\eqref{QoS:IUs:cnt}, \eqref{TotalPowerBS:cnt}.
	\end{align}
\end{subequations}
The problem (P3.2) can be solved using standard convex problem solvers such as CVX~\cite{cvx}. Therefore, for a given $\varrho^{(r)}$, problem (P3.2) is iteratively solved  until the fractional reduction of the objective function's value  in~\eqref{eq:OP:P3:2} falls below a predefined threshold, $\epsilon$.  We outline the SCA-based algorithm to solve the power allocation problem in \textbf{Algorithm~\ref{alg2}}.
\begin{algorithm}[t]
\caption{Optimization of $\Brho$ for given $\bteta$}
\begin{algorithmic}[1]
\label{alg2}
\STATE \textbf{Initialize}: Set iteration index $r=0$, and initialize $\Brho^{(0)}$.
\REPEAT
\STATE Update $r=r+1$.
\STATE Solve (P3.2) to obtain its optimal solution $\varrho^{*}$ and $\Brho^{*}$.
\STATE Update $\varrho^{(r)}= \varrho^{*}$.
\UNTIL{convergence}
\end{algorithmic}
\end{algorithm}
\setlength{\textfloatsep}{0.4cm}

\subsection{Optimization of $\bTeta$ for Given $\Brho$}~\label{susec:tetha}
Before proceeding, we note that the expression for $Q_\ell^{\PZF}$, given in~\eqref{eq:En:PZF:final}, is complicated, making the optimization problem intractable. To address this issue, by ignoring the terms related to the product of two path-loss exponents, we simplify $\Psi_{1,\ell,t}^{\PZF}(\bTeta$ and $\Psi_{2,\ell,\ell}^{\PZF}(\bTeta)$ as 
\vspace{0.3em}
\begin{align}~\label{eq:Exq4:approximat}
\Psi_{1,\ell,t}^{\PZF}(\bTeta)
&\approx
M\lamdal\gamgt\Big( \delta\XiTet+N\Big),
\nonumber\\
\Psi_{2,\ell}^{\PZF}(\bTeta) 
&\approx M(M+1)\gamgl\Big(\gamgl  
                       +
         2\lamdal\delta \XiTet\Big).
\end{align}

Now, by invoking~\eqref{eq:Exq4:approximat} and~\eqref{eq:EinvPsi}, and introducing an auxiliary variable $\varrho$, for given $\Brho$, problem (P3) reduces to
\begin{subequations}\label{eq:OP:P3:3}
	\begin{align}
		(\text{P}3.3):\underset{\bTeta,~\varrho}{\max}\,\, &\hspace{0.5em}~\varrho
		\\
		\mathrm{s.t.} \,\,
		&\hspace{0.5em}
  f_{\ell,2}(\bTeta)
          \!\geq \!
           \hslash_{\ell}(\varrho),~\forall \ell, 
		\label{eq:OP:P3:3:b}
		\\
	&\hspace{0.5em} \psi_n\in(0,2\pi],~\forall n \in \mathcal{N},\label{eq:OP:P3:3:c}
	\end{align}
\end{subequations}
where 
\begin{align}
    f_{\ell,2}(\bTeta) 
         =&
         \sum\nolimits_{t\in\KE\setminus\ell}
		{\rhoet}
    \frac{ \lamdal\gamgt\left(\delta \vert \flTaN\vert^2 \!+\! N \right)}
         { \lamdat\delta \vert \ftTaN\vert^2 \!+\! \gamgt }
         \nonumber\\
         &
         \!+
     {\rhoel}
     \frac{ (M\!+\!1)\gamgl\left(2\lamdal \delta \vert \flTaN\vert^2 \!+\! \gamgl\right)}
          { \lamdal\delta \vert \flTaN\vert^2   \!+\! \gamgl }
          \nonumber\\
          &\hspace{0em}
         \!+\sum\nolimits_{k\in\KI}{\rhoik} \lamdal\delta\vert \flTaN\vert^2,
\end{align}
and $\hslash_{\ell}(\varrho)=\frac{g_{\ell}(\varrho)}{\Sn(\tau_c\!-\!\tau)}-N\lamdal\sum_{k\in\KI}{\rhoik} - {\rhoel}\big(N\lamdal-\gamgk\big)
$. Note that the problem (P3.3) falls under the category of multiple ratio fractional programming (FP) problems and generally is a challenging non-convex problem~\cite{Shen:TSP:2018}. To tackle the multiple-ratio concave-convex fractional problem (CCFP), we adopt the quadratic transform, which decouples the numerator and denominator of each ratio term. Thus, the multiple-ratio CCFP is converted into a sequence of convex optimization 
 problems~\cite{Shen:TSP:2018}. Define  $\qu_{j N} = \diag(\bar{\qf}_{j}^H)\qa_N$, $j\in\{\ell,t\}$. By using the quadratic transform on the fractional term, we further recast (P3.3) as
\begin{subequations}\label{eq:OP:P3:3}
	\begin{align}
		(\text{P}3.4):\underset{\bteta,~\qy,~\varrho}{\max}\,\, &\hspace{0.5em}~\varrho
		\\
		\mathrm{s.t.} \,\,
		&\hspace{0.5em}
		f_{\ell,3} (\bteta,  \qy)
          \geq \hslash_{\ell}(\varrho),~\forall \ell\in\KE, 
		\label{eq:OP:P3:4:b}
		\\
	&\hspace{0.5em} \psi_n\in(0,2\pi],~\forall n \in \mathcal{N},\label{eq:OP:P3:4:c}
	\end{align}
\end{subequations}
where
\begin{align}~\label{eq:fl2tety}
    f_{\ell,3}(\bteta, \qy) 
         =&
         \!\!\!\!\!\sum_{t\in\KE\setminus\ell}\!\!	\!\!		
    {2\sqrt{ {\rhoet}\lamdal\gamgt} \Real\big\{ y_{\ell t}^*\sqrt{\delta} \bteta^H\qu_{\ell N} +\bar{y}_{\ell t}^*\sqrt{N} \big\}}        
         \nonumber\\
         &\hspace{-3.3em} 
                  -
         \!\sum\nolimits_{t\in\KE\setminus\ell}\!\!	
          { (\vert y_{\ell t} \vert^2 + \vert \bar{y}_{\ell t} \vert^2) \Big(\lamdat\delta \vert \bteta^H\qu_{t N}\vert^2 \!+\! \gamgt\Big) }
         \nonumber\\
         &\hspace{-3.3em}      
         +
          { 2\sqrt{{\rhoel}(M\!+\!1)\gamgl} 
          \Real\{y_{\ell \ell}^* \sqrt{ 2\lamdal \delta} \bteta^H\qu_{\ell N}  \!+\! \sqrt{\gamgl}\} }
            \nonumber\\
         &\hspace{-3em}
                 -
           (\vert{y}_{\ell \ell}\vert^2\!+\vert\bar{y}_{\ell \ell}\vert^2)
                \Big( { \lamdal\delta \vert \bteta^H\qu_{\ell N}\vert^2   \!+\! \gamgl }\Big)
                \nonumber\\
         &\hspace{-3.3em}  
         +\!\!
         \sum\nolimits_{k\in\KI} {\rhoik}\lamdal\delta\vert \bteta^H\qu_{\ell N}\vert^2, \quad\ell=1,\ldots, K_E,
\end{align}
while $\qy$ refers to the collection of auxiliary variables $\{y_{\ell1},\ldots, y_{\ell,K_E},\bar{y}_{\ell1},\ldots,\bar{y}_{\ell,K_E}\}$ for the quadratic transform. 

It can be observed that constraint~\eqref{eq:OP:P3:4:b} involves multiple variables, which makes it non-convex. To tackle this issue, we solve the problem (P3.4) in an iterative fashion over $\bteta$ and $\qy$.  According to the FP transform process in~\cite{Shen:TSP:2018}, since $f_{\ell,3}(\bteta, \qy)$ is a quadratic concave function of $\qy$ for a given $\bteta$, the optimal auxiliary variable $y_{\ell i}$ ($\bar{y}_{\ell i}$), $i\in\{t,\ell\}$, is obtained by letting $\partial f_{\ell,3} (\bteta, \qy)/ \partial y_{\ell i} =0$ ($\partial f_{\ell,3} (\bteta, \qy)/ \partial \bar{y}_{\ell i } =0$), yielding
\begin{subequations}~\label{eq:yell}
 \begin{align}
   y_{\ell,t}^{\opt} & =  \frac{ \sqrt{ {\rhoet}\lamdal\gamgt\delta} \tetulN}
         { \lamdat\delta \vert \tetutN\vert^2 \!+\! \gamgt },t\in\KE\setminus \ell,
         \\
         \hspace{0em}   
          \bar{y}_{\ell,t}^{\opt}  &=  \frac{ \sqrt{ {\rhoet}\lamdal\gamgt N} }
         { \lamdat\delta \vert \tetutN\vert^2 \!+\! \gamgt },
         ~t\in\KE\setminus \ell,\\
      y_{\ell,\ell}^{\opt} & = 
      \frac{ \sqrt{2{\rhoel}(M\!+\!1)\gamgl \lamdal \delta} \tetulN }
          {  \lamdal\delta \vert \tetulN\vert^2   \!+\! \gamgl }, \\
         \hspace{0em}
           \bar{y}_{\ell,\ell}^{\opt}  &= 
      \frac{ \gamgl\sqrt{{\rhoel}(M\!+\!1)}  }
          {  \lamdal\delta \vert \tetulN\vert^2   \!+\! \gamgl }.
\end{align}   
\end{subequations}
Then, the remaining problem is to optimize $\bteta$ for a given $\qy$. Now, by substituting~\eqref{eq:yell} into~\eqref{eq:fl2tety}, the optimization problem for $\bteta$, under given $\qy$, can be represented as 
\begin{subequations}\label{eq:OP:P3:5}
	\begin{align}
		(\text{P}3.5):\underset{\bteta,~\varrho}{\max}\,\, &\hspace{0.5em}~\varrho
		\\
		\mathrm{s.t.} \,\,
		&\hspace{0.5em}
		f_{\ell,4} (\bteta)
          \!\geq \hslash_{\ell}(\varrho),~\forall \ell\in\KE, 
		\label{eq:OP:P3:5:b}
		\\
	&\hspace{0.5em} \vert\theta_n\vert=1,\forall n \in \mathcal{N},\label{eq:OP:P3:5:c}
	\end{align}
\end{subequations}
where 
 $ f_{\ell,4}(\bteta) 
    = -\bteta^H \qW_{\ell} \bteta + 2\Real\{\bteta^H \qv_{\ell}\} +c_{\ell,3}$, with
\vspace{0.3em}
\begin{align}
\qW_{\ell} =&
 \sum_{t\in\KE}\!\!\lamdat\delta
         { (\vert {y}_{\ell t}^{\opt} \vert^2 \!+ \!\vert \bar{y}_{\ell t}^{\opt} \vert^2)   \qu_{tN}\qu_{tN}^H }  
                   \!-\!
         \lamdal\delta\sum_{k\in\KI}\!\! {\rhoik}\qu_{\ell N}\qu_{\ell N}^H, 
\nonumber\\
\qv_{\ell} =&          \sum\nolimits_{t\in\KE\setminus\ell}		
   (y_{\ell t}^{\opt})^* {\sqrt{ {\rhoet}\lamdal\gamgt\delta}  \qu_{\ell N} }
   \nonumber\\
         &
         +\!    
     (y_{\ell \ell}^{\opt})^*
          { \sqrt{2{\rhoel}(M\!+\!1)\gamgl \lamdal \delta} 
           \qu_{\ell N} },
\nonumber\\
 c_{\ell,3} \!= \!& \sum\nolimits_{t\in\KE\setminus\ell}\!\!
         { 2\bar{y}_{\ell t}^{\opt}\sqrt{{\rhoet} \lamdal\gamgt N} }          +\!\!  
              2\bar{y}_{\ell \ell}^{\opt}\gamgl\sqrt{ {\rhoel}(M\!+\!1)}
              \nonumber\\
         &
                          - \!\!\!\sum\nolimits_{t\in\KE}\!	
          { (\vert y_{\ell t}^{\opt} \vert^2 + \vert \bar{y}_{\ell t}^{\opt} \vert^2)  \gamgt }.
\end{align}

Note that although $\qu_{tN}\qu_{tN}^H$ and $\qu_{\ell N}\qu_{\ell N}^H$ for all $t$ and $\ell$ are positive-definite matrices, $\qW_{\ell}$ is not a positive-definite matrix in general. To address this issue, by applying~\cite[Lemma 1]{Huang:TWC:2019}, the optimization problem  (P4) can be expressed as
\begin{subequations}\label{eq:OP:P3:6}
	\begin{align}
		(\text{P}3.6):\underset{\bteta,~\varrho}{\max}\,\, &\hspace{0.5em}~\varrho
		\\
		\mathrm{s.t.} \,\,
		&\hspace{0.5em}
		f_{\ell,5} (\bteta)
          \!\geq \hslash_{\ell}(\varrho),~\forall \ell\in\KE, 
		\label{eq:OP:P3:6:b}
		\\
	&\hspace{0.5em} \vert\theta_n\vert=1,\forall n \in \mathcal{N},\label{eq:OP:P3:6:c}
	\end{align}
\end{subequations}
where  $\qM_{\ell} = \lambda_{\max}(\qW_{\ell})\qI_N$ and
\begin{align}~\label{eq:fl3tet}
    f_{\ell,5}(\bteta)  &= 
    -\bteta^H \qM_{\ell} \bteta
        +2\Real\left\{ \bteta^H \left(\qM_{\ell}-\qW_{\ell}\right) \bteta^{(r)} \right\}
         \nonumber\\
        &\hspace{2em}
        + 2\Real\{\bteta^H \qv_{\ell}\} +c_{\ell,4},
 \end{align}   
where $c_{\ell,4} = c_{\ell,3} -  (\bteta^{(r)})^H\left(\qM_{\ell}-\qW_{\ell}\right) \bteta^{(r)}$. Since the matrix $\qM_{\ell}$ is a positive-definite matrix,  $f_{\ell,5}(\bteta)$ is a quadratic concave function of $\bteta$. Therefore, Problem (P3.6) is a quadratically constrained quadratic program. However, it is still non-convex due to non-convexity of $\hslash_{\ell}(\varrho)$. To make the problem tractable, the SCA can be applied to transform the non-convex constraint to convex approximation expressions.  Therefore, by exploiting the first order Taylor expansion, a convex upper bound $\hslash_{\ell}(\varrho)$, denoted by $\tilde{\hslash}_{\ell}(\varrho,\varrho^{(r)})$, can be obtained by using the convex upper bound of $g_{\ell}(\varrho)$ in~\eqref{eq:Cnv:upbound}. Thus, problem (P3.6) is simplified as the following problem
\begin{subequations}\label{eq:OP:P3:7}
	\begin{align}
		(\text{P}3.7):\underset{\bteta,~\varrho}{\max}\,\, &\hspace{0.5em}~\varrho
		\\
		\mathrm{s.t.} \,\,
		&\hspace{0.5em}
		          \barbteta \bar{\qW}_{\ell} \barbteta^H\leq 
            c_{\ell,4}-\tilde{\hslash}_{\ell}(\varrho,\varrho^{(r)}),~\forall \ell\in\KE, 
		\label{eq:OP:P3:7:b}
		\\
	&\hspace{0.5em} \vert\theta_n\vert=1,\forall n \in \mathcal{N},\label{eq:OP:P3:7:c}
	\end{align}
\end{subequations}
where $\bar{\qW}_{\ell}\in \mathbb{C}^{(N+1)\times (N+1)}$ and
$$\bar{\qW}_{\ell} =\begin{bmatrix}
\qM_{\ell}     & -\left(\qM_{\ell}\!-\!\qW_{\ell}\right) \bteta^{(r)}\!-\!\qv_{\ell}\\
\left(\qM_{\ell}\!-\!\qW_{\ell}\right)^H \big(\bteta^{(r)}\big)^{\!\!H}\!\!-\!\!\qv_{\ell}^H  & 0
\end{bmatrix},$$ and $\barbteta^H = [\bteta^H, 1]$. We denote by $\barbteta^{\star}$ the optimal solution of the problem (P{3.7}). Then, a feasible solution for phase shifter vector can immediately be obtained by setting $\bteta^{\star} = [\barbteta^{\star}/\barbteta^{\star}_{N+1}]_{(1:N)}$.

Now, we propose a penalty-based iterative algorithm for solving the optimization problem (P3.7). To this end, we first transform problem (P3.7) into a more tractable form.  To facilitate the design.  we define $\qQ=\barbteta\barbteta^H$, where $\qQ\succeq \boldsymbol{0}$ and  $\rank(\qQ) = 1$. Then, problem (P3.7) can be reformulated as
\begin{subequations}\label{eq:OP:P3:8}
	\begin{align}
		(\text{P}{3.8}):\underset{\qQ\succeq \boldsymbol{0},~\varrho}{\max}\,\, &\hspace{-0em}~\varrho
		\\
		\mathrm{s.t.} \,\,
		&\hspace{ -0em}
      \trace(\bar{\qW}_{\ell}\qQ)
          \leq 
            c_{\ell,4}-\tilde{\hslash}_{\ell}(\varrho,\varrho^{(r)}), 
          ~\forall \ell\in\KE,\label{eq:OP:P3:8:b}\\
	&\hspace{-0em} [\qQ]_{n,n}=1,~\forall n=1,\ldots,N+1,\label{eq:OP:P3:8:c}\\
   &\rank(\qQ) = 1.\label{eq:OP:P3:8:d}
	\end{align}
\end{subequations}

To deal with the rank-one constraint, we apply a double-layer penalty-based iterative algorithm to find a near-optimal rank-one solution. To this end, we provide an equivalent difference-of-convex function representation of~\eqref{eq:OP:P3:8:d} as
\vspace{0.3em}
\begin{align}~\label{eq:DC}
    \rank(\qQ) = 1 
 \Leftrightarrow  \Vert \qQ\Vert_{*} -\Vert \qQ\Vert_{2} =0,
\end{align}
where $\Vert \qQ\Vert_{*} = \sum_{i} \sigma_i(\qQ)$ and $\Vert \qQ\Vert_{2} = \sigma_1(\qQ)$. We note that for any positive semidefinite matrix $\qQ$, we have $\Vert\qQ\Vert_{*} -\Vert \qQ\Vert_{2}\geq 0$, where the equality holds if and only if $\qQ$ is rank-one matrix. Now, we apply the penalty method to solve problem (P{3.8}). By invoking~\eqref{eq:DC}, we obtain the following optimization problem
\begin{subequations}\label{eq:OP:P3:9}
	\begin{align}
		(\text{P}{3.9}):\underset{\qQ,~\varrho}{\min}\,\, &\hspace{0.5em}~-\varrho + \frac{1}{\eta}\big(\Vert \qQ\Vert_{*} -\Vert \qQ\Vert_{2}\big)
		\\
		\mathrm{s.t.} \,\,
		&\hspace{ 0.5em}
      \eqref{eq:OP:P3:8:b},\eqref{eq:OP:P3:8:c},
	\end{align}
\end{subequations}
where the equality constraint~\eqref{eq:OP:P3:8:d} is relaxed to a penalty term added to the objective function, and $\eta>0$ is the penalty
factor which penalizes the objective function if $\qQ$ is not rank-one. When $\frac{1}{\eta} \to \infty$ ($\eta \to 0$), the solution $\qQ$ of problem (P{3.9}) always satisfies the equality constraint~\eqref{eq:OP:P3:8:d}, i.e., problems (P{3.8}) and (P{3.9}) become equivalent. This is due to the fact that if the rank of $\qQ$ at $\frac{1}{\eta} \to \infty$ is larger than one, the corresponding objective function value will be infinitely large. Therefore, we can have rank-one solutions satisfying the equality constraint~\eqref{eq:OP:P3:8:d} to render the penalty term to zero, which in turn yields a finite objective function value. Now, the remaining non-convexity of (P{3.9}) is the non-convex objective function.  For given $\qQ^{(r)}$ in the $r$-th iteration of the SCA method, using first-order Taylor expansion, a convex upper bound for the penalty term can be obtained as follows
\begin{align}~\label{eq:DCLupbound}
  \Vert \qQ\Vert_{*} -\Vert \qQ\Vert_{2} \leq \Vert \qQ\Vert_{*} -\bar{\qQ}^{(r)},
\end{align}
where $\bar{\qQ}^{(r)} =  \Vert \qQ^{(r)} \Vert_{2} + \trace\big(\qv_{\max}(\qQ^{(r)}) \qv_{\max}^H(\qQ^{(r)}) \big(\qQ-\qQ^{(r)}\big)\big)$. Accordingly, by exploiting the upper bound obtained, problem (P{3.9}) can be approximated by the following convex optimization problem:
\vspace{-0.3em}
\begin{subequations}\label{eq:OP:P3:10}
	\begin{align}
		(\text{P}{3.10}):\underset{\qQ,~\varrho}{\min}\,\, &\hspace{0.5em}~-\varrho + \frac{1}{\eta}\Big(\Vert \qQ\Vert_{*} - \bar{\qQ}^{(r)} \Big)
		\\
		\mathrm{s.t.} \,\,
		&\hspace{ 0.5em}
      \eqref{eq:OP:P3:8:b},\eqref{eq:OP:P3:8:c}.
	\end{align}
\end{subequations}

\begin{algorithm}[t]
\caption{Proposed Double-Layer Penalty-Based Algorithm for Solving~\eqref{eq:OP:P3:3} }
\begin{algorithmic}[1]
\label{alg:Alg1}
\STATE 
 \textbf{Initialize}: Initialize $\bteta^{(0)}$ to a feasible value as well as penalty factor $\eta$.\\
\STATE   \textbf{Repeat: outer layer}
\STATE   \hspace{0.4 em} Set iteration index $i=0$ for inner layer.
\STATE   \hspace{0.4 em} For given $\bteta^{(0)}$, update the auxiliary variables $y_{\ell i}$ ($\bar{y}_{\ell i}$) $i\in\{t,\ell' \}$ by~\eqref{eq:yell}.
\STATE   \hspace{0.4em} \textbf{Repeat: inner layer}
   
\STATE   \hspace{1.2em} Update $\qQ^{(i)}$ by solving (P{3.10}). 
\STATE   \hspace{1.2em} $i=i+1$.
\STATE   \hspace{0.4em} \textbf{Until:} The fractional reduction of the objective function value falls below a predefined threshold $\epsilon$.
\STATE   \hspace{0.4 em} Obtain $\bteta$ via SVD, where  $\qQ=\barbteta\barbteta^H$.
\STATE  \hspace{0.4 em} Update $\bteta^{\star} = [\barbteta^{\star}/\barbteta^{\star}_{N+1}]_{(1:N)}$.
\STATE   \hspace{0.4 em} Update $\bteta^{(0)}$ with the current solution $\bteta^{\star}$.
\STATE   \hspace{0.4 em} Update $\eta=\kappa_1 \eta$.
\STATE   \textbf{Until:} Quadratic transform iterations converge.
\STATE   \textbf{Return:} $\bteta^{\star}$.
\end{algorithmic}
\end{algorithm}

Based on the above discussion, the proposed double-layer penalty-based procedure is outlined in \textbf{Algorithm~\ref{alg:Alg1}}. The termination of the proposed algorithm depends on the violation of the equality constraints, which is expressed as $\Vert \qQ\Vert_{*} -\Vert \qQ\Vert_{2} \leq \epsilon_0$, where $\epsilon_0$ denotes the maximum tolerable value. Upon reducing $\eta$, the equality constraint will be satisfied at an accuracy of $\epsilon_0$. For given $\eta$ in the inner layer, the objective function of~\eqref{eq:OP:P3:10} is monotonically non-increasing over each iteration and the harvested energy is upper bounded due to limit transmit power at the BS. Therefore, the proposed double-layer penalty-based algorithm is guaranteed to converge to a stationary point of the original problem~\eqref{P:max-min P1}. 

\vspace{-0.5em}
\subsection{Overall Algorithm and Complexity Analysis}
Putting together the solution for $\Brho$ and $\bteta$ presented respectively in Sections~\ref{susec:rho} and~\ref{susec:tetha}, our proposed algorithm for maximizing the minimum harvested energy by EUs is summarized in \textbf{Algorithm~\ref{alg3}}. Since at each iteration $i$, the proposed algorithm increases the value of the minimum harvested energy, the convergence of the algorithm in the value of the objective function is guaranteed. Indeed, the objective function is upper-bounded over the feasible set of~\eqref{P:max-min P1}, and thus cannot increase indefinitely. 

The complexity of \textbf{Algorithm~\ref{alg3}} is dominated by \textbf{Algorithm~\ref{alg:Alg1}}, whose complexity is determined by the complexity of iteratively solving the SDP problem~\eqref{eq:OP:P3:10}. An SDP problem with a $a\times a$ semidefinite matrix and $b$ SDP constraints is solved with complexity $\mathcal{O}\left(\sqrt{a}\left(a^3b + a^2b^2+b^3\right)\right)$ by interior-point methods~\cite{Huang:TSP:2010}. For problem~\eqref{eq:OP:P3:10}, we have $a=N+1$ and $b=K_E+1$. It is also observed in our simulations that the convergence rate of the iterative procedure in \textbf{Algorithm~\ref{alg2}} is fast (less than $10$ iterations), hence, the overall complexity of \textbf{Algorithm~\ref{alg:Alg1}} is $\mathcal{O}\left(I_{P}\sqrt{N}\left(N^3K_E + N^2K_E^2+K_E^3\right)\right)$, where $I_{P}$ denotes the number of iterations required for the convergence of \textbf{Algorithm 2}. Thus, the overall complexity of \textbf{Algorithm~\ref{alg3}} is $\mathcal{O}\big(I_{B}I_{P}\sqrt{N}\left(N^3K_E + N^2K_E^2+K_E^3\right)\big)$, where $I_{B}$ denotes the number of iterations required for the convergence of \textbf{Algorithm 3}.

\begin{algorithm}[t]
\caption{Block Coordinate Descent Algorithm for Solving~\eqref{P:max-min P2} }
\begin{algorithmic}[1]
\label{alg3}
\STATE \textbf{Initialize}: Set maximum error tolerance $\epsilon$, iteration index $i=0$, maximum number of iterations $I_{\max}$, and initialize $\bteta^{(0)}$ and $\Brho^{(0)}$.
\STATE \textbf{Output}: The optimal solution $\bteta^{*}$ and $\Brho^{*}$
\REPEAT
\STATE For given $\bteta^{(i)}$, update $\Brho^{(i+1)}$ and $\varrho^{(i+1)}$ by applying \textbf{Algorithm 1} to solve problem (P3.2).
\STATE For given $\Brho^{(i+1)}$, update $\bteta^{(i+1)}$ by applying \textbf{Algorithm 2} to solve problem (P3.10). 

\STATE $i=i+1$.
\UNTIL{$i\geq I_{\max}$ or $\frac{\varrho^{(i+1)} - \varrho^{(i)}}{\varrho^{(i+1)}}\leq \epsilon $.}
\STATE \textbf{Return}: $\bteta^{*}=\bteta^{(i+1)}$ and $\Brho^{*}=\Brho^{(i+1)}$.
\end{algorithmic}
\end{algorithm}
\setlength{\textfloatsep}{0.4cm}

 \section{Numerical Results}~\label{Sec:numer}
We now verify the correctness of our analytical results and the performance of the proposed resource allocation algorithm.
\subsection{Simulation Setup}
A three-dimensional coordinate setup is considered, where the BS and RIS are located at $(0,0,0)$ and $(0,\dBI,0)$, where $\dBI$ is the distance between them and is set to $\dBI=10$ m. There are $K_E$ EUs randomly and uniformly distributed within the charging zone between the BS and the RIS, which is a semicircular area with the RIS at its center and a radius of $r_E=5$ m~\cite{Xu:TCOM:2022}. We assume that there are $K_I=5$ IUs (IU cluster) randomly and uniformly distributed inside a circular area with a center at $(50,0,0)$ with a radius of $r_I=10$ m. 

The distance-dependent path loss model from~\cite{Wu:TWC:2019,Xu:TCOM:2022} is used, where the large-scale coefficients are modeled as $L(d) = C_0\big({d}/{d_0}\big)^{-\kappa}, ~ L\in\{\BetaBIk, \BetaREl, \beta\}$, where $d\in\{\dBIUk,\dIEUk,\dBI\}$, with $\dBIUk$ and $\dIEUk$, being the distances between the BS and IU $k \in\KI$ and between the RIS and EU $\ell\in\KE$, respectively; $C_0$ is the path loss at the reference distance $d_0 = 1$ m and $\kappa$ denotes the path loss exponent. Here, we assume $C_0=-30$ dB and that the path-loss exponents of the BS-RIS link, RIS-EU link, and BS-IU link are set as $\kappa_{\BI}=2.2$, $\kappa_{\BIUk}=3.5$, and $\kappa_{\IEUl}=2.8$, respectively~\cite{Wu:TWC:2019}. Unless specified otherwise, we set $\tau_c=196$ and $\delta=3$ dB. The number of symbols for channel estimation is $\tau=K_I+K_E-\PRF$, where $\PRF$ denotes the pilot reuse factor. We set $\PRF = \PRFEU + \PRFIU$, where $\PRFEU$ and  $\PRFIU$ denote the pilot reuse factors for among EUs and IUs, respectively. For example, $\PRFEU=2$ ($\PRFIU=2$) indicates that $3$ out of $K_E$ EU ($3$  out of $K_I$ IU) use the same pilot sequence for channel estimation phase. Then, with $\PRFEU=\PRFIU=2$, $\tau=K_I+K_E-4$ orthogonal pilot sequences are used in the network. The transmit power of the pilot signal for each user is $p=25$ dBm, the transmit power of BS is $40$ dBm,  and the noise power is $\Sn=-94$ dBm.
The non-linear EH parameters are set as $a=2400$ $\frac{1}{\text{Watt}}$, $b=0.003$ Watt, and $\phi=0.02$ Watt~\cite{Xu:TCOM:2022}. The convergence tolerance of the proposed algorithms is set to $10^{-5}$. 

To investigate the effectiveness of the developed algorithm in this paper, we consider two baseline schemes: 1) \textbf{DFT-Phase, EPA} scheme, where the RIS employs discrete Fourier transform (DFT) codebook-based precoding at the RIS, while equal power allocation (EPA) is applied to allocate the available power between the IUs and EUs. 2) \textbf{DFT-Phase, OPA} scheme, where a DFT-based phase shifts design is applied on the RIS, and we optimize the allocated powers to EUs and IUs.

\begin{figure}
\centering
\begin{subfigure}[a]{0.5\textwidth}
\centering
\includegraphics[width=90mm]{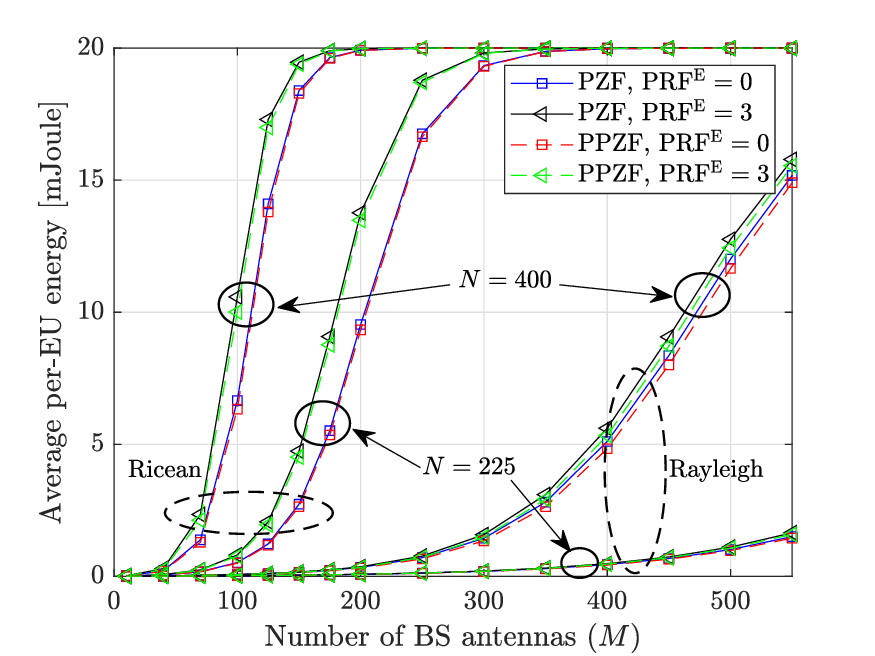}
\vspace{-0.3em}
\caption{ Average harvested energy}
\label{fig:2}
\end{subfigure}
\begin{subfigure}[a]{0.5\textwidth}
\centering
\includegraphics[width=90mm]{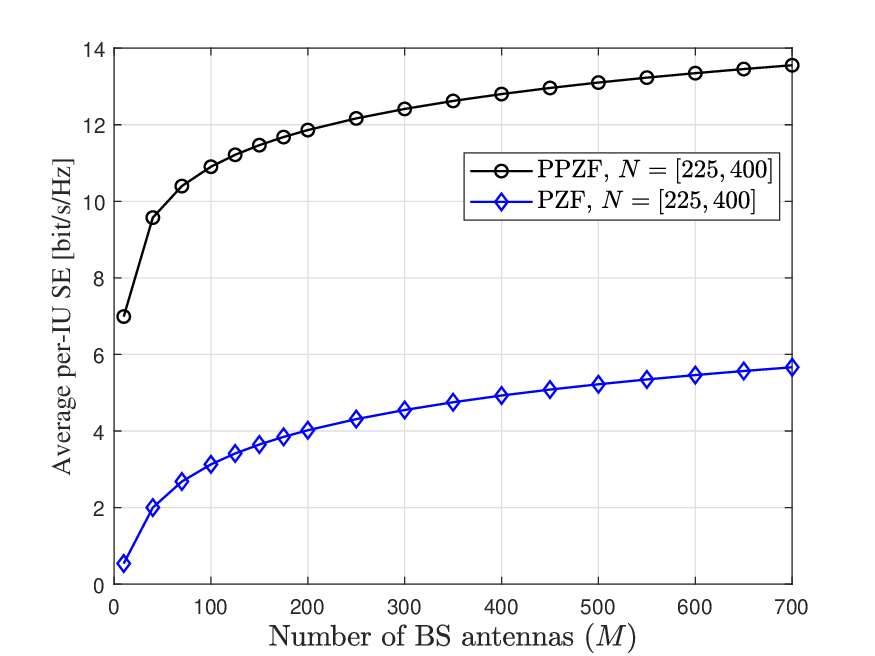}
\vspace{-0.3em}
\caption{SE}
\label{fig:3}
\end{subfigure}
\vspace{0em}
\caption{  Performance of the PZF and PPZF versus the number of BS antennas ($K_I=5$, $K_E= 10$, $\PRFIU=0$).} \label{fig:Fig2r}
\vspace{-0.7em}
\end{figure}

\begin{figure}
\centering
\begin{subfigure}[a]{0.5\textwidth}
\centering
\includegraphics[width=90mm]{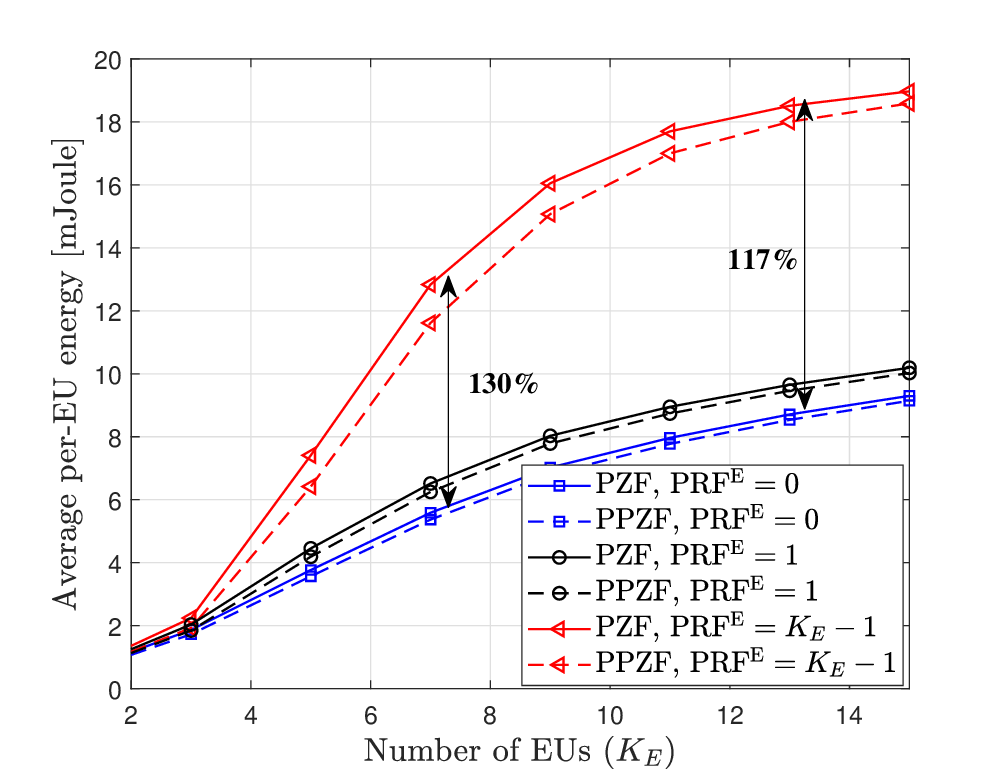}
\vspace{-0.3em}
\caption{ Average harvested energy}
\end{subfigure}
\begin{subfigure}[a]{0.5\textwidth}
\centering
\includegraphics[width=90mm]{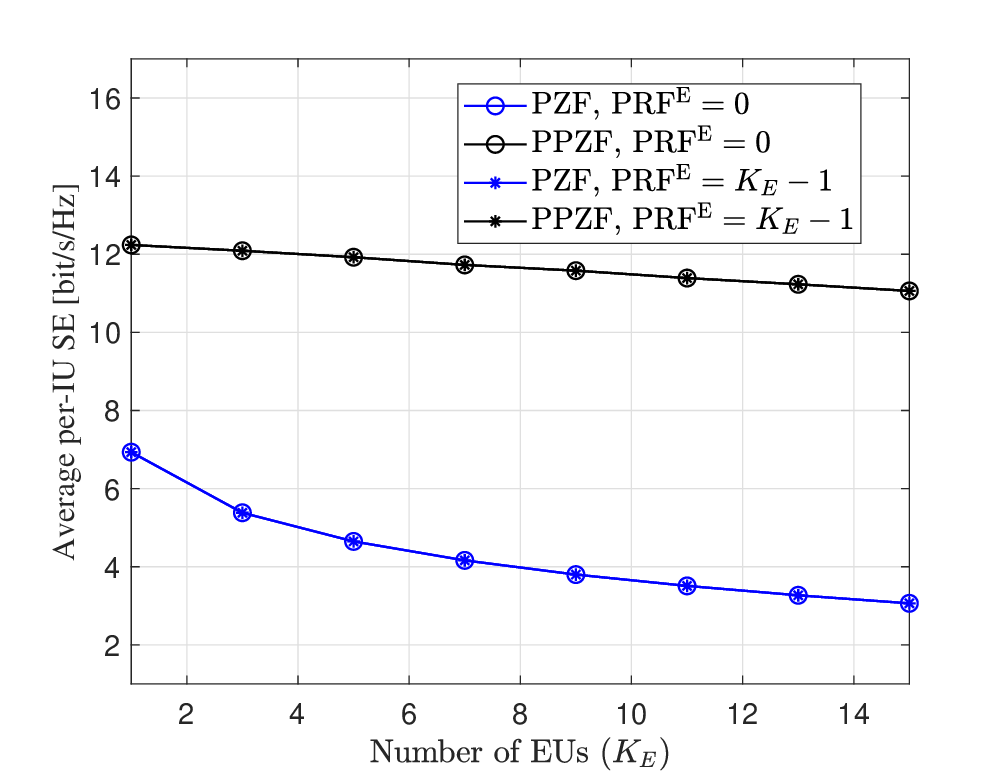}
\vspace{-0.3em}
\caption{SE}
\label{fig:3}
\end{subfigure}
\vspace{0em}
\caption{Performance of the PZF and PPZF versus the number of EUs ($K_I=5$, $N=225$, $M=150$, $\PRFIU=0$).} \label{fig:fig3R}
\end{figure}
\label{fig:3}

\subsection{Impact of System Parameters and Pilot Contamination}
Figure~\ref{fig:Fig2r} shows the average per-EU harvested energy as well as average per-IU SE as a function of the number of BS antennas and for two different numbers of RIS elements. In this example,  we adopt the \textbf{DFT-Phase, EPA} scheme. Moreover, we assume that orthogonal pilot sequences are used for the IUs, while in the case of non-orthogonal pilots, the same pilot sequence is assigned to a group of $(\PRFEU+1)$ EUs. From this figure, we have the following observations: \textit{i)} The proposed two precoding designs provide almost the same performance in terms of average harvested energy. However, PPZF significantly outperforms PZF in terms of per-IU SE, as it can efficiently cancel the interference caused by energy transmission towards EUs. \textit{ii)}  Pilot contamination among the EUs has a positive impact on the EH capability of the EUs and increases the level of harvested energy. This is because the EUs fully exploit all the  interference for EH. In the presence of pilot contamination, the energy beams are less narrowly focused on a specific EU, increasing the likelihood of power leakage among different EUs. As a result, the EUs can capitalize on this leakage and increase their level of harvested energy.  On the other hand, the SE of the IUs is not affected by the pilot contamination among the EUs. \textit{iii)} By increasing the number of RIS elements, the amount of harvested energy is significantly improved, cf. Remark~\ref{remark:MN}. For example, to harvest $10$ mJoule energy at each EU, by increasing the number of RIS elements from $N=225$ to $N=400$, we can decrease the number of BS antennas from $M=150$ to $M=100$. This is an interesting finding, as via more passive elements at the RIS instead of  active elements at the BS, the level of harvested energy is enhanced, leading to an improvement of  the energy efficiency performance. \textit{iv)} In the absence of RIS, the harvested energy at the EUs remains zero, as they are located in the blocked area.

\begin{figure}[t]
	\centering
	\includegraphics[width=0.5\textwidth]{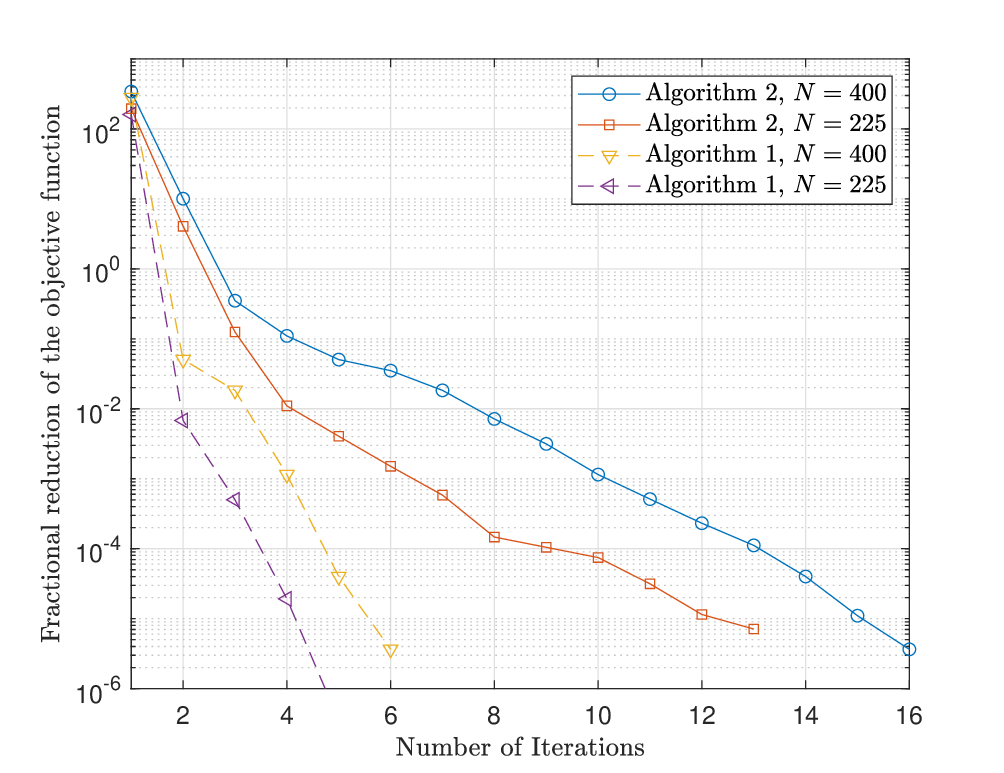}
 \vspace{-1.5em}
	\caption{Convergence behavior of \textbf{Algorithm 1} and \textbf{Algorithm 2} for different numbers of RIS elements ($K_I=5$, $K_E=10$, $\PRFIU=0$,  $\PRFEU=9$).}
		\label{fig:Fig6}
   \vspace{-1.6em}
\end{figure}

\begin{figure}[t]
	\centering
	\includegraphics[width=0.5\textwidth]{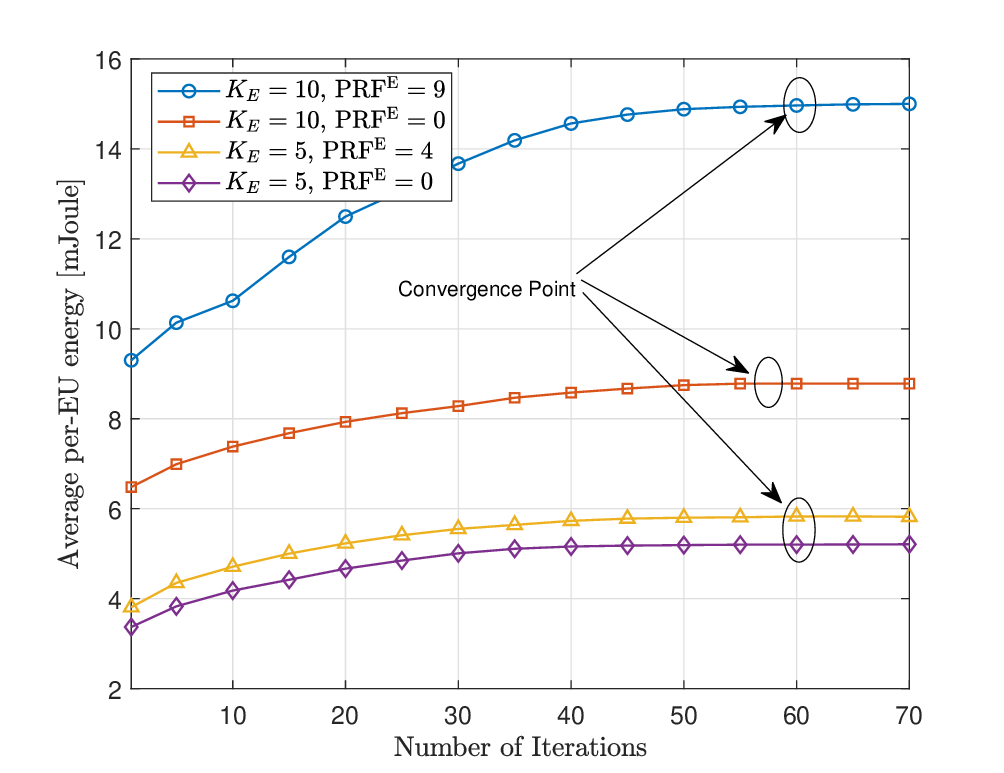}
 \vspace{-1em}
	\caption{ Convergence behavior of \textbf{Algorithm 3} for different number of EUs and pilot reuse factors ($K_I=5$, $N=225$, $M=150$, $\PRFIU=0$).}
		\label{fig:Fig7}
  \vspace{-1.0em}
\end{figure}

\begin{figure}[t]
	\centering
	\includegraphics[width=0.5\textwidth]{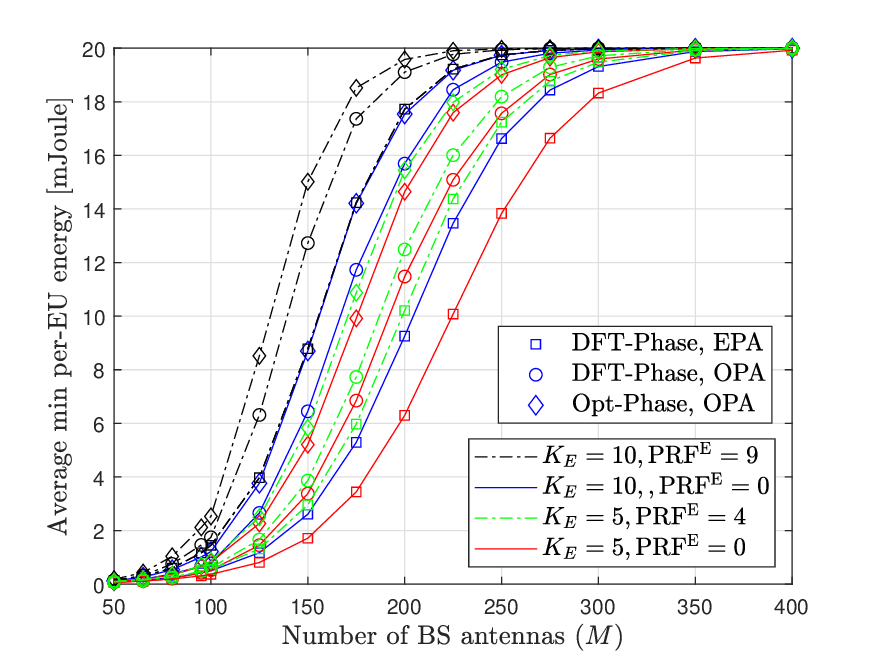}
	\caption{Average minimum per-EU harvested energy for different designs ($K_I=5$,  $N=225$, $\PRFIU=0$).}
	\vspace{-1em}
	\label{fig:Fig4}
\end{figure}

In Fig.~\ref{fig:fig3R}, we investigate the average per-EU harvested energy as well as per-IU SE versus the number of EUs for different precoding schemes. From this figure, we observe that by increasing $K_E$, the average harvested energy increases, while the average SE decreases. The reason behind this is two-fold. First, by increasing the number of EUs, more transmit power is directed towards the EH zone, thereby increasing the potential for EUs to  harvest more power. At the same time, this leads to a decrease in the received SINR at the IUs, leading to a degradation of the per-IU SE.  Second, the inter-user interference at both the energy and information zone is increased. The EUs benefit from this increased interference, while the IUs are adversely affected by it.  Moreover, when the same pilot sequence is used by all EUs, i.e., $\PRFEU=K_E-1$, the amount of harvested energy is significantly increased compared to the case with orthogonal pilot sequences. For example, when $K_E=7$,  the increase in the level of harvested energy with $\PRFEU=1$ is $18\%$, whereas this value is improved to be $130\%$ by using $\PRF=5$. Moreover, we observe that when the number of EUs increases, this gain is decreased from $130\%$ at $K_E=7$ to $117\%$ at $K_E=13$ due to the non-linear behavior of the EH circuit.  This results indicates that the pilot overhead can be reduced to $K_I+1$ without compromising the performance of the IUs and EUs. 

\subsection{Impact of Proposed Optimization Algorithm}

Before investigating the performance of the proposed optimization design, we first show the convergence of the power allocation design in \textbf{Algorithm 1} and double layer penalty-based \textbf{Algorithm 2} in Fig.~\ref{fig:Fig6}. It is observed that the fractional reduction of the objective function value occurs quickly with the number of iterations, and finally falls below a predefined threshold $\epsilon=10^{-5}$ within around $5$ and $15$ iterations for Algorithm 1 and \textbf{Algorithm 2}, respectively. Moreover, in Fig.~\ref{fig:Fig7} we show the convergence speed and the average harvested energy by the proposed BCD algorithms for different number of EUs and $\PRFEU$. It is observed that our proposed algorithm converges within around $60$ iterations for different cases of pilot reuse for EUs. Using a computer with a $2.60$ GHz Intel(R) Core(TM) $i5-1145G7$ CPU and $8$ GB of RAM, the running time per iteration of the BCD algorithm is $5.28$ seconds when $N=225$. This running time is significantly reduced when the predefined threshold increases to $\epsilon=10^{-3}$, without any remarkable loss of performance.
\begin{figure}[t]
	\centering
	\includegraphics[width=0.5\textwidth]{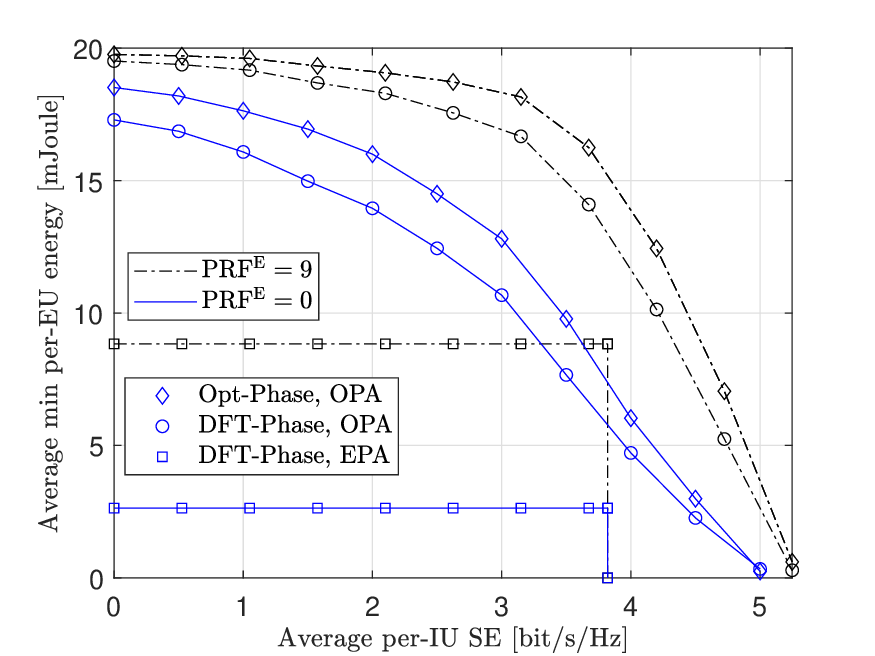}
 \vspace{-1em}
	\caption{Energy-SE region for different schemes  ($K_I=5$, $K_E=10$,  $N=225$, $M=150$, $\PRFIU=0$).}
		\label{fig:Fig5}
  \vspace{-0.8em}
\end{figure}

Figure~\ref{fig:Fig4} illustrates the average minimum harvested energy versus the number of BS antennas, $M$, for different phase shift and power allocation designs and for two different numbers of EUs. In this example, for a fair comparison, we set the individual SINR constraints at different IUs, i.e., $\gamma_k$, $k\in\KI$, the same as those obtained by EPA. Moreover, for the sake of clarity, we only present the results for the PZF design. Nevertheless, our simulations, not shown in this figure, confirm that the proposed algorithm can provide the same performance gains for the PPZF design. The main observations that follow from these simulations are as follows: \textit{i)} Compared with the \textbf{DFT-Phase, EPA} scheme, the proposed algorithm provides substantial harvested energy gains due to its ability to fully utilize all the  resources available in the network. More specifically, we observe that by applying OPA, we can achieve up to $67\%$ ($82\%$) increase in the amount of minimum harvested energy at $M=200$ with $K_E=10$ ($K_E=5$). Then, by employing the proposed algorithm for joint phase shift design at the RIS and power allocation at the BS (\textbf{Opt-Phase, OPA}), these gains are increased to $92\%$ and $132\%$ for $K_E=10$ and $K_E=5$, respectively. \textit{ii)} The proposed algorithm can increase the amount of harvested energy when using non-orthogonal pilot sequences for EUs.  However, it is important to note that this improvement is comparatively less than the gains obtained via orthogonal pilot sequences.  These results confirm that by using the same pilot sequences for EUs and applying the proposed optimization algorithm, we can achieve a substantial performance gain in terms of harvested energy. 

Figure~\ref{fig:Fig5} shows the average minimum per-EU harvested energy versus the average per-IU SE for the proposed optimization algorithm and baseline schemes and for PZF precoding. As expected, when the individual SE requirement at the IUs is increased, the harvested energy is degraded. This outcome arises because a greater amount of power is needed to allocate to the IUs, which accordingly reduces the amount of power directed towards EUs. Moreover, these results reveal that by using joint phase shift design at the RIS and power allocation at the BS, i.e., \textbf{Opt-Phase, OPA} results in a more balanced energy-SE region compared to the baseline schemes.

 \section{Conclusion}~\label{Sec:conc}
In this paper, we studied joint phase shift design at the RIS and power allocation design at the BS for RIS-assisted SWIPT massive MIMO systems. We developed a two-timescale transmission scheme for precoding design at the BS and RIS based on instantaneous and S-CSI, respectively. Moreover, we applied a non-linear an EH model at the EUs to better capture the properties of the practical systems in presence of channel estimation error and pilot contamination. By leveraging closed-form expression for the achievable SE at the IUs and average harvested energy at the EUs, a BCD algorithm was developed to maximize the minimum harvested energy at the EUs subject to individual SE constraints at the IUs, via phase shift design at the RIS and power allocation at the BS. Simulation results showed that the proposed design significantly improves the amount of harvested energy at the EUs compared to the baseline schemes. Moreover, while it is essential to use orthogonal pilot sequences for the IUs, using the same pilot sequence for the EUs results in greater amount of harvested energy.

It would be interesting to extend these results to other practical IoT networks. For instance, IoT networks with multiple energy zones, where each zone is served by a specific RIS in its vicinity. Another potential future research direction would be to consider the case where there is a direct link between the BS and some of these energy zones.  Moreover, finding lower-complexity resource allocation approaches, such as machine learning-based algorithms, that can achieve acceptable SE and EH performance in RIS-aided massive MIMO SWIPT systems is a timely research topic for future research. 

\appendices

\section{Useful Results}
\label{Apx:usefullresults}
In this Appendix, we present some useful results that were
frequently invoked to derive our key results.

\begin{Lemma}~\label{lemma:tracelemma}
(Trace Lemma~\cite[Lemma 4]{Wagner:IT:2012}) Let $\qx$, $\qw\sim\mathcal{CN}(\boldsymbol{0}, \frac{1}{M}\qI)$ be mutually independent vectors of length $M$ and also independent of $\qA\in\mathbb{C}^{M\times M}$, which has a uniformly bounded spectral norm for all $M$. Then,
\begin{align}
    \qx^H\qA\qx -\frac{1}{M}\trace(\qA)\stackrel{M\to\infty}{\rightarrow} 0,\quad \qx^H\qA\qw \stackrel{M\to\infty}{\rightarrow} 0.
\end{align}
\end{Lemma}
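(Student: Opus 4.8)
The plan is to prove each convergence by the classical moment-plus-concentration route: I compute the mean of the form, bound its second moment, and then pass to the stated limit. Throughout I condition on $\qA$, which is independent of $\qx$ and $\qw$, and repeatedly use $\Ex\{x_i^\ast x_j\}=\frac{1}{M}\delta_{ij}$ for the entries of $\qx$.

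First I would verify that $\frac{1}{M}\trace(\qA)$ is exactly the mean of the quadratic form. Writing $\qx^H\qA\qx=\sum_{i,j}x_i^\ast[\qA]_{(i,j)}x_j$ and taking expectations gives $\Ex\{\qx^H\qA\qx\}=\frac{1}{M}\sum_i[\qA]_{(i,i)}=\frac{1}{M}\trace(\qA)$, so that $Z_M\triangleq\qx^H\qA\qx-\frac{1}{M}\trace(\qA)$ is centered. Next I would control the variance. Applying the Isserlis (Wick) identity for the circularly-symmetric complex Gaussian vector $\qx\sim\mathcal{CN}(\boldsymbol{0},\frac{1}{M}\qI)$, the pseudo-covariance terms involving $\Ex\{x_ix_j\}$ vanish and one is left with $\Ex\{|\qx^H\qA\qx|^2\}=\frac{1}{M^2}|\trace(\qA)|^2+\frac{1}{M^2}\trace(\qA\qA^H)$. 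Subtracting the squared mean yields the compact identity $\Ex\{|Z_M|^2\}=\frac{1}{M^2}\trace(\qA\qA^H)$.

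The decisive step is then the spectral-norm bound $\trace(\qA\qA^H)=\sum_{i=1}^M\sigma_i^2(\qA)\leq M\,\Vert\qA\Vert_2^2$, which uses only that $\qA$ has at most $M$ singular values, each dominated by the spectral norm. Hence $\Ex\{|Z_M|^2\}\leq\Vert\qA\Vert_2^2/M$, and since $\Vert\qA\Vert_2$ is uniformly bounded in $M$ by hypothesis, $\Ex\{|Z_M|^2\}=O(1/M)\to0$. This gives mean-square convergence and, via Chebyshev's inequality, convergence in probability. For the bilinear term I would condition on $\qx$ and use $\Ex\{\qw\qw^H\}=\frac{1}{M}\qI$ together with $\Ex\{\qw\}=\boldsymbol{0}$: the mean vanishes and $\Ex\{|\qx^H\qA\qw|^2\}=\frac{1}{M}\Ex\{\qx^H\qA\qA^H\qx\}=\frac{1}{M^2}\trace(\qA\qA^H)\leq\Vert\qA\Vert_2^2/M$, exactly the previous bound.

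If the intended mode of convergence is almost sure---as the notation $\stackrel{M\to\infty}{\rightarrow}$ usually signifies in this literature---I would upgrade the estimate by bounding the fourth moment, establishing $\Ex\{|Z_M|^4\}=O(1/M^2)$ through a further Wick expansion together with the same spectral-norm bound (and $|\trace(\qA)|\leq M\Vert\qA\Vert_2$). Markov's inequality then gives $\sum_M\mathbb{P}(|Z_M|>\epsilon)\leq\epsilon^{-4}\sum_M\Ex\{|Z_M|^4\}<\infty$ for every $\epsilon>0$, and the Borel--Cantelli lemma delivers almost-sure convergence; the bilinear term is treated identically. The main obstacle is purely bookkeeping in this fourth-moment expansion: one must enumerate the surviving index pairings and check that each resulting trace term is $O(1/M^2)$ after applying $\trace(\qA\qA^H)\leq M\Vert\qA\Vert_2^2$, with no conceptual difficulty beyond using the uniform spectral-norm bound consistently so that all constants stay independent of $M$.
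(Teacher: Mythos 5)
The paper itself offers no proof of this lemma: it is imported verbatim as Lemma~4 of \cite{Wagner:IT:2012}, so there is no internal argument to compare yours against. Judged on its own merits, your proof is correct and is essentially the standard moment/concentration argument underlying the cited result. The mean computation, the Wick-identity variance $\Ex\{|Z_M|^2\}=\frac{1}{M^2}\trace(\qA\qA^H)$ (taken conditionally on $\qA$, which is legitimate since $\qA$ is independent of $\qx,\qw$), and the bound $\trace(\qA\qA^H)\le M\Vert\qA\Vert_2^2$ are all right, and together they already give mean-square and in-probability convergence at rate $O(1/M)$; the bilinear term $\qx^H\qA\qw$ is handled correctly by conditioning on $\qx$ and reusing the same trace bound. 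You are also right to flag the mode of convergence: in \cite{Wagner:IT:2012}, and in the random-matrix literature generally, the statement is almost-sure convergence, so the Chebyshev step alone does not suffice, and the fourth-moment-plus-Borel--Cantelli upgrade you sketch is exactly the standard route. For Gaussian entries the claim $\Ex\{|Z_M|^4\}=O(1/M^2)$ does hold (e.g. via the Bai--Silverstein quadratic-form moment bound: the $(\trace(\qA\qA^H))^2/M^4$ term is $O(1/M^2)$ while $\trace\big((\qA\qA^H)^2\big)/M^4\le \Vert\qA\Vert_2^4/M^3$ is of lower order), and the uniform spectral-norm hypothesis is precisely what keeps all constants independent of $M$. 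Strictly speaking you leave that Wick expansion as ``bookkeeping,'' so the almost-sure version is sketched rather than fully executed, but the skeleton is sound and every step you do carry out is correct.
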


\begin{Lemma}~\label{lemma:expxhAx}~\cite[Eq. (15.14)]{Kay}: 
Let $\tilde{\qx}$ be a complex $n \times 1$ random vector with mean $\boldsymbol{\mu}$ and covariance matrix $\boldsymbol{\Sigma}$ and let $\qA$ be an $n\times n$ positive definite Hermitian matrix. Then, we have
$\Ex\{\tilde{\qx}^H \qA  \tilde{\qx} \} = \boldsymbol{\mu}^H \qA \boldsymbol{\mu} + \trace (\qA \boldsymbol{\Sigma}).$
\end{Lemma}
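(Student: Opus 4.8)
The plan is to reduce the quadratic form to a deterministic part plus a zero-mean fluctuation, then evaluate each separately. First I would write $\tilde{\qx} = \boldsymbol{\mu} + \qz$, where $\qz \triangleq \tilde{\qx} - \boldsymbol{\mu}$ satisfies $\Ex\{\qz\} = \boldsymbol{0}$ and $\Ex\{\qz\qz^H\} = \boldsymbol{\Sigma}$ by the definition of the mean and covariance. Substituting this decomposition and expanding the Hermitian quadratic form gives
\begin{align*}
\tilde{\qx}^H \qA \tilde{\qx} = \boldsymbol{\mu}^H\qA\boldsymbol{\mu} + \boldsymbol{\mu}^H\qA\qz + \qz^H\qA\boldsymbol{\mu} + \qz^H\qA\qz.
\end{align*}

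Next I would take the expectation term by term, invoking linearity. The first term is deterministic and survives unchanged. The two cross terms vanish: since $\qA$ and $\boldsymbol{\mu}$ are deterministic, $\Ex\{\boldsymbol{\mu}^H\qA\qz\} = \boldsymbol{\mu}^H\qA\,\Ex\{\qz\} = \boldsymbol{0}$, and likewise $\Ex\{\qz^H\qA\boldsymbol{\mu}\} = \Ex\{\qz^H\}\qA\boldsymbol{\mu} = \boldsymbol{0}$.

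The only remaining piece is the random quadratic term $\Ex\{\qz^H\qA\qz\}$. Here I would use the cyclic property of the trace to convert the scalar into a trace of a product: $\qz^H\qA\qz = \trace(\qz^H\qA\qz) = \trace(\qA\qz\qz^H)$. Moving the expectation inside the linear trace operator then yields $\Ex\{\trace(\qA\qz\qz^H)\} = \trace(\qA\,\Ex\{\qz\qz^H\}) = \trace(\qA\boldsymbol{\Sigma})$. Combining the three surviving contributions produces the stated identity $\Ex\{\tilde{\qx}^H \qA \tilde{\qx}\} = \boldsymbol{\mu}^H\qA\boldsymbol{\mu} + \trace(\qA\boldsymbol{\Sigma})$.

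There is no genuine obstacle here; the result is elementary and the argument is short. The only point requiring mild care is the bookkeeping with the conjugate transpose, so that the two cross terms are correctly recognized as Hermitian conjugates of one another and hence both have zero mean. I would also note in passing that the Hermitian positive-definiteness of $\qA$ assumed in the statement is not actually needed for the identity, which holds for an arbitrary $\qA$; those hypotheses merely match the setting in which the lemma is later applied.
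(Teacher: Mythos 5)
Your proof is correct. Note, however, that the paper does not actually prove this lemma at all: it is stated as a known result and justified solely by the citation to Kay's estimation theory textbook (Eq.~(15.14) therein). So there is no internal argument to compare against; what you have produced is a self-contained elementary derivation of the cited identity, and it is the standard one: centering via $\tilde{\qx} = \boldsymbol{\mu} + \qz$, killing the cross terms by linearity of expectation, and handling $\Ex\{\qz^H\qA\qz\}$ through the scalar-trace identity $\qz^H\qA\qz = \trace(\qA\qz\qz^H)$ followed by exchanging expectation and trace. All steps are valid, including the implicit use of the complex covariance convention $\boldsymbol{\Sigma} = \Ex\{\qz\qz^H\}$, which is the one the paper uses throughout (e.g., in Appendices B and C where the lemma is applied with $\boldsymbol{\Sigma}_i = \gamgi\qI_M$). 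Your closing observation is also accurate: neither Hermitian symmetry nor positive definiteness of $\qA$ is needed for the identity itself — Hermitian symmetry merely guarantees that the quadratic form is real-valued, which is the situation in which the paper invokes the lemma (computing average received energies), and positive definiteness plays no role in the algebra. In short, your argument is a strictly more informative replacement for the paper's bare citation.
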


\begin{Lemma}~\label{Lemma:B}
For the projection matrix
$\mathbf {B} = \mathbf {I}_{M}-\qR^{H}\left (\qR \qR^H\right)^{-1}\qR$, with $\qR\in\mathbb{C}^{N\times M}$ we have $\Ex\{\qB\} = \frac{M-N}{M} \qI_M$ for $M>N$.
\end{Lemma}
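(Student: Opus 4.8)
The plan is to exploit the rotational (unitary) invariance of the distribution of $\qR$. First I would observe that, almost surely, $\qR$ has full row rank $N$ (this holds for the Gaussian channel-estimate matrices appearing in the paper, whose columns $\hat{\qh}_k\sim\mathcal{CN}(\boldsymbol{0},\gamhk\qI_M)$ are isotropic and independent), so that $\qP\triangleq\qR^H(\qR\qR^H)^{-1}\qR$ is the orthogonal projection onto the $N$-dimensional row space of $\qR$ and $\qB=\qI_M-\qP$ is the projection onto its $(M-N)$-dimensional orthogonal complement. In particular $\trace(\qB)=M-\rank(\qR)=M-N$ almost surely, so $\Ex\{\trace(\qB)\}=M-N$.

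Second, I would establish that the distribution of $\qB$ is invariant under the congruence $\qB\mapsto\qU\qB\qU^H$ for every fixed unitary $\qU\in\mathbb{C}^{M\times M}$. Since each column of $\qR^H$ is an isotropic Gaussian vector, $\qU^H\qR^H\stackrel{d}{=}\qR^H$, equivalently $\qR\qU\stackrel{d}{=}\qR$. Substituting $\qR\qU$ for $\qR$ in the definition of $\qB$ gives exactly $\qU^H\qB\qU$, and therefore $\qU^H\qB\qU\stackrel{d}{=}\qB$. Taking expectations yields $\qU^H\Ex\{\qB\}\qU=\Ex\{\qB\}$, i.e. $\Ex\{\qB\}$ commutes with every unitary matrix.

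Finally, a matrix that commutes with all unitaries must be a scalar multiple of the identity, so $\Ex\{\qB\}=c\,\qI_M$ for some scalar $c$. Determining $c$ is then immediate by taking the trace: $cM=\trace(\Ex\{\qB\})=\Ex\{\trace(\qB)\}=M-N$, whence $c=(M-N)/M$ and $\Ex\{\qB\}=\frac{M-N}{M}\qI_M$.

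The main obstacle is the invariance step: one must verify that the law of $\qR$ is genuinely right-unitarily invariant, which relies on the columns of $\qR^H$ being isotropic Gaussian (covariance proportional to $\qI_M$). If the per-column variances $\gamhk$ differ this causes no difficulty, because each column individually is rotationally invariant, but the argument would break if the columns had non-scalar covariance. A minor technical point is confirming the almost-sure full-rank of $\qR$ (guaranteed by $M>N$ together with the continuous Gaussian law), so that $\qB$ is genuinely a rank-$(M-N)$ projection and the trace identity $\trace(\qB)=M-N$ holds.
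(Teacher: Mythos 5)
Your proof is correct, but it takes a genuinely different route from the paper's. The paper proves the lemma constructively: it replaces $\qR$ by its singular value decomposition, so that $\qB = \qI_M - \qV_1\qV_1^H$ with $\qV_1$ the $M\times N$ matrix of right singular vectors, and then uses the fact that each column of $\qV_1$ is distributed as $\qz/\Vert\qz\Vert$ with $\qz\sim\mathcal{CN}(\boldsymbol{0},\qI_M)$, whence $\Ex\{\qz\qz^H/\Vert\qz\Vert^2\}=\frac{1}{M}\qI_M$ per column and $\Ex\{\qV_1\qV_1^H\}=\frac{N}{M}\qI_M$ by summing over the $N$ columns. You instead apply the symmetry argument directly at the level of $\qB$: right-unitary invariance of the law of $\qR$ gives $\qU^H\qB\qU\stackrel{d}{=}\qB$, so $\Ex\{\qB\}$ commutes with every unitary and must equal $c\,\qI_M$, and the almost-sure rank identity $\trace(\qB)=M-N$ pins down $c=(M-N)/M$. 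Both arguments ultimately rest on the same isotropy of the Gaussian law, but yours buys two things: it never needs to identify the marginal distribution of the singular vectors (a fact which itself requires an invariance argument, and which the paper asserts without proof), and it makes explicit the exact hypothesis needed --- any right-unitarily invariant law for $\qR$ suffices, so unequal per-column variances $\gamhk$, as occur for the channel-estimate matrix $\Hohat$ in the paper, are handled automatically. The paper's route, by contrast, is more computational and yields the expectation of the projector $\qV_1\qV_1^H$ column by column, which is the form in which it is reused elsewhere in the appendices. Your two technical caveats (almost-sure full row rank for $M>N$, and joint rather than merely marginal invariance, which follows from independence of the columns of $\qR^H$) are exactly the right ones and are correctly resolved.
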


\begin{proof}
The proof follows by replacing $\qR$ with its singular value decomposition and noticing that the distribution of the column vectors of the resulting unitary matrices is the same as $\frac{\qz}{\Vert\qz\Vert}$, with $\qz\sim\mathcal{CN}\left(\boldsymbol{0},\qI_{M} \right)$.
\end{proof}

\begin{Lemma}~\label{lemma:ghatk4}
For non-zero mean random Gaussian vectors $\ghatl$  and $\ghatt$, where $t$ and $\ell\in\setsl$, and for projection matrix $\qB$, we have  
\begin{subequations}
\begin{align}
 \Ex\Big\{\left\vert \ghatl^H \ghatt\right\vert^2\Big\} 
 &=\cftlsq M(M+1)\gamgk^2 
                       +
                        M\gamgk\lambda_{\ell} \delta
           \XitTet
           \nonumber\\
            &\hspace{-5em} +\cftl M^2\gamgk\sqrt{\lamdal\lamdat} \delta\Xiltl
             +
             \cftlsq  M\gamgk \lambda_{\ell} \delta
                  \XiTet
                  \nonumber\\
             &\hspace{-5em}
                  \!+\!
    M^2\!\sqrt{\lamdal\lamdat} \delta \Xiltet\!\!
             \left(\!\cftl
                  \gamgk
                   \!\!+\!\!
                \sqrt{\lamdal\lamdat} \delta \Xiltl
              \!\right)\!,~\label{eq:Exq4}\\
\Ex\left\{ \left\vert \hat{\qg}_{\ell}^H \qB \hat{\qg}_{t}  \right\vert^2\right\}
&\approx 
\cftlsq(M-\tau_{\KI})\gamgl
     \Big((M-\tau_{\KI}+1)\gamgl \nonumber\\
     &\hspace{-3em}+ \lamdal\delta \XiTet\Big)
    + (M-\tau_{\KI})\gamgl\lamdal\delta \XitTet
    \nonumber\\
    &\hspace{-3em}
    +(M-\tau_{\KI})^2\sqrt{\lamdal\lamdat} \delta\bigg( \cftl \Xiltl
    +\Xiltet
    \nonumber\\
    &\times
              \Big(
              \cftl\gamgl +\sqrt{\lamdal\lamdat} \delta \Xiltl
              \Big)\bigg).~\label{eq:ExglhBgl}
\end{align}    
\end{subequations}
\end{Lemma}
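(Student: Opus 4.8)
The plan is to treat both identities as fourth-order moments of the non-zero-mean Gaussian estimates and to reduce them to second-order moments by exploiting the pilot-sharing structure. First I would split each estimate into its deterministic LoS mean and its zero-mean fluctuation, $\ghatl = \bghatk + \tghatk$ and $\ghatt = \bghatt + \tghatt$, and invoke Remark~\ref{rem:nopiloIU}: since $\ell,t\in\setsl$ share a pilot, $\tghatt = \cftl\tghatk$ with $\tghatk\sim\mathcal{CN}(\boldsymbol{0},\gamgk\qI_M)$ and $\gamgt = \cftlsq\gamgk$, which collapses the two random parts onto a single Gaussian vector $\qv\triangleq\tghatk$. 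Next I would use the rank-one LoS structure $\bar{\qH}_2 = \qa_M\qa_N^H$ from~\eqref{eq:barhkbarH2} to write each mean as a scalar multiple of the same array-response vector, $\bghatk = \sqrt{\lamdal\delta}\,(\aNTfl)\,\qa_M$ and $\bghatt = \sqrt{\lamdat\delta}\,(\aNTft)\,\qa_M$, so that, via $\qa_M^H\qa_M = M$, all inner products reduce to the geometric quantities $\XiTet = |\aNTfl|^2$, $\XitTet = |\aNTft|^2$, and $\Xiltet = \flTaN\aNTft$ (with $\Xiltl = \overline{\Xiltet}$).

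With these substitutions $\ghatl^H\ghatt$ expands into four terms: the quartic piece $\cftl\Vert\qv\Vert^2$, two pieces linear in $\qv$, namely $\qv^H\bghatt$ and $\cftl\bghatk^H\qv$, and the constant $\bghatk^H\bghatt$. I would then square and take the expectation, discarding every contribution that is odd in $\qv$ (these vanish by circular symmetry) and keeping the rest. The survivors are evaluated with three standard facts: $\Ex\{\Vert\qv\Vert^2\} = M\gamgk$; $\Ex\{\Vert\qv\Vert^4\} = M(M+1)\gamgk^2$, which produces the leading $\cftlsq M(M+1)\gamgk^2$ term; and $\Ex\{\qv^H\qA\qv\} = \gamgk\trace(\qA)$ from Lemma~\ref{lemma:expxhAx}, which handles the two linear pieces and, through the cross term $2\,\Ex\{\cftl\Vert\qv\Vert^2\}\,\Real\{\bghatk^H\bghatt\}$, the coupling between the quartic and the constant. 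Collecting everything and rewriting the inner products through $\XiTet$, $\XitTet$, $\Xiltet$, $\Xiltl$ yields~\eqref{eq:Exq4}.

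For~\eqref{eq:ExglhBgl} I would repeat the same expansion but carry the projector $\qB$ inside every inner product. The key structural fact is that $\qB$ is built from $\Hohat$, whose pilots are orthogonal to those of the EUs, so $\qB$ is independent of $\qv$; moreover $\qB$ is an idempotent Hermitian projector of rank $M-\tau_{\KI}$. Conditioning on $\qB$, the quadratic form $\qv^H\qB\qv$ is, up to the scale $\gamgk$, a chi-square with $M-\tau_{\KI}$ complex degrees of freedom, so $\Ex\{(\qv^H\qB\qv)^2\} = \gamgk^2(M-\tau_{\KI})(M-\tau_{\KI}+1)$, which is what turns $M(M+1)$ into $(M-\tau_{\KI})(M-\tau_{\KI}+1)$. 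The first-order and constant pieces are then handled by replacing $\qB$ with its mean $\Ex\{\qB\} = \frac{M-\tau_{\KI}}{M}\qI_M$ from Lemma~\ref{Lemma:B}, so that $\qa_M^H\qB\qa_M$ contributes $M-\tau_{\KI}$; this substitution is exactly where the approximation ($\approx$) enters, since it neglects the residual correlation between $\qB$ and the Gaussian quadratic forms.

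The main obstacle I anticipate is the combinatorial bookkeeping of the fourth-order expansion—correctly pairing the surviving quartic and cross terms and matching them to the compact $\Xi$-notation—together with justifying, in the projected case, that decoupling $\qB$ from the quadratic forms via $\Ex\{\qB\}$ incurs only lower-order error, which is what the $\approx$ in~\eqref{eq:ExglhBgl} records.
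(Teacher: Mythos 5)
Your proposal is correct and, for the first identity \eqref{eq:Exq4}, it is essentially the paper's own argument: the same decomposition $\ghatl=\bghatk+\tghatk$, $\ghatt=\bghatt+\cftl\tghatk$ from Remark~\ref{rem:nopiloIU}, elimination of the odd-order Gaussian moments, the fourth moment $\Ex\{\Vert\tghatk\Vert^4\}=M(M+1)\gamgk^2$, and quadratic-form expectations (Lemma~\ref{lemma:expxhAx}); the paper merely organizes the surviving terms into four groups $A_1$--$A_4$ and keeps $\bar{\qH}_2\bTeta\bar{\qf}_\ell$ unexpanded where you use the rank-one factorization $\bar{\qH}_2=\qa_M\qa_N^H$ --- a purely cosmetic difference. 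Where you add value is the second identity \eqref{eq:ExglhBgl}, whose derivation the paper omits entirely (``similar steps \dots omitted''): your argument --- independence of $\qB$ from $\tghatk$ (guaranteed by the IU/EU pilot orthogonality), the scaled complex chi-square law of $\tghatkH\qB\tghatk$ on $M-\tauI$ degrees of freedom giving $(M-\tauI)(M-\tauI+1)\gamgl^2$, and Lemma~\ref{Lemma:B} for the remaining pieces --- is a valid reconstruction of what those steps must be. Two refinements are worth recording. First, the approximation enters only through $\Ex\{\vert\bghatkH\qB\bghatt\vert^2\}$, the single term quadratic in $\qB$: the pieces linear in $\tghatk$ are exact, since idempotency ($\qB^2=\qB$) reduces them to $\Ex\{\qB\}$, and the quartic--constant cross term is also exact, because $\Ex\{\tghatkH\qB\tghatk\,\vert\,\qB\}=\gamgk\trace(\qB)=\gamgk(M-\tauI)$ is deterministic; so your attribution of the $\approx$ to all ``first-order and constant pieces'' is slightly too broad. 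Second, your (correct) evaluation $\Ex\{\vert\tghatkH\bghatt\vert^2\}=M\gamgk\lamdat\delta\,\XitTet$ shows that the factor $\lambda_{\ell}$ multiplying $\XitTet$ in the stated lemma (and its analogue in \eqref{eq:ExglhBgl}) should read $\lamdat$; the paper's own intermediate step $A_2$(a) carries $\lamdat$ before it becomes $\lambda_{\ell}$ in $A_2$(b), so this is a typo in the statement rather than a flaw in your derivation.
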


\begin{proof}
Let us define $\ghatl = \tghatk + \bghatk$,  and $\ghatt = \tghatt + \bghatt$, where $t$ and $\ell\in\setsl$ and  $\tghatk\sim \mathcal{CN}\big(\boldsymbol{0}, \gamgl \qI_M\big)$, $\bghatk=\sqrt{ \lamdal\delta}\bar{\qH}_2\bTeta\bar{\qf}_\ell$ and 
$\bghatt=\sqrt{ \lamdat\delta}\bar{\qH}_2\bTeta\bar{\qf}_t$. Noticing that $\tghatt = \frac{\BetaREt}{\BetaREl} \tghatk$, after vanishing the cross-expectation terms, we have
\begin{align}~\label{eq:ghatk4}
    \Ex\left\{\left\vert \ghatl^H \ghatt\right\vert^2\right\} 
     & = A_1 + A_2 + A_3 + A_4,
\end{align}
where
\begin{align}
   A_1 & \!=\! \Ex\Big\{
                    \cftlsq \tghatkH\tghatk\tghatkH\tghatk +  \cftl\tghatkH\tghatk\bghattH\tghatk 
                    \nonumber\\
                    &\hspace{2em}
                    +  \cftlsq\tghatkH\tghatk\tghatkH\bghatk  + \cftl\tghatkH\tghatk\bghattH\bghatk
             \Big\},\nonumber\\
  A_2 & \!=\!  \Ex\Big\{
                      \cftl\tghatkH\bghatt\tghatkH\tghatk \!+  \tghatkH\bghatt\bghattH\tghatk \! +  
                   \cftl\tghatkH\bghatt\tghatkH\bghatk  
                                   \Big\}, \nonumber\\
  A_3 & \!=\!  \Ex\Big\{
                     \cftlsq
                     \big(\bghatkH\tghatk\tghatkH\tghatk \!+\!
                     \bghatkH\tghatk\tghatkH\bghatk\big) \!+\!\bghatkH\tghatk\bghattH\tghatk  
                                             \Big\},\nonumber\\
  A_4  & =  \Ex\left\{
                      \cftl\bghatkH\bghatt\tghatkH\tghatk 
                      + \bghatkH\bghatt\bghattH\bghatk 
             \right\}.
\end{align}

We first derive $A_1$, which is expressed as
\begin{align}~\label{eq:A1}
   A_1
     & \stackrel{(a)}{=}
\cftl\Big(\cftl M(M\!+\!1)\gamgk^2 
             \!+ \! 
             \bghattH\Ex\!\left\{ \tghatk\tghatkH\tghatk\right\} 
             \nonumber\\
                &
             \!+\!  
            \cftl\Ex\!\left\{ \tghatkH\tghatk\tghatkH \right\} \bghatk
             \!+\! 
             M^2\gamgk \sqrt{\lamdal\lamdat} \delta\Xiltl\!\Big)
    \nonumber\\
         & 
         \stackrel{(b)}{=}\!
            \cftlsq M(M\!+\!1)\gamgk^2 
                       \!+\!
             \cftl M^2\gamgk\sqrt{\lamdal\lamdat} \delta\Xiltl,
         \end{align}
where we have exploited: in (a)~\cite[Lemma 2.9]{tulino04}; $\Ex\{xy\} = \Ex\{yx\}$; in (b) $\Expghk \triangleq \Ex\left\{ \tghatk\tghatkH\tghatk  \right\}  = \boldsymbol{0}\in\mathbb{C}^{M \times 1}$. 

We then calculate $A_2$ as
\begin{align}~\label{eq:A2}
   A_2  
             &\stackrel{(a)}{=}
              \sqrt{\lamdal\delta}\cftl
              \Expghk^H\HTft
                         +  
             \gamgk\lamdat \delta
             \trace\left( \HTft\ftTH \right)
             \nonumber\\
             &\hspace{2em}
                        +  
             \Ex\left\{ \tghatkH\bghatt\tghatkH\bghatk \right\} 
              \nonumber\\
               &\stackrel{(b)}{=}
             M\gamgk\lambda_{\ell} \delta
           \XitTet,
\end{align}
where we have exploited: Lemma~\ref{lemma:expxhAx} in (a); in (b) $\trace(\qA\qB) =\trace(\qB\qA)$, $\Expghk =\boldsymbol{0}_{M\times 1}$ and $\Ex\left\{\tghatkH\bghatk\tghatkH\bghatk \right\} =0$.

Likewise, noticing that $\Ex\left\{ 
                    \bghatkH\tghatk\bghattH\tghatk  \right\} =0$ and  $\Expghk \triangleq \boldsymbol{0}$, $A_3$ can be derived as
\begin{align}~\label{eq:A3}
 A_3         
        &=
              \sqcoef\flTH\Expghk
                                    +  
                            {\Ex\left\{ 
                    \bghatkH\tghatk\bghattH\tghatk  \right\}}
                    \nonumber\\
  &
                    +
                   \cftlsq M\gamgk \lambda_{\ell} \delta
                 \XiTet
         \nonumber\\
                 &=
             \cftlsq  M\gamgk \lambda_{\ell} \delta
                  \XiTet.
\end{align}

Finally, $A_4$ is obtained as
\begin{align}~\label{eq:A4}
    A_4 
        &=\! M^2\!\sqrt{\lamdal\lamdat} \delta \Xiltet
             \!\left(\!\cftl
                  \gamgk
                   \!+\!
                \sqrt{\lamdal\lamdat} \delta \Xiltl
              \!\right)\!.
\end{align}

To this end, by substituting~\eqref{eq:A1},~\eqref{eq:A2},~\eqref{eq:A3}, and~\eqref{eq:A4} into~\eqref{eq:ghatk4}, the desired result in~\eqref{eq:Exq4} is obtained.

By following similar steps, we can derive $\Ex\Big\{   \big \vert \hat{\qg}_{\ell}^H  \qB \hat{\qg}_{t} \big\vert^2  \Big\}$, which is omitted here due to the space limitations.
\end{proof}

\section{Proof of Proposition~\ref{Theorem:QlPZF}}
\label{Theorem:QlPZF:proof}

By invoking~\eqref{eq:El} and using $\wik^{\PZF}$ and $\wet^{\MRT}$, we have 
\begin{align} 
\label{eq:En:PZF:proof1}
    Q_\ell^{\PZF}
    &\!\stackrel{(a)}{=}\!
     (\tau_c\!-\!\tau)
     \Big(
     \sum\nolimits_{k\in\KI}  \!
     {\Pik}
     \Ex
     \left\{
     \qg_{\ell}^H \Ex\{\wik^{\PZF}(\wik^{\PZF})^H\} \qg_{\ell}
     \right\} 
     \nonumber\\
     &\hspace{2em} \!+\! 
     \sum\nolimits_{t\in\KE\setminus\{\mathcal{S}_\ell\}}\!\!
    {\Pet} \Ex\left\{  \qg_{\ell}^H \Ex\left\{\wet^{\MRT} (\wet^{\MRT})^H\right\}  \qg_{\ell}\right\} 
    \nonumber\\
    &\hspace{2em}
     \!+\!
    \sum\nolimits_{\ell'\in\{\mathcal{S}_\ell\}}
     {\Pelp} \Ex\left\{ \left\vert (\hat{\qg}_{\ell}^H \!+\! \tilde{\qg}_{\ell}^H) \welp^{\MRT} \right\vert^2\right\}
    \!+\! \Sn
    \Big),
    \nonumber\\
    &\!\stackrel{(b)}{=}\!
(\tau_c\!-\!\tau)
     \Big(
     \sum\nolimits_{k\in\KI}  \!
     {\Pik}
     \Ex
     \left\{
     \qg_{\ell}^H \Ex\{\wik^{\PZF}(\wik^{\PZF})^H\} \qg_{\ell}
     \right\}
     \nonumber\\
     &\hspace{2em} 
          \!+\! 
     \sum\nolimits_{t\in\KE\setminus\{\mathcal{S}_\ell\}}\!
    {\Pet} \Ex\left\{  \qg_{\ell}^H \Ex\left\{\wet^{\MRT} (\wet^{\MRT})^H\right\}  \qg_{\ell}\right\} 
    \nonumber\\
    &\hspace{2em}
    + 
   \!\sum\nolimits_{\ell'\in\{\mathcal{S}_\ell\}}\!
     {\Pelp}
     \Big(
     \alpha_{\MRT,\ell'}^2
      \Ex\Big\{ \big\vert \hat{\qg}_{\ell} \hat{\qg}_{\ell'}\big\vert^2\Big\}
      \nonumber\\
     & \hspace{2em}
     \!+\!
     \Ex\left\{ \tilde{\qg}_{\ell}^H \Ex\left\{\welp^{\MRT} (\welp^{\MRT})^H\right\}  \tilde{\qg}_{\ell}
     \right\}
     \Big)
      \!+\! \Sn
    \Big), 
\end{align}
where we have exploited the facts that: in (a) $\qg_{\ell}$ is independent of $\wik^{\PZF}$, $\forall k\in\KI$,  and $\wet^{\MRT}$, for $t\in\KE\setminus\{\mathcal{S}_\ell\}$; in (b) $\tilde{\qg}_{\ell}$ is independent of $\welp^{\MRT}$, $\forall \ell'\in\{\mathcal{S}_\ell\}$.  Lemma~\ref{lemma:ghatk4} provides $\Ex\big\{ \vert \hat{\qg}_{\ell} \hat{\qg}_{\ell'}\vert^2\big\}$. To derive the remaining expectation terms in~\eqref{eq:En:PZF:proof1}, we can apply Lemma~\ref{lemma:expxhAx}. To this end, we first obtain $\qA_1=\Ex\{\wik^{\PZF}(\wik^{\PZF})^H\}$ and $\qA_i=\Ex\left\{\wei^{\MRT} (\wei^{\MRT})^H\right\}$ for $i\in\{t,\ell'\}$. It can be readily shown that $\qA_1=\frac{1}{M}\qI_M$. Noticing that $\wei^{\MRT}\! =\!\alpha_{\MRT,i}\hat{\qg}_i$ with $\hat{\qg}_i \!\sim\!\mathcal{CN}\left(\boldsymbol{\mu}_i, \boldsymbol{\Sigma}_i\right)$, where $\boldsymbol{\mu}_i=\sqrt{ \lambda_{i}\delta }\HTfi$ and $\boldsymbol{\Sigma}_i=\gamgi \qI_M$, we obtain $\qA_i = \alpha_{\MRT,i}^2\big(\boldsymbol{\Sigma}_i + \boldsymbol{\mu}_i\boldsymbol{\mu}_i^H\big)$. By applying Lemma~\ref{lemma:expxhAx} and noticing that $\qg_\ell\! \sim\! \mathcal{CN}\big(\boldsymbol{\mu}_{\ell},
N{\lambda_{\ell}}\qI_M\big)$ and $\trace(\qA_i) \!=\!1$ for $i\!\in\!\{t,\ell'\}$, we get
\begin{align}~\label{eq:ExglHAtgl}
    &\Ex\left\{  \qg_{\ell}^H \Ex\left\{\wet^{\MRT} (\wet^{\MRT})^H\right\}  \qg_{\ell}\right\}
    = \boldsymbol{\mu}_{\ell}^H \qA_t \boldsymbol{\mu}_{\ell}
    +
     N{\lambda_{\ell}}\trace(\qA_t)
    \nonumber\\
    &\hspace{5em}=\alpha_{\MRT,t}^2
    M
    {\lamdal\delta} 
    \Big( \gamgt\flTaN\aNTfl
    \nonumber\\
    &\hspace{6em}
    +
    M
    {\lamdat\delta}
    \vert\flTaN\aNTft\vert^2\Big) + N\lambda_\ell.
\end{align}
Moreover, since $\tilde{\qg}_\ell \sim\mathcal{CN}(\boldsymbol{0}, (N{\lambda_{\ell}}-\gamgk) \qI_M)$, we have 
\begin{align}~\label{eq:ExtilglHAellptilgl}
    \Ex\left\{ \tilde{\qg}_{\ell}^H \Ex\left\{\welp^{\MRT} (\welp^{\MRT})^H\right\}  \tilde{\qg}_{\ell}
     \right\}
    &=
    \big(N{\lambda_{\ell}}-\gamgk\big)\trace(\qA_{\ell'})\nonumber\\
        &=\big(N{\lambda_{\ell}}-\gamgk\big).
\end{align}

To this end, by substituting~\eqref{eq:ExglHAtgl} and~\eqref{eq:ExtilglHAellptilgl} into~\eqref{eq:En:PZF:proof1}, the desired result in~\eqref{eq:En:PZF:final} is obtained.
\section{Proof of Proposition~\ref{Theorem:QlPPZF}}
\label{Theorem:QlPPZF:proof}
By invoking~\eqref{eq:El} and using $\wik^{\PZF}$ and $\wet^{\PMRT}$, we have
\begin{align}
\label{eq:En:PPZF:proof}
    Q_\ell^{\PPZF}
    &=
     (\tau_c\!-\!\tau)
    \Big(  \lamdal
    \Big( N \!\!+ \! \delta \XiTet  \Big)
    \Big(\!\sum\nolimits_{k\in\KI} {\Pik} \Big)
    \nonumber\\
     &\hspace{1em}
     \!\!+
    \sum\nolimits_{t\in\KE\setminus\{\mathcal{S}_\ell\}}\!
    {\Pet}\Ex\left\{  \qg_{\ell}^H \wet^{\PMRT} (\wet^{\PMRT})^H  \qg_{\ell}\right\} 
    \hspace{-0.3em}\nonumber\\
    &\hspace{1em}
    +
     \sum\nolimits_{\ell'\in\mathcal{S}_\ell}
     {\Pelp} \Big(
     \alpha_{\PMRT,\ell'}^2 \Ex\left\{ \left\vert \hat{\qg}_{\ell}^H \qB \hat{\qg}_{\ell'}  \right\vert^2\right\}
     \nonumber\\
     &\hspace{1em}
     +
     \Ex\left\{ \tilde{\qg}_{\ell}^H \Ex\left\{\welp^{\PMRT} (\welp^{\PMRT})^H\right\}  \tilde{\qg}_{\ell}
     \right\}
     \Big)
      + \Sn
    \Big), 
\end{align}
where we have exploited the facts that $\qg_{\ell}$ is independent of $\wik^{\PZF}$, $\forall k\in\KI$; $\tilde{\qg}_{\ell}$ is independent of $\welp^{\PMRT}$, $\forall \ell'\in\mathcal{S}_\ell$. We notice that $\Ex\big\{ \big\vert \hat{\qg}_{\ell}^H \qB \hat{\qg}_{\ell'}  \big\vert^2\big\}$ is provided in Lemma~\ref{lemma:ghatk4}. To derive $\Ex\left\{  \qg_{\ell}^H \Ex\left\{\wet^{\PMRT} (\wet^{\PMRT})^H\right\}  \qg_{\ell}\right\}$, we first define $\qV^{\PMRT} = \wet^{\PMRT} (\wet^{\PMRT})^H$. By using the trace Lemma (see Lemma~\ref{lemma:tracelemma}),  for sufficiently large values of $M$, we have 
\begin{align}    
      \frac{1}{M}\qg_{\ell}^H   \qV^{\PMRT} \qg_{\ell} 
     -
       \frac{1}{M}(N\lamdal \!+\! \boldsymbol{\mu}_{\ell}^H \boldsymbol{\mu}_{\ell}) \trace(\qV^{\PMRT}) &\stackrel{M\to \infty}{\longrightarrow} \!0.
 \end{align}
Therefore,
\vspace{0.4em}
\begin{align*}
  \Ex\left\{\qg_{\ell}^H \wet^{\PMRT} (\wet^{\PMRT})^H  \qg_{\ell}\right\} \approx    (N\lamdal + \boldsymbol{\mu}_{\ell}^H \boldsymbol{\mu}_{\ell}) \Ex\{\trace(\qV^{\PMRT})\}, 
\end{align*}
 in which
\begin{align}~\label{eq:apxEtrG}
       \Ex\{\trace(\qV^{\PMRT})\}
        &=  \alpha_{\MRT,t}^2\Ex\{\trace(\qB \Ex\left\{\hat{\qg}_{t} \hat{\qg}_{t}^H\right\}\qB^H)\}\nonumber\\
     &\hspace{-0em}=
     (M-\tau_{\KI})
      \Big(\gamgt + \lamdat \delta\XitTet\Big),
    \end{align}
and we used the fact that $\hat{\qg}_{t}\!\sim\!\mathcal{CN}\big(\sqrt{\lamdat \delta}\HTft,\gamgt\qI_M\big)$ and then applied Lemma~\ref{Lemma:B}. Now, we get
\begin{align}~\label{eq:apxtrGfinal}
    \Ex\left\{\qg_{\ell}^H \wet^{\PMRT} (\wet^{\PMRT})^H  \qg_{\ell}\right\}
    &\!=\!
     \alpha_{\PMRT,t}^2(M-\tau_{\KI})\lamdal\delta
    \nonumber\\
    &\hspace{-10em}
    \times
    \Big(\gamgt\XiTet \!+\! M\lamdat \delta\XitTet\XiTet \Big) \!+\! N\lamdal.
\end{align}
\vspace{0.2em}
Moreover, since $\tilde{\qg}_\ell \sim\mathcal{CN}(\boldsymbol{0}, (N{\lambda_{\ell}}-\gamgk) \qI_M)$, by using Lemma~\ref{lemma:expxhAx}, and after some manipulations, we get
\vspace{0.3em}
\begin{align}~\label{eq:tgwPMRTtg}
    \Ex\left\{ \tilde{\qg}_{\ell}^H \Ex\left\{\welp^{\PMRT} (\welp^{\PMRT})^H\right\}  \tilde{\qg}_{\ell}
     \right\}
        &=\big(N{\lambda_{\ell}}-\gamgk\big).
\end{align}

To this end, by substituting~\eqref{eq:ExglhBgl},~\eqref{eq:apxtrGfinal}, and~\eqref{eq:tgwPMRTtg} into~\eqref{eq:En:PPZF:proof}, the desired result in~\eqref{eq:En:PPZF} is obtained.

\balance
\bibliographystyle{IEEEtran}
\bibliography{IEEEabrv,references}

\begin{thebibliography}{10}
\providecommand{\url}[1]{#1}
\csname url@samestyle\endcsname
\providecommand{\newblock}{\relax}
\providecommand{\bibinfo}[2]{#2}
\providecommand{\BIBentrySTDinterwordspacing}{\spaceskip=0pt\relax}
\providecommand{\BIBentryALTinterwordstretchfactor}{4}
\providecommand{\BIBentryALTinterwordspacing}{\spaceskip=\fontdimen2\font plus
\BIBentryALTinterwordstretchfactor\fontdimen3\font minus
  \fontdimen4\font\relax}
\providecommand{\BIBforeignlanguage}[2]{{%
\expandafter\ifx\csname l@#1\endcsname\relax
\typeout{** WARNING: IEEEtran.bst: No hyphenation pattern has been}%
\typeout{** loaded for the language `#1'. Using the pattern for}%
\typeout{** the default language instead.}%
\else
\language=\csname l@#1\endcsname
\fi
#2}}
\providecommand{\BIBdecl}{\relax}
\BIBdecl
\renewcommand{\BIBentryALTinterwordstretchfactor}{4}

\bibitem{Mohammadi:ICC:2023}
M.~Mohammadi, Z.~Mobini, H.~Q. Ngo, and M.~Matthaiou, ``Integration of massive
  {MIMO} and {RIS} to serve energy and information users,'' in \emph{Proc. IEEE
  ICC}, Jun. 2023, pp. 1--5.

\bibitem{Clerckx:JSAC:2019}
B.~Clerckx, R.~Zhang, R.~Schober, D.~W.~K. Ng, D.~I. Kim, and H.~V. Poor,
  ``Fundamentals of wireless information and power transfer: From {RF} energy
  harvester models to signal and system designs,'' \emph{{IEEE} J. Sel. Areas
  Commun.}, vol.~37, no.~1, pp. 4--33, Jan. 2019.

\bibitem{Wu:JSAC:2020}
Q.~Wu and R.~Zhang, ``Joint active and passive beamforming optimization for
  intelligent reflecting surface assisted {SWIPT} under {QoS} constraints,''
  \emph{{IEEE} J. Sel. Areas Commun.}, vol.~38, no.~8, pp. 1735--1748, Aug.
  2020.

\bibitem{Pan:JSAC:2020}
C.~Pan \emph{et~al.}, ``Intelligent reflecting surface aided {MIMO}
  broadcasting for simultaneous wireless information and power transfer,''
  \emph{{IEEE} J. Sel. Areas Commun.}, vol.~38, no.~8, pp. 1719--1734, Aug.
  2020.

\bibitem{Zhang:JSAC:2020}
J.~Zhang, E.~Björnson, M.~Matthaiou, D.~W.~K. Ng, H.~Yang, and D.~J. Love,
  ``Prospective multiple antenna technologies for beyond {5G},'' \emph{{IEEE}
  J. Sel. Areas Commun.}, vol.~38, no.~8, pp. 1637--1660, Aug. 2020.

\bibitem{Yang:JSAC:2015}
G.~Yang, C.~K. Ho, R.~Zhang, and Y.~L. Guan, ``Throughput optimization for
  massive {MIMO} systems powered by wireless energy transfer,'' \emph{{IEEE} J.
  Sel. Areas Commun.}, vol.~33, no.~8, pp. 1640--1650, Aug. 2015.

\bibitem{Zhi:JSAC:2022}
K.~Zhi, C.~Pan, G.~Zhou, H.~Ren, M.~Elkashlan, and R.~Schober, ``Is {RIS}-aided
  massive {MIMO} promising with {ZF} detectors and imperfect {CSI?}''
  \emph{{IEEE} J. Sel. Areas Commun.}, vol.~40, no.~10, pp. 3010--3026, Oct.
  2022.

\bibitem{Zhi:TCOM:2022}
K.~Zhi, C.~Pan, H.~Ren, and K.~Wang, ``Power scaling law analysis and phase
  shift optimization of {RIS}-aided massive {MIMO} systems with statistical
  {CSI},'' \emph{{IEEE} Trans. Commun.}, vol.~70, no.~5, pp. 3558--3574, May
  2022.

\bibitem{Matthaiou:COMMag:2021}
M.~Matthaiou, O.~Yurduseven, H.~Q. Ngo, D.~Morales-Jimenez, S.~L. Cotton, and
  V.~F. Fusco, ``The road to {6G: Ten} physical layer challenges for
  communications engineers,'' \emph{IEEE Commun. Mag.}, vol.~59, no.~1, pp.
  64--69, Jan. 2021.

\bibitem{Wu:Proc:2022}
Q.~Wu, X.~Guan, and R.~Zhang, ``Intelligent reflecting surface-aided wireless
  energy and information transmission: {An} overview,'' \emph{Proc. IEEE}, vol.
  110, no.~1, pp. 150--170, Jan. 2022.

\bibitem{Mohamed:WCL:2022}
A.~Mohamed, A.~Zappone, and M.~Di~Renzo, ``Bi-objective optimization of
  information rate and harvested power in {RIS}-aided {SWIPT} systems,''
  \emph{{IEEE} Wireless Commun. Lett.}, vol.~11, no.~10, pp. 2195--2199, Oct.
  2022.

\bibitem{Li:TWC:2022}
Z.~Li, W.~Chen, Q.~Wu, K.~Wang, and J.~Li, ``Joint beamforming design and power
  splitting optimization in {IRS}-assisted {SWIPT NOMA} networks,''
  \emph{{IEEE} Trans. Wireless Commun.}, vol.~21, no.~3, pp. 2019--2033, Mar.
  2022.

\bibitem{Zhang:JSAC:2023}
R.~Zhang, K.~Xiong, Y.~Lu, P.~Fan, D.~W.~K. Ng, and K.~B. Letaief, ``Energy
  efficiency maximization in {RIS}-assisted {SWIPT} networks with {RSMA: A
  PPO}-based approach,'' \emph{{IEEE} J. Sel. Areas Commun.}, vol.~41, no.~5,
  pp. 1413--1430, May 2023.

\bibitem{Xu:TCOM:2022}
D.~Xu, V.~Jamali, X.~Yu, D.~W.~K. Ng, and R.~Schober, ``Optimal resource
  allocation design for large {IRS}-assisted {SWIPT} systems: {A} scalable
  optimization framework,'' \emph{{IEEE} Trans. Commun.}, vol.~70, no.~2, pp.
  1423--1441, Feb. 2022.

\bibitem{Chen:WCL:2022}
Z.~Chen, J.~Tang, N.~Zhao, M.~Liu, and D.~K.~C. So, ``Hybrid beamforming with
  discrete phase shifts for {RIS}-assisted multiuser {SWIPT} system,''
  \emph{{IEEE} Wireless Commun. Lett.}, vol.~12, no.~1, pp. 104--108, Jan.
  2023.

\bibitem{Lyu:TVT:2023}
W.~Lyu, Y.~Xiu, J.~Zhao, and Z.~Zhang, ``Optimizing the age of information in
  {RIS}-aided {SWIPT} networks,'' \emph{{IEEE} Trans. Veh. Technol.}, vol.~72,
  no.~2, pp. 2615--2619, Feb. 2023.

\bibitem{Zhao:TWC:2022}
M.-M. Zhao, K.~Xu, Y.~Cai, Y.~Niu, and L.~Hanzo, ``Secrecy rate maximization of
  {RIS}-assisted {SWIPT} systems: {A} two-timescale beamforming design
  approach,'' \emph{{IEEE} Trans. Wireless Commun.}, vol.~22, no.~7, pp.
  4489--4504, Jul. 2023.

\bibitem{Han:TVT:2019}
Y.~Han, W.~Tang, S.~Jin, C.-K. Wen, and X.~Ma, ``Large intelligent
  surface-assisted wireless communication exploiting statistical {CSI},''
  \emph{{IEEE} Trans. Veh. Technol.}, vol.~68, no.~8, pp. 8238--8242, Aug.
  2019.

\bibitem{Zhi:TIT:2022}
K.~Zhi \emph{et~al.}, ``Two-timescale design for reconfigurable intelligent
  surface-aided massive {MIMO} systems with imperfect {CSI},'' \emph{{IEEE}
  Trans. Inf. Theory}, vol.~69, no.~5, pp. 3001--3033, May 2023.

\bibitem{Wu:WCL:2020}
Q.~Wu and R.~Zhang, ``Weighted sum power maximization for intelligent
  reflecting surface aided {SWIPT},'' \emph{{IEEE} Wireless Commun. Lett.},
  vol.~9, no.~5, pp. 586--590, May 2020.

\bibitem{Niu:TWC:2022}
H.~Niu, Z.~Chu, F.~Zhou, Z.~Zhu, L.~Zhen, and K.-K. Wong, ``Robust design for
  intelligent reflecting surface-assisted secrecy {SWIPT} network,''
  \emph{{IEEE} Trans. Wireless Commun.}, vol.~21, no.~6, pp. 4133--4149, June
  2022.

\bibitem{Sun:TWC:2022}
Z.~Sun and Y.~Jing, ``On the performance of multi-antenna {IRS}-assisted {NOMA}
  networks with continuous and discrete {IRS} phase shifting,'' \emph{{IEEE}
  Trans. Wireless Commun.}, vol.~21, no.~5, pp. 3012--3023, May 2022.

\bibitem{Wu:TWC:2019}
Q.~Wu and R.~Zhang, ``Intelligent reflecting surface enhanced wireless network
  via joint active and passive beamforming,'' \emph{{IEEE} Trans. Wireless
  Commun.}, vol.~18, no.~11, pp. 5394--5409, Nov. 2019.

\bibitem{Zhao:TWC:2021}
M.-M. Zhao, Q.~Wu, M.-J. Zhao, and R.~Zhang, ``Intelligent reflecting surface
  enhanced wireless networks: {Two}-timescale beamforming optimization,''
  \emph{{IEEE} Trans. Wireless Commun.}, vol.~20, no.~1, pp. 2--17, Jan. 2021.

\bibitem{Yin:JSAC:2013}
H.~Yin, D.~Gesbert, M.~Filippou, and Y.~Liu, ``A coordinated approach to
  channel estimation in large-scale multiple-antenna systems,'' \emph{{IEEE} J.
  Sel. Areas Commun.}, vol.~31, no.~2, pp. 264--273, Feb. 2013.

\bibitem{Kay}
S.~M. Kay, \emph{Fundamentals of Statistical Signal Processing: Estimation
  Theory}.\hskip 1em plus 0.5em minus 0.4em\relax Upper Saddle River, NJ, USA:
  Prentice-Hall, Inc., 1993.

\bibitem{Marzetta:TWC:2010}
T.~L. Marzetta, ``Noncooperative cellular wireless with unlimited numbers of
  base station antennas,'' \emph{{IEEE} Trans. Wireless Commun.}, vol.~9,
  no.~11, pp. 3590--3600, Nov. 2010.

\bibitem{Hien:cellfree}
H.~Q. Ngo, A.~Ashikhmin, H.~Yang, E.~G. Larsson, and T.~L. Marzetta,
  ``Cell-free massive {MIMO} versus small cells,'' \emph{{IEEE} Trans. Wireless
  Commun.}, vol.~16, no.~3, pp. 1834--1850, Mar. 2017.

\bibitem{almradi2016performance}
A.~Almradi and K.~A. Hamdi, ``The performance of wireless powered {MIMO}
  relaying with energy beamforming,'' \emph{{IEEE} Trans. Commun.}, vol.~64,
  no.~11, pp. 4550--4562, Nov. 2016.

\bibitem{marzetta2016fundamentals}
T.~L. Marzetta and H.~Yang, \emph{Fundamentals of Massive MIMO}.\hskip 1em plus
  0.5em minus 0.4em\relax Cambridge, U.K.: Cambridge Univ. Press, 2016.

\bibitem{Boshkovska:CLET:2015}
E.~Boshkovska, D.~W.~K. Ng, N.~Zlatanov, and R.~Schober, ``Practical non-linear
  energy harvesting model and resource allocation for {SWIPT} systems,''
  \emph{{IEEE} Commun. Lett.}, vol.~19, no.~12, pp. 2082--2085, Dec. 2015.

\bibitem{tulino04}
A.~M. Tulino and S.~Verd\'{u}, ``Random matrix theory and wireless
  communications,'' \emph{Found. Trends Commun. Inf. Theory}, vol.~1, no.~1,
  pp. 1--182, 2004.

\bibitem{Gao:TWC:2023}
Y.~Gao, Q.~Wu, G.~Zhang, W.~Chen, D.~W.~K. Ng, and M.~D. Renzo, ``Beamforming
  optimization for active intelligent reflecting surface-aided {SWIPT},''
  \emph{{IEEE} Trans. Wireless Commun.}, vol.~22, no.~1, pp. 362--378, Jan.
  2023.

\bibitem{cvx}
M.~Grant and S.~Boyd, ``{CVX: Matlab} software for disciplined convex
  programming, version 2.1, [online]. available:http: //cvxr.com/cvx, 2014.''
  2014.

\bibitem{Shen:TSP:2018}
K.~Shen and W.~Yu, ``Fractional programming for communication systems—{Part
  I: Power} control and beamforming,'' \emph{{IEEE} Trans. Signal Process.},
  vol.~66, no.~10, pp. 2616--2630, May 2018.

\bibitem{Huang:TWC:2019}
C.~Huang, A.~Zappone, G.~C. Alexandropoulos, M.~Debbah, and C.~Yuen,
  ``Reconfigurable intelligent surfaces for energy efficiency in wireless
  communication,'' \emph{{IEEE} Trans. Wireless Commun.}, vol.~18, no.~8, pp.
  4157--4170, Aug. 2019.

\bibitem{Huang:TSP:2010}
Y.~Huang and D.~P. Palomar, ``Rank-constrained separable semidefinite
  programming with applications to optimal beamforming,'' \emph{{IEEE} Trans.
  Signal Process.}, vol.~58, no.~2, pp. 664--678, Feb. 2010.

\bibitem{Wagner:IT:2012}
S.~Wagner, R.~Couillet, M.~Debbah, and D.~T.~M. Slock, ``Large system analysis
  of linear precoding in correlated {MISO} broadcast channels under limited
  feedback,'' \emph{{IEEE} Trans. Inf. Theory}, vol.~58, no.~7, pp. 4509--4537,
  Jul. 2012.

\end{thebibliography}

\begin{IEEEbiography}[{\includegraphics[width=1in,height=1.25in,clip,keepaspectratio]
{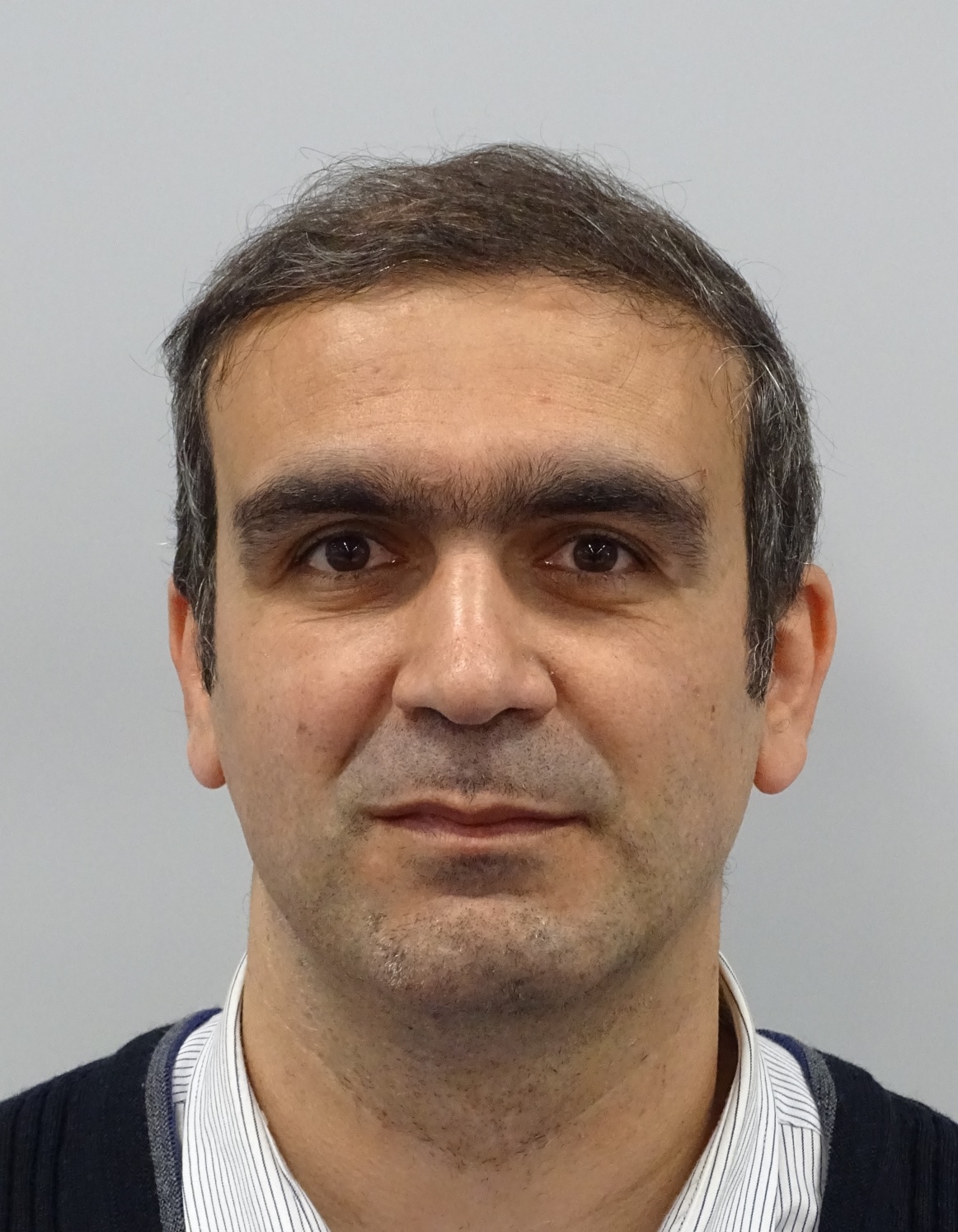}}]
{Mohammadali Mohammadi}  (S'09, M'15)  is currently a Lecturer at the Centre for Wireless Innovation (CWI), Queen’s University Belfast, U.K. He previously held the position of Research Fellow at CWI from 2021 to 2024. His research interests include signal processing for wireless communications, cell-free massive MIMO, wireless power transfer, OTFS modulation, reconfigurable intelligent surfaces, and full-duplex communication. He has published more than 70 research papers in accredited international peer reviewed journals and conferences in the area of wireless communication. He has co-authored two book chapters, ``Full-Duplex Non-orthogonal Multiple Access Systems," invited chapter in Full-Duplex Communication for Future Networks (Springer-Verlag, 2020) and ``Full-Duplex wireless-powered communications", invited chapter in Wireless Information and Power Transfer: A New Green Communications Paradigm (Springer-Verlag, 2017). He was a recipient of the Exemplary Reviewer Award for IEEE Transactions on Communications in 2020 and 2022, and IEEE Communications Letters in 2023. He has been a member of Technical Program Committees for many IEEE conferences, such as ICC, GLOBECOM, and VTC.
\end{IEEEbiography}

\begin{IEEEbiography}[{\includegraphics[width=1in,height=1.25in,clip,keepaspectratio]{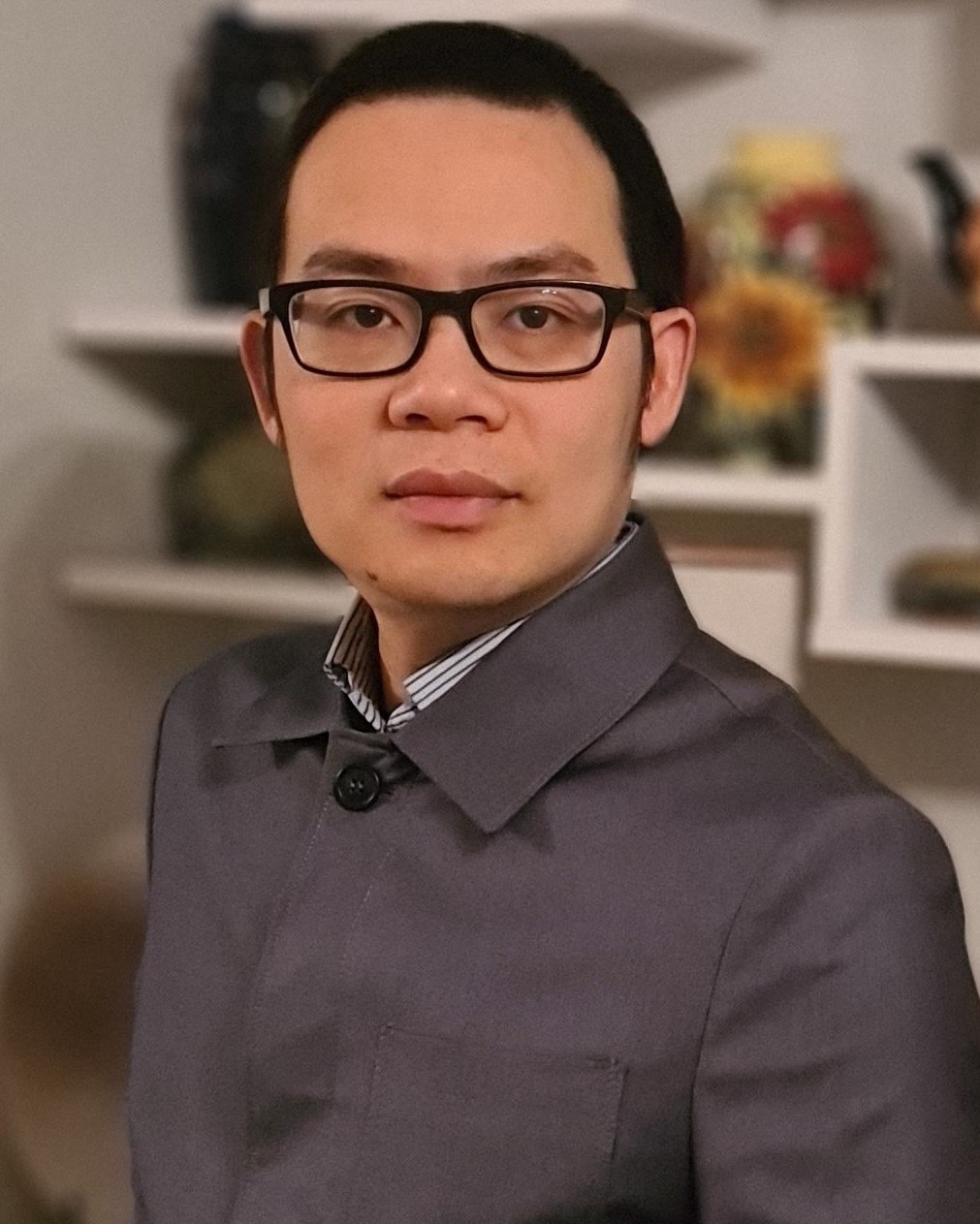}}]
{Hien Quoc Ngo} is currently a Reader with Queen's University Belfast, U.K. His main research interests include massive MIMO systems, cell-free massive MIMO, reconfigurable intelligent surfaces, physical layer security, and cooperative communications. He has co-authored many research papers in wireless communications and co-authored the Cambridge University Press textbook \emph{Fundamentals of Massive MIMO} (2016).

He received the IEEE ComSoc Stephen O. Rice Prize in 2015, the IEEE ComSoc Leonard G. Abraham Prize in 2017, the Best Ph.D. Award from EURASIP in 2018, and the IEEE CTTC Early Achievement Award in 2023. He also received the IEEE Sweden VT-COM-IT Joint Chapter Best Student Journal Paper Award in 2015. He was awarded the UKRI Future Leaders Fellowship in 2019. He serves as the Editor for the IEEE Transactions on Wireless Communications, IEEE Transactions on Communications, the Digital Signal Processing, and the Physical Communication (Elsevier). He was an editor of the IEEE Wireless Communications Letters, a Guest Editor of IET Communications, and a Guest Editor of IEEE ACCESS in 2017.
\end{IEEEbiography}

\begin{IEEEbiography}[{\includegraphics[width=1in,height=1.25in,clip,keepaspectratio]{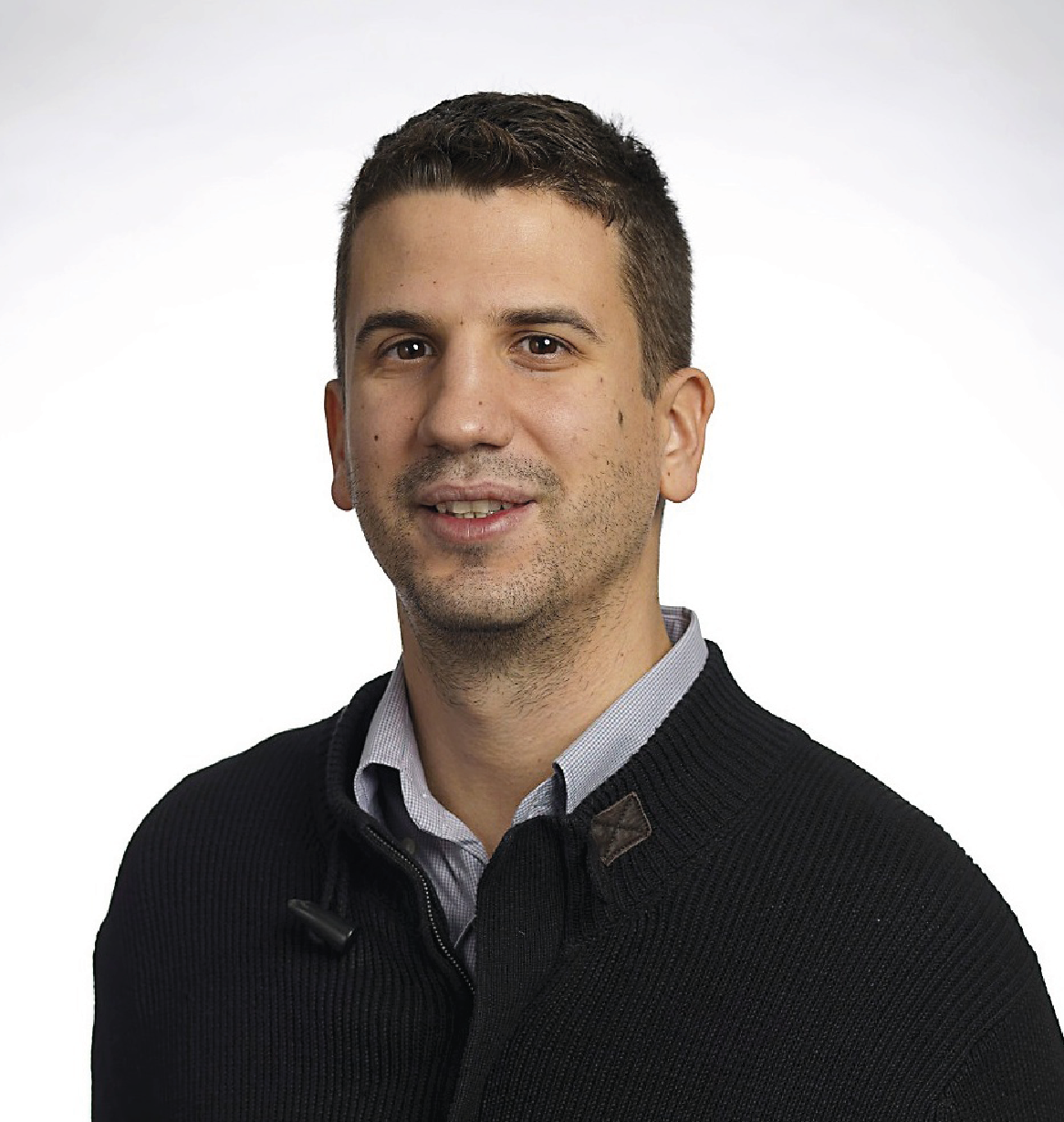}}]
{Michail Matthaiou}(Fellow, IEEE) was born in Thessaloniki, Greece in 1981. He obtained the Diploma degree (5 years) in Electrical and Computer Engineering from the Aristotle University of Thessaloniki, Greece in 2004. He then received the M.Sc. (with distinction) in Communication Systems and Signal Processing from the University of Bristol, U.K. and Ph.D. degrees from the University of Edinburgh, U.K. in 2005 and 2008, respectively. From September 2008 through May 2010, he was with the Institute for Circuit Theory and Signal Processing, Munich University of Technology (TUM), Germany working as a Postdoctoral Research Associate. He is currently a Professor of Communications Engineering and Signal Processing and Deputy Director of the Centre for Wireless Innovation (CWI) at Queen’s University Belfast, U.K. after holding an Assistant Professor position at Chalmers University of Technology, Sweden. His research interests span signal processing for wireless communications, beyond massive MIMO, intelligent reflecting surfaces, mm-wave/THz systems and deep learning for communications.

Dr. Matthaiou and his coauthors received the IEEE Communications Society (ComSoc) Leonard G. Abraham Prize in 2017. He currently holds the ERC Consolidator Grant BEATRICE (2021-2026) focused on the interface between information and electromagnetic theories. To date, he has received the prestigious 2023 Argo Network Innovation Award, the 2019 EURASIP Early Career Award and the 2018/2019 Royal Academy of Engineering/The Leverhulme Trust Senior Research Fellowship. His team was also the Grand Winner of the 2019 Mobile World Congress Challenge. He was the recipient of the 2011 IEEE ComSoc Best Young Researcher Award for the Europe, Middle East and Africa Region and a co-recipient of the 2006 IEEE Communications Chapter Project Prize for the best M.Sc. dissertation in the area of communications. He has co-authored papers that received best paper awards at the 2018 IEEE WCSP and 2014 IEEE ICC. In 2014, he received the Research Fund for International Young Scientists from the National Natural Science Foundation of China. He is currently the Editor-in-Chief of Elsevier Physical Communication, a Senior Editor for \textsc{IEEE Wireless Communications Letters} and \textsc{IEEE Signal Processing Magazine}, and an Area Editor for \textsc{IEEE Transactions on Communications}. He is an IEEE and AAIA Fellow.
\end{IEEEbiography}

\end{document}